\declaretheorem{theorem}
\declaretheorem{lemma}
\declaretheorem{proposition}
\declaretheorem{corollary}
\theoremstyle{definition}
\newtheorem{definition}{Definition}
\newtheorem*{definition*}{Definition}
\DeclareMathOperator{\PH}{\mathsf{PH}}
\DeclareMathOperator{\BQP}{\mathsf{BQP}}
\DeclareMathOperator{\NP}{\mathsf{NP}}
\DeclareMathOperator{\BPP}{\mathsf{BPP}}
\providecommand{\calA}{\ensuremath{\mathcal{A}}}
\providecommand{\calD}{\ensuremath{\mathcal{D}}}
\providecommand{\calF}{\ensuremath{\mathcal{F}}}
\providecommand{\calL}{\ensuremath{\mathcal{L}}}
\providecommand{\calO}{\ensuremath{\mathcal{O}}}
\title{Quantum machine learning advantages beyond hardness of evaluation}
\author{
Riccardo Molteni$^{\ast,1,2}$ \quad Simon C. Marshall$^{1,2}$ \quad Vedran Dunjko$^{1,2}$ \\
$^1$ $\langle aQa^L\rangle$ Applied Quantum Algorithms, Universiteit Leiden \\
$^2$ LIACS, Universiteit Leiden, Niels Bohrweg 1, 2333 CA Leiden, Netherlands \\
$^\ast$ \texttt{r.molteni@liacs.leidenuniv.nl}
}
\begin{document}

\maketitle

\begin{abstract}

Recent years have seen rigorous proofs of quantum advantages in machine learning, particularly when data is labeled by cryptographic or inherently quantum functions. These results typically rely on the infeasibility of classical polynomial-sized circuits to evaluate the true labeling function.  While broad in scope, these results however reveal little about advantages stemming from the actual learning process itself.
This motivates the study of the so-called identification task, where the goal is to ``just'' identify the labeling function behind a dataset, making the learning step the only possible source of advantage. The identification task also has natural applications, which we discuss. Yet, such identification advantages remain poorly understood. So far they have only been proven in cryptographic settings by leveraging random-generatability, the ability to efficiently generate labeled data. However, for quantum functions this property is conjectured not to hold, leaving identification advantages unexplored. 
In this work, we provide the first proofs of identification learning advantages for quantum functions under complexity-theoretic assumptions. Our main result relies on a new proof strategy, allowing us to show that for a broad class of quantum identification tasks there exists an exponential quantum advantage unless $\BQP$ is in a low level of the polynomial hierarchy.
Along the way we prove a number of more technical results including the aforementioned conjecture that quantum functions are not random generatable (subject to plausible complexity-theoretic assumptions), which shows a new proof strategy was necessary. These findings suggest that for many quantum-related learning tasks, the entire learning process—not just final evaluation—gains significant advantages from quantum computation.

\end{abstract}

\section{Introduction}
A central question in quantum machine learning (QML) is understanding which learning problems inherently require a quantum computer for efficient solutions. As shown in~\citep{huang2021power}, access to data enables classical learners to efficiently evaluate some hard-to-compute functions that are otherwise intractable for polynomial time classical randomized algorithms. Most of the known examples of learning problems where data does not aid in computing the target function involve cryptographic functions, for which random labeled samples can be generated efficiently using classical algorithms. While the proofs may be more involved, the intuition remains simple: if a classical machine could generate the data, then the data itself cannot be what makes a hard problem easy. In the context of QML, where the greatest advantages are intuitively expected for ``fully quantum'' functions (i.e., functions that are $\BQP$-complete\footnote{More precisely, the class $\BQP$ does not contain complete problems. Throughout this paper, whenever we refer to complete problems, we will instead consider the class $\mathsf{PromiseBQP}$.
}), the focus on cryptographic tasks was somewhat unsatisfactory. This limitation was addressed in~\citep{gyurik2023exponential} by employing stronger computational complexity assumptions and, crucially, by leveraging the fact that standard learning definitions require the learned model to \textit{evaluate} new data points.  

However, a different learning condition could be considered where the learning algorithm is solely required to \textit{identify} the target labeling function from a set of labeled data. In other words, we consider supervised learning problems where the learner receives labeled examples $(x,f(x))$ and it is required to output a description of a consistent $f$, up to PAC error. In this scenario, the classical learner would only need to recover the description of the unknown target function, without needing to also evaluate it on new input points. 
From a fundamental standpoint, analyzing the hardness of the identification task in learning problems involving quantum functions helps clarify the source of learning separations. From a practical viewpoint, particularly in expected applications of quantum machine learning, identifying the data-generating function can in fact be the primary goal, such as in Hamiltonian learning or in tasks related to finding order parameters~\citep{gyurik2023exponential}, as we will comment in Appendix~\ref{app: implication of our results}. Crucially, as was proved in~\citep{gyurik2023exponential}, separations for the identification problem cannot exist without assumptions on the hypothesis class (intuitively, the classical learner can always outputs the circuit for the whole quantum learner with hardwired data as the output).  The main goal of this paper is then to determine the conditions under which a learning separation for fully quantum functions can already emerge from the identification step. 

Classical hardness of the identification problem has already been established for certain learning tasks, such as the learnability of DNF formulas~\citep{kearns1994introduction} and cryptographic functions~\citep{gyurik2023exponential,jerbi2024shadows}. The key components in these hardness proofs are: $(1)$ the existence of a succinct representation of the target function that enables efficient extraction of the relevant property, and $(2)$ the property of ``random generatability''. In this paper, we use the term ``random generatability'' to refer to the ability to efficiently generate labeled samples $(x, f(x))$ for random inputs $x$. We also refer to the characteristic functions of $\BQP$ languages as ``quantum functions'' (informally, these are functions whose values can be estimated efficiently by a quantum circuit, but not by any classical algorithm, see Def.~\ref{def: bqp function}). In Appendix~\ref{app: implication of our results} we list a series of practical scenarios where quantum functions naturally arise. Crucially, for learning tasks involving quantum functions, we cannot rely on a simple representation of the target function, and we also show that these functions are not randomly generatable. Consequently, our hardness result for the identification task requires a fundamentally different proof strategy from those used in prior works. We discuss this in more detail in Appendix~\ref{app: related works}.
The  main contributions of the paper are the following.
\begin{enumerate}
    \item We show that quantum functions are not random generatable unless $\BQP$ is in the second level of the polynomial hierarchy. This result has consequences on quantum generative modelling as well, see Corollary \ref{cor:EVS}.
    \item We introduce the task of verifiable-identification, where the algorithm must reject invalid datasets, and solve identification correctly otherwise, and prove classical verifiable-identification for quantum target function is impossible, unless $\BQP$ is in $\mathsf{BPP}^{\NP}$.
    \item We identify sufficient conditions on learning algorithm and concept class such that ``mere'' classical identification implies approximate-verifiable identification (the algorithm must reject datasets unless most of the samples are labeled according to the same concept) within the polynomial hierarchy. This implies identification is classically intractable unless $\BQP$ is in $\mathsf{BPP}^{\NP^{\NP^{\NP}}}$ (two levels ``up'' in the hierarchy compared to the verifiable case in the point above).
    \item We provide examples of physically motivated learning tasks that satisfy the above condition and for which there exists a quantum learning algorithm that solves the identification task, thereby yielding a formal quantum–classical separation (see Corollary~\ref{cor:learn obs}).
\end{enumerate}
The relationship between our proofs is illustrated in Figure~\ref{fig:proof_strategy} in Section~\ref{sec: identification hardness properPAC}. We note that the results in the main text are presented under the heuristic form of the complexity-theoretic assumptions, since our analysis follows a distribution-specific scenario. In Appendix~\ref{app: evidence bqp no ph} we describe how classical hardness can be obtained under the exact assumption $\BQP\not\subseteq \mathsf{BPP}^{\NP^{\NP^{\NP}}}$ in the case learnability is required under every distribution.

\section{Preliminaries}\label{sec:preliminary}
We now provide here the precise definitions of the two tasks we are addressing in this work. Appendix~\ref{app: definitions} provides the definitions of quantum functions and the complexity-theoretic background relevant to our hardness results, along with a glossary of the most frequently used symbols in Table~\ref{tab:glossary}.

\subsection{Definition of Random generatability}\label{sec: preliminary quantum rv}
Given a uniform family of functions $f=\{f_n\}_{n\in\mathbb{N}}$, with $f_n:\{0,1\}^n\rightarrow \{0,1\}$, we say that $f$ is ``random-generable'' under a distribution $\calD$ if there exists a uniform (randomized) algorithm capable of producing samples $(x,f_n(x))$ with $x$ sampled from $\calD$ efficiently for any $n$. The concept of functions that permit an efficient procedure to generate random samples $(x,f(x))$ was introduced in~\citep{arrighi2006blind} under the term ``random verifiability''. In this paper, we refer to the same property as ``random generatability'' to avoid potential confusion with the ``verifiable case'' of the identification task described in Def.~\ref{def: identification verifiable}. Specifically, we consider two cases: exact random generatability in Def.~\ref{def:exact rv}, where the algorithm is not allowed to make any errors, and approximate random generatability in Def.~\ref{def:imperfect rv}, where the algorithm outputs samples from a distribution only close to the true target distribution.

\begin{definition}[Exact random generatability]\label{def:exact rv} Let $f=\{f_n\}_{n\in\mathbb{N}}$ be a uniform family of functions\footnote{Committing slight abuse of notation, we will use $f_n$ to denote both the function itself and its description.}, with $f_n:\{0,1\}^n\rightarrow \{0,1\}$. $f$ is exact random generatable under the input distribution $\calD$ if there exists an efficient uniform (randomized) algorithm $\mathcal{A}_{\calD}$ such that given as input a description of $f_n$ for any $n$ and a random string $r$ is such that with probability 1:
\begin{equation}
    \mathcal{A}_{\calD}(f_n,r)=(x_r,f_n(x_r)).
\end{equation}
When $r$ is chosen uniformly at random from the set of all random strings with length polynomial in $n$, then $(x_r,f_n(x_r))\sim \pi^f(\calD)$, where $\pi^f(\calD)$ samples $x$ from the target distribution $\calD$ over $x\in\{0,1\}^n$ and assigns the corresponding label $f_n(x_r)\in\{0,1\}$. 
\end{definition}


\begin{definition}[Approximate random generatability]\label{def:imperfect rv} Let $f=\{f_n\}_{n\in\mathbb{N}}$ be a uniform family of functions, with $f_n:\{0,1\}^n\rightarrow \{0,1\}$. $f$ is approximately random generatable under the distribution $\calD$ if there exists an efficient uniform (randomized) algorithm $\mathcal{A}_{\calD}$ such that given as input a description of $f_n$ for any $n$, a random string $r$ and an error value $\epsilon$ outputs:
\begin{equation}
    \mathcal{A}_{\calD} (f_n,0^{1/\epsilon}, r)=(x_r,f_n(x_r))
\end{equation}
with $(x_r,f_n(x_r))\sim \pi^f_\epsilon(\calD)$ when $r$ is sampled uniformly at random from the distribution of all the random strings (with polynomial size). In particular, $\pi^f_\epsilon(\calD)$ is a probability distribution over $\{0,1\}^n\times\{0,1\}$ which satisfies: $\|\pi^f_\epsilon(\calD) - \pi^f(\calD)\|_{TV}\leq \epsilon$, where $\pi^f(\calD)$ is the same distribution defined in Def.~\ref{def:exact rv}.  
\end{definition}

As the total variation distance between two distributions $p$ and $q$ over $x\in\{0,1\}^n$ is defined as $\|p - q\|_{TV}=\frac{1}{2}\sum_x |p(x)-q(x)|$, the algorithm $\mathcal{A} $ in Def.~\ref{def:imperfect rv} is allowed to make mistakes both with respect to the probability distribution of the outputted $x$s (e.g. they may not be generated from the uniform distribution) and with respect to the assigned labels.  

\subsection{PAC framework}
The results in this paper are expressed in the standard terms of the efficient \textit{probably approximately correct} (PAC) learning framework~\citep{kearns1994introduction,mohri2018foundations}. In the case of supervised learning, a learning problem in the PAC framework is defined by a \textit{concept class} $\mathcal{F} = \{\mathcal{F}_n\}_{n \in \mathbb{N}}$, where each $\mathcal{F}_n$ is a set of \textit{concepts}, which are functions from some \textit{input space} $\mathcal{X}_n$ (in this paper we assume $\mathcal{X}_n$ is a subset of $\{0,1\}^n$ ) to some \textit{label set} $\mathcal{Y}_n$ (in this paper we assume $\mathcal{Y}_n$ is $\{0,1\}$, unless stated otherwise). The learning algorithm receives on input samples of the target concepts $T=\{(x_\ell, f(x_\ell))\}_\ell$, where $x_\ell \in \mathcal{X}_n$ are drawn according to \textit{target distributions} $\mathcal{D} = \{\mathcal{D}_n\}_{n \in \mathbb{N}}$.
Finally, the learning algorithm has a hypothesis class $\mathcal{H} = \{\mathcal{H}_n\}_{n \in \mathbb{N}}$ associated to it, and the learning algorithm should output a \textit{hypothesis} $h \in \mathcal{H}_n$ -- which is another function from $\mathcal{X}_n$ to $\mathcal{Y}_n$--  that is in some sense ``close'' (see Eq.~(\ref{eq:PAC}) below) to the concept $f \in \mathcal{F}_n$ generating the samples in $T$.

\begin{definition}[Efficient probably approximately correct learnability]
\label{def:pac}
A concept class $\mathcal{F} = \{\mathcal{F}_n\}_{n \in \mathbb{N}}$ is \textit{efficiently PAC learnable} under target distributions $\mathcal{D} = \{\mathcal{D}_n\}_{n \in \mathbb{N}}$ if there exists a hypothesis class $\mathcal{H} = \{\mathcal{H}_n\}_{n \in \mathbb{N}}$ and a (randomized) learning algorithm $\mathcal{L}$ with the following property: for every $f \in \mathcal{F}_n$, and for all $0 < \epsilon < 1/2$ and $0 < \delta < 1/2$, if $\mathcal{L}$ receives in input a training set $T$ of size greater than $M\in\calO(\text{poly}(n))$, then with probability at least $1-\delta$ over the random samples in $T$ and over the internal randomization of $\mathcal{L}$, the learning algorithm $\mathcal{L}$ outputs a specification{\footnote{{The hypotheses (and concepts) are specified according to some enumeration $R: \cup_{n \in \mathbb{N}}\{0,1\}^n \rightarrow \cup_{n} \mathcal{H}_n$ (or, $\cup_{n} \mathcal{C}_n$) and by a ``specification of $h \in \mathcal{H}_n$'' we mean a string $\sigma \in \{0,1\}^*$ such that $R(\sigma) = h$ (see~\citep{kearns1994introduction} for more details).}}} of some $h \in \mathcal{H}_n$ that satisfies:
\begin{equation}\label{eq:PAC}
    \mathsf{Pr}_{x \sim \mathcal{D}_n}\big[h(x) \neq f(x)\big] \leq \epsilon.
\end{equation}

Moreover, the learning algorithm $\mathcal{L}$ must run in time $\mathcal{O}(\mathrm{poly}(n, 1/\epsilon, 1/\delta))$.
If the learning algorithm runs in classical (or quantum) polynomial time, and the hypothesis functions can be evaluated efficiently on a classical (or quantum) computer, we refer to the concept class as \textit{classically learnable} (or \textit{quantumly learnable}, respectively).

\end{definition}

In classical machine learning literature, hypothesis functions are generally assumed to be efficiently evaluable by classical algorithms, as the primary goal is to accurately label new data points. In this paper, however, we focus on learning separations that arise from the inability of a classical algorithm to identify the target concept that labels the data, rather than from the hardness of evaluating the concept itself. Therefore, for our purposes we consider hypothesis functions that may be computationally hard to evaluate classically. Specifically, we will consider target concept functions, as defined in Def. \ref{def: bqp function}, that are related to $(\mathsf{Promise})\BQP$ complete languages. As discussed in~\citep{gyurik2023exponential}, obtaining a learning separation for the identification problem in this case requires restricting the hypothesis class available to the classical learning algorithm. Otherwise, a classical learner can always outputs the circuit for the whole quantum learner with hardwired data as the output. A common approach is then to define the hypothesis class as exactly the same set of functions contained in the concept class that label the data. The classical learning algorithm will then have to correctly identify which is the concept, among the ones in the concept class, that labeled the data. As we will make clear later, the task is close to a variant of the PAC learning framework in Def.~\ref{def:pac}, called proper PAC learning. 
  \begin{definition}
        [Proper PAC learning~\citep{kearns1994introduction}]\label{def: proper PAC}
        A concept class $\mathcal{F} = \{\mathcal{F}_n\}_{n \in \mathbb{N}}$ is efficiently proper PAC learnable under the distribution $\mathcal{D} = \{\mathcal{D}_n\}_{n \in \mathbb{N}}$ if it satisfies the definition of PAC learnability given in Def.~\ref{def:pac}, with the additional requirement that the learning algorithm $\mathcal{L}$ uses a hypothesis class $\mathcal{H}$ identical to the concept class, i.e., $\mathcal{H} = \mathcal{F}$.

        \end{definition}

We provide now the definition of the two types of concept classes considered in our main Theorem~\ref{thm: main theorem}.
\vspace{3mm}

\textbf{$c$-distinct concept class} 
In the first scenario, we require the concept class to consist of $\BQP$-complete functions, as defined in Def.~\ref{def: bqp function}, that disagree on a sufficiently large fraction of inputs. More specifically, we define a $c$-distinct concept class as follows.
\begin{definition}[$c$-distinct concept class]\label{def: c-distant concept }
Let $\mathcal{F}=\{f^\alpha:\{0,1\}^n\rightarrow\{0,1\}\;|\;\alpha\in\{0,1\}^m\}$ be a concept class. We say $\calF$ is a $c$-distinct concept class if \begin{equation}\label{eq: distinct condition}
    \forall \alpha_1, \alpha_2\in\{0,1\}^m, \alpha_1\neq\alpha_2  \quad\exists S\subset\{0,1\}^n, |S|/2^n\geq c \quad\mathrm{ s. t. }\;\;\forall x\in S\;\; f^{\alpha_1}(x)\neq f^{\alpha_2}(x).
\end{equation}
    
\end{definition}
We note that in the case of concept classes containing $\mathsf{PromiseBQP}$ functions as defined in Def.~\ref{def: PromiseBQP function}, for the definition of $c$-distinct concepts to be meaningful, we require that the set $S$ is a subset of the inputs specified in the promise. In the Appendix~\ref{app:construction}, we provide an example of a $0.5$-distinct concept class where the concepts are all $\mathsf{(Promise)BQP}$ functions.

\vspace{3mm}
\textbf{Average-case-smooth concept class} 
We now consider concept classes where the label space is equipped with a metric such that if two concepts are close under the PAC conditions, then their corresponding labels $\alpha_1$ and $\alpha_2$ are also close with respect to the metric on the label space. We formalize this notion of closeness in the definition below. 
\begin{definition}[Average-case-smooth concept class]\label{def: average case smoothness}
    Let $\mathcal{F}=\{f^\alpha:\{0,1\}^n\rightarrow\{0,1\}\;|\;\alpha\in\{0,1\}^m\}$ be a concept class. We say that $\calF$ is \textit{average-case-smooth} if there exists a distance function $d: \{0,1\}^m\times\{0,1\}^m\rightarrow \mathbb{R}^+$ defined over the labels $\alpha\in\{0,1\}^m$ and a $C\geq 0$ such that $\forall \alpha_1,\alpha_2\in\{0,1\}^m $: 
\begin{equation}\label{eq:closeness condition}
    \mathbb{E}_{x\sim\mathrm{Unif}({0,1}^n)}|f^{\alpha_1}(x)-f^{\alpha_2}(x)|\geq C\; d(\alpha_1,\alpha_2).
\end{equation}
\end{definition}

We note that in machine learning, it is often the case that closeness of functions in parameter space implies closeness in the PAC sense. However, in Def.~\ref{def: average case smoothness}, we require the reverse: that closeness at the function level implies closeness in parameter space. Nevertheless, in the Appendix~\ref{app:construction}, we provide an example of an average-case-smooth concept class of quantum implementable functions.

\subsection{Definitions of the identification tasks}
Imagine we are given a machine learning task defined by a concept class $\mathcal{F}$ composed of $2^m$ target functions $f^\alpha\in\calF$, each one specified by a vector alpha $\alpha \in \{0,1\}^m$ with $m$ scaling polynomially with input size $n$. Formally, that means that there exists a function $e:S\subseteq\{0,1\}^m\rightarrow \calF$ which is bijective and such that $e(\alpha)=f^\alpha\in\calF$. By abuse of notation, in this paper we often use $ \alpha $ to refer to $ e(\alpha) $ when it is clear from the context that we mean the function $ f^\alpha $, rather than the vector $ \alpha \in \{0,1\}^m $.
Given a training set of inputs labeled by one of these concepts, the goal of the \textit{identification learning algorithm} is to ``recognize'' the target concept which labeled the data and output the corresponding\footnote{More precisely, the algorithm is allowed to output any $\Tilde{\alpha}$ which is close in PAC condition with $\alpha$. See Def.~\ref{def: identification verifiable} and~\ref{def: approximate-correct }.} $\alpha$.
We address two closely related but subtly different versions of identification tasks, for which we provide precise definitions below. In the first one, we assume the learning algorithm can decide if a dataset is \textsl{invalid}, i.e. if it is not consistent with any of the concepts in the concept class. In particular, we adopt the notion of consistency as defined in Definition 3 of~\citep{kearns1994introduction} and regard a dataset as valid if every point in it is labeled in accordance with a concept. In Appendix~\ref{app: consistency model} we explain that this version of the task is closely related to the learning framework of the so-called consistency model found in the literature~\citep{kearns1994introduction,mohri2018foundations,schapire1990strength}.


\begin{definition}
    [Identification task - verifiable case]\label{def: identification verifiable}
    Let $\mathcal{F}=\{f^\alpha (x) \in \{0,1\}\;\;|\;\;\alpha\in\{0,1\}^m\}$ be a concept class\footnote{Here and throughout the paper, we assume that the concepts are labeled by a vector $\alpha$ in $\{0,1\}^m$. However, it is not required that $\alpha$ spans the entire set of bitstrings in $\{0,1\}^m$. The key requirement is that $m$ is sufficiently large to ensure that every concept in the concept class can be assigned a unique vector in $\{0,1\}^m$}. A \textit{verifiable} identification algorithm is a (randomized) algorithm $\mathcal{A}_B $ such that when given as input a set $T=\{(x_\ell,y_\ell)\}_{\ell=1}^B $ of at least $B$ pairs $(x,y)\in  \{0,1\}^n\times \{0,1\} $, an error parameter $\epsilon>0$ and a random string $r\in R$, it satisfies:
    \begin{itemize}
        \item If $\nexists\;\alpha\in\{0,1\}^m$ such that  $y_\ell=f^\alpha(x_\ell)$ holds for all $(x_\ell,y_\ell)\in T$, then $\calA_B $ outputs ``\textit{invalid dataset}''.
        \item If the samples in $T$ come from the distribution $\calD$, i.e. $x_\ell\sim \calD$, and there exists an $\alpha\in\{0,1\}^m$ such that $y_{\ell}=f^\alpha(x_\ell)\in\{0,1\}$ holds for all $(x_\ell,y_\ell)\in T=\{(x_\ell,y_\ell)\}_{\ell=1}^B$, then with probability $1-\delta$ it outputs:
        \begin{equation}
    \mathcal{A}_B (T,\epsilon,r)=\Tilde{\alpha}, \;\;\Tilde{\alpha}\in\{0,1\}^m,
\end{equation}
satisfying $\mathbb{E}_{x\sim\calD}|f^\alpha(x)-f^{\Tilde{\alpha}}(x)|\leq\epsilon$.

We say that $\calA_B $ solves the identification task for a concept class $\calF$ under the input distribution $\calD$ if the algorithm works for any values of $\epsilon,\delta\geq0$.
The success probability $1-\delta$ is over the training sets where the input points are sampled from the distribution $\calD$ and the internal randomization of the algorithm. The required minimum size $B$ of the input set $T$ scales as poly($n$,$1/\epsilon$,$1/\delta$), while the running time of the algorithm scales as poly($B$, $n$).
 \end{itemize}
\end{definition}

It will be instructive to think about the algorithm $\calA_B $ as a mapping that, once $\epsilon$ and $r$ are fixed, takes datasets $ T \subseteq \{ (x,y) \;|\; x \in \{0,1\}^n, \; y \in \{0,1\} \} $ with $ |T| \geq B $, and outputs labels $\alpha \in \{0,1\}^m$. The verifiability condition will play a crucial role in our hardness result for the verifiable case of the identification task. Nevertheless, in our main hardness result, we will show how to relax the verifiability condition on the learning algorithm and provide hardness result for the existence of a proper PAC learner algorithm which does not reject any input dataset but satisfies the following additional assumptions. We call such learning algorithm an \textit{approximate-correct} identification algorithm.

\begin{definition}[Identification task - non verifiable case ]\label{def: approximate-correct }
Let $\mathcal{F}=\{f^\alpha: \{0,1\}^n\rightarrow\{0,1\} \;\;|\;\;\alpha\in\{0,1\}^m\}$ be a concept class. An \textit{approximate-correct} identification algorithm is a (randomized) algorithm $\mathcal{A}_B $ such that when given as input a set $T=\{(x_\ell,y_\ell)\}_{\ell=1}^B $ of at least $B$ pairs $(x,y)\in  \{0,1\}^n\times \{0,1\} $, an error parameter $\epsilon>0$ and a random string $r\in R$ satisfies the definition of a proper PAC learner (see Def.~\ref{def: proper PAC}) along with the following additional properties:
    \begin{enumerate}
    \item If for any $\alpha$ all the $(x_\ell,y_\ell)\in T$ are such that $y_\ell\neq f^\alpha(x_\ell)$ then there exists an $\epsilon_1$ such that for all $\epsilon\leq\epsilon_1$ and all $r\in R$:
         \begin{equation}\label{eq: property 1}
            \calA_B (T,\epsilon_1,r)\neq\alpha.
        \end{equation}
        In other words, the algorithm will never output a totally incorrect $\alpha$, i.e. an $\alpha$ inconsistent with all the inputs in the dataset. 

    \item If $T=\{(x_\ell,y_\ell)\}_{\ell=1}^B$ is composed of different inputs $x_{\ell}$ and if there exists an $\alpha$ such that $y_\ell=f^\alpha(x_\ell)$ for all $(x_\ell,y_\ell)\in T$, then there exists a threshold $\epsilon_2$ such that for any $\epsilon\leq\epsilon_2$ there exists at least one $r\in R$ for which:
        \begin{equation}\label{eq: property 2}
            \calA_B (T,\epsilon_2,r)=\alpha_2
        \end{equation}
    With the condition: $\mathbb{E}_{x\sim\text{Unif}(\{0,1\}^n)}|f^\alpha(x)-f^{\alpha_2}(x)|\leq \frac{1}{3}$.\\
    Therefore, if the dataset is fully consistent with one of the concept $\alpha$, then there is at least one random string for which the identification algorithm will output a $\tilde{\alpha}$ closer than $\frac{1}{3}$ in PAC condition to the true labelling $\alpha$. 

 \end{enumerate}
  We say that $\calA_B $ solves the identification task for a concept class $\calF$ under the input distribution $\calD$ if the algorithm works for any value of $\epsilon,\delta\geq0$.
 The required minimum size $B$ of the input set $T$ is assumed to scale as poly($n$,$1/\epsilon$,$1/\delta$), while the running time of the algorithm scales as poly($B$, $n$). Moreover, the $\epsilon_1$ and $\epsilon_2$ required values scale at most inverse polynomially with $n$.   
\end{definition}
Appendix~\ref{app:assumptions} examines the two assumptions made about the approximately correct algorithm and presents a concept class where just a standard proper PAC learner meets both conditions.

\section{Hardness results for random generatability of quantum functions}\label{sec: hardness rv}
We now show our results on the hardness of random generability of quantum functions based on the assumptions that $\mathsf{BQP}$ is not contained in the second level of the $\PH$, or, in case of approximate random generability, heuristically decided within $\mathsf{HeurBPP}$ by an efficient classical machine with a $\NP$ oracle.
Classical hardness for exact random generatability of quantum functions can be proved also based on a different complexity-theoretic assumption, i.e. $\mathsf{BPP\slash samp^{BQP}}\not\subseteq \BPP$ (we refer to Appendix~\ref{app: definitions} and the works in~\cite{marshall2024bounded,huang2021power} for the precise definition of the class $\mathsf{BPP\slash samp^{BQP}}$, but roughly speaking, this class requires the sample generator to be a polynomial-time quantum computer). For completeness, we also prove this latter result in the Appendix~\ref{appendix: RV}.

In words, Theorem~\ref{thm: exact rv ph} and~\ref{thm: imperfect rv ph} state that if $x$ are labeled by a $\BQP$ function $f$, then even classically generating random correctly labeled examples $(x,f(x))$ is hard. It is important to note that, in general, the hardness of computing a function does not imply the hardness of generating random labeled samples from it. For instance, while computing the discrete logarithm modulo a large prime is believed to be classically intractable, it is nevertheless possible to efficiently generate evaluations of this function on uniformly random inputs classically. In this sense, our result does not follow trivially from the classical intractability of the target quantum functions.

\begin{restatable}[Exact Random generatability implies $\mathsf{BQP}$ $\subseteq$ $\mathsf{P^{NP}}$]{theorem}{exactrv}
\label{thm: exact rv ph}
Let $f=\{f_n\}_{n\in\mathbb{N}}$ be a family of $\mathsf{BQP}$ functions as in Def.~\ref{def: bqp function}. If there exists a classical randomized poly-time uniform algorithm that generates samples $(x,f_n(x))$ correctly with probability 1 as in Def.\ref{def:exact rv},  with $x\sim\mathsf{Unif}(\{0,1\}^n)$, then $\mathsf{P^{NP}}$ contains $\mathsf{BQP}$.
\end{restatable}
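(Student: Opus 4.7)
The plan is to reduce evaluating $f_n$ at any fixed point $y\in\{0,1\}^n$ to a single $\NP$ query about the generator $\mathcal{A}_{\calD}$.

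I would start from two observations about $\mathcal{A}_{\calD}$. First, because $\mathcal{A}_{\calD}$ is \emph{exactly} correct (probability $1$ over the polynomially long seed $r$) and the seed space is finite and uniformly weighted, every single seed $r$ must yield a correctly labeled pair: if some seed produced $(x,1-f_n(x))$, that one seed would already carry positive probability under the uniform distribution on seeds, contradicting exactness. Second, since $x_r$ is uniform over $\{0,1\}^n$ when $r$ is uniform over $\{0,1\}^{\poly(n)}$, every $y\in\{0,1\}^n$ is the first coordinate of $\mathcal{A}_{\calD}(f_n,r)$ for at least one seed $r$ (the preimage of $y$ has size $2^{\poly(n)-n}\geq 1$).

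Next, for each $b\in\{0,1\}$ I would define the predicate
\[
P_b(y)\;:=\;\exists\,r\in\{0,1\}^{\poly(n)}\ \text{such that}\ \mathcal{A}_{\calD}(f_n,r)=(y,b).
\]
The predicate $P_b$ lies in $\NP$: given a candidate seed $r$ one simply simulates the classical polynomial-time algorithm $\mathcal{A}_{\calD}$ and checks its output. By the two observations above, exactly one of $P_0(y),P_1(y)$ holds, namely $P_{f_n(y)}(y)$. The deterministic $\mathsf{P}^{\NP}$ decision procedure for $f_n$ is therefore: on input $y$ (together with the circuit description of $f_n$, which is obtainable from $1^n$ by uniformity), query the $\NP$ oracle on $P_0(y)$ and output $0$ if the answer is yes and $1$ otherwise. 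Applying this argument to the characteristic function of any $\mathsf{(Promise)BQP}$-complete problem (Def.~\ref{def: bqp function}) yields $\BQP\subseteq\mathsf{P}^{\NP}$.

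The main obstacle, modest as it is, is just recognizing how essential the exactness hypothesis is: it converts the distributional statement ``a uniform seed produces a correctly labeled sample'' into a pointwise statement over \emph{all} seeds, which is what makes $P_b$ an honest $\NP$ predicate with no false positives and makes the oracle answer deterministically correct. In the approximate variant of Theorem~\ref{thm: imperfect rv ph} this pointwise collapse fails, so one is forced to estimate acceptance probabilities with fresh randomness, which is precisely why the approximate statement can only be placed in a $\mathsf{BPP}^{\NP}$-type class rather than in $\mathsf{P}^{\NP}$.
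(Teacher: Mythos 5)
Your proof is correct, and it rests on the same underlying idea as the paper's — use an $\NP$ oracle to invert the exact generator, exploiting that (i) every seed yields a correctly labeled pair and (ii) every $y\in\{0,1\}^n$ has at least one preimage seed. The execution differs, though. The paper recovers an explicit witness seed $r_y$ by a bit-by-bit prefix search (defining the $\NP$ set of pairs $(u,y)$ such that some extension $v$ of the prefix $u$ satisfies $\tilde{C}(uv)=y$, where $\tilde{C}$ drops the label bit), and only then re-runs the generator on $r_y$ to read off $f_n(y)$; this costs $|r|=\poly(n)$ adaptive oracle queries. You instead query the labeled output directly via the predicate $P_b(y)=\exists r:\mathcal{A}_{\calD}(f_n,r)=(y,b)$, and observe that exactness makes $P_{1-f_n(y)}(y)$ false while uniformity of the marginal makes $P_{f_n(y)}(y)$ true, so a single non-adaptive $\NP$ query already decides $f_n(y)$. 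Your route is shorter and in fact yields the marginally stronger conclusion $\BQP\subseteq\mathsf{P}^{\NP[1]}$; the paper's witness-recovery route is the one that generalizes to the approximate setting of Theorem~\ref{thm: imperfect rv ph}, where — exactly as you note at the end — the pointwise correctness of every seed fails and one must sample many witnesses and take a majority, landing in a $\mathsf{BPP}^{\NP}$-type class.
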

\begin{proof}[Proof sketch]
The whole proof can be found in the Appendix~\ref{appendix: exact rv}, here we give the main idea. Suppose there exists an algorithm $\mathcal{A} $ which satisfies Def.~\ref{def:exact rv} for a $\mathsf{BQP}$ family of function $f$ and for the uniform distribution over the inputs $x\in\{0,1\}^n$. For a fixed function $f_n:\{0,1\}^n\rightarrow\{0,1\}$, the algorithm $\calA$ maps a random string $r\in\{0,1\}^{\text{poly}(n)}$ to a tuple of $(x_r,f_n(x_r))$, i.e. $\calA (f_n,.): \{0,1\}^{\text{poly}(n)}\rightarrow \{0,1\}^n\times\{0,1\}$. Then, in order to prove Theorem~\ref{thm: exact rv ph} we construct an algorithm $\mathcal{A} '$ which can decide the $\BQP$ language associated to $f_n$ by using $\mathcal{A} $ and an $\NP$ oracle. Such algorithm $\mathcal{A}'$ will essentially invert $\mathcal{A} $ on a given input $x_{\Tilde{r}}$ to find a corresponding valid random string $\Tilde{r}$ and then computes $f_n(x_{\Tilde{r}})$ using $\calA(f_n,\Tilde{r})$. Specifically, by using the $\NP$ oracle, $\mathcal{A} '$ can find the random string $\Tilde{r}$ associated to $x_{\Tilde{r}}$ for which $\mathcal{A} (f_n,\Tilde{r})=(x_{\Tilde{r}},f_n(x_{\Tilde{r}}))$ in polynomial time.  Importantly, finding the string $\Tilde{r}$ can be achieved using an $\NP$ oracle since $\mathcal{A} $ operates in classical polynomial time, and thus it can efficiently verify the correct string $\Tilde{r}$. This concludes the proof as, by Def.~\ref{def: bqp function}, $f$ correctly decides an arbitrary $\mathsf{BQP}$ language and $\mathcal{A} '$ can evaluate any $f_n$ on every $x_{\Tilde{r}}$ by just running $\mathcal{A} (f_n,\Tilde{r})$. 
\end{proof}

In the next theorem, we allow the algorithm $\mathcal{A}$ to make mistakes both on the distribution followed by the outputted $x$ and we also allow errors on some of the outputted $(x,f_n(x))$.

\begin{restatable}[Approximate Random generatability implies $\mathsf{(\mathsf{BQP},Unif)}\in\mathsf{HeurBPP^{NP}}$]{theorem}{imperfectrv}\label{thm: imperfect rv ph} Let $f=\{f_n\}_{n\in\mathbb{N}}$ be a family of $\mathsf{BQP}$ functions which is the characteristic function of a language $\mathcal{L}\in \mathsf{BQP}$ as in Def.~\ref{def: bqp function}, and let $\mathsf{Unif}$ be the uniform distribution over $\{0,1\}^n$. If there exists a polynomial time algorithm $\mathcal{A} $ which satisfies Def.~\ref{def:imperfect rv} for the uniform input distribution $\mathsf{Unif}$, 
then $(\mathcal{L},\mathsf{Unif})\in \mathsf{HeurBPP}^{\mathsf{NP}}$.
\end{restatable}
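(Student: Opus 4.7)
The plan is to adapt the inversion argument of Theorem~\ref{thm: exact rv ph} to the noisy setting by replacing pure $\NP$-search with Stockmeyer approximate counting. In the exact case one used the $\NP$ oracle to locate the (essentially unique) random string $\tilde r$ with $\mathcal{A}(f_n,r)=(x,\cdot)$ and read off the label. In the approximate case both $(x,0)$ and $(x,1)$ may occur as outputs of $\mathcal{A}$, so instead I would ask the oracle to compare the sizes of the two preimages and output the more popular label. Since Stockmeyer's theorem gives $(1\pm 1/\poly)$-multiplicative approximations to counts of accepting inputs of $\NP$ predicates in $\mathsf{BPP}^{\NP}$, this computation sits inside $\mathsf{HeurBPP}^{\NP}$.

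\textbf{Setup and Markov bounds.} Fix $\epsilon>0$ (to be chosen as an inverse polynomial below) and set $p_y(x)=\Pr_r[\mathcal{A}(f_n,0^{1/\epsilon},r)=(x,y)]$. The hypothesis $\|\pi^f_\epsilon-\pi^f\|_{TV}\le\epsilon$ together with the fact that $\pi^f$ places all its mass on correctly labeled pairs gives
\begin{equation*}
\sum_{x}p_{1-f_n(x)}(x)\le\epsilon,\qquad \sum_{x}\bigl|p_0(x)+p_1(x)-2^{-n}\bigr|\le 2\epsilon.
\end{equation*}
Two Markov bounds then imply that, outside a bad set of $\mathsf{Unif}$-mass at most $12\epsilon$, one has both $p_{1-f_n(x)}(x)\le 2^{-n}/8$ and $p_0(x)+p_1(x)\ge 2^{-n}/2$, so $p_{f_n(x)}(x)\ge 3\,p_{1-f_n(x)}(x)$.

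\textbf{The decider.} For any polynomial $p$ governing the $\mathsf{HeurBPP}^{\NP}$ error, set $\epsilon=1/(12p(n))$, which keeps $\mathcal{A}(f_n,0^{1/\epsilon},\cdot)$ a polynomial-size uniform circuit. On input $x$, apply Stockmeyer approximate counting to the two $\NP$ predicates ``$\exists r : \mathcal{A}(f_n,0^{1/\epsilon},r)=(x,i)$'' to obtain estimates $\widetilde N_i$ of $\bigl|\{r : \mathcal{A}(f_n,0^{1/\epsilon},r)=(x,i)\}\bigr|$ within a multiplicative factor $3/2$, and return $\argmax_{i\in\{0,1\}}\widetilde N_i$. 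On good $x$ the factor-$3$ gap between the true preimage sizes survives the factor-$3/2$ estimation error, so the output equals $f_n(x)$ with high probability; the $\le 1/p(n)$ error on the bad set is exactly what $\mathsf{HeurBPP}^{\NP}$ tolerates, completing the reduction since $f_n$ decides an arbitrary $\mathsf{BQP}$ language by Def.~\ref{def: bqp function}.

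\textbf{Main obstacle.} This proof is only cosmetically an inversion argument; the real technical content is establishing the constant-factor gap between correct and incorrect label probabilities on a $1-O(\epsilon)$ fraction of inputs, which requires bounding \emph{both} the mislabeling mass and the deviation of $\mathcal{A}$'s $x$-marginal from uniform simultaneously. A secondary care point is that Stockmeyer's approximation must be sharp at the relevant scale: on good $x$ each nonzero preimage has size at least an inverse-polynomial fraction of the total number of random strings, so polynomial-precision counting is adequate. Everything else is directly inherited from the exact-case template.
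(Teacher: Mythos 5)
Your proof is correct, and it reaches the same destination by a closely related but technically distinct route. The paper also reduces the problem to deciding which label is more popular among the $r$-preimages of a given $x$ under the sampler $\mathcal{A}(f_n,0^{1/\epsilon},\cdot)$, but it implements this test by invoking the Bellare--Goldreich--Petrank uniform-generation theorem to draw polynomially many witnesses $r$ with $\tilde C(r)=x$ uniformly at random (via an $\NP$ oracle) and then taking a majority vote over the resulting labels; you instead invoke Stockmeyer approximate counting on the two predicates ``$\exists r:\mathcal{A}(f_n,0^{1/\epsilon},r)=(x,i)$'' and compare the estimated preimage sizes. Since uniform generation and approximate counting for $\NP$ relations are interreducible in $\mathsf{BPP}^{\NP}$, the two tools are equivalent here and both land in $\mathsf{HeurBPP}^{\NP}$. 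Where your write-up is actually tighter is the error analysis: the paper bounds the bad set $X_{\mathrm{wrong}}$ by a single manipulation of the TV distance and at one point writes $|R_{\mathrm{tot}}(x)|/|R_{\mathrm{tot}}|=1/|P|+\delta_x$ and then ``assumes $\delta_x\ll\epsilon$ and neglects it,'' which is a genuine gloss; your two separate Markov bounds (one on the mislabeling mass, one on the deviation of the $x$-marginal from uniform) handle exactly this point rigorously and yield the clean factor-$3$ gap $p_{f_n(x)}(x)\geq 3\,p_{1-f_n(x)}(x)$ on a $1-O(\epsilon)$ fraction of inputs. One small correction to your ``secondary care point'': on good $x$ the preimage of $(x,f_n(x))$ occupies a $\Theta(2^{-n})$ fraction of the $2^{|r|}$ random strings, not an inverse-polynomial fraction; this does not matter, because Stockmeyer's guarantee is a multiplicative approximation valid for counts of any magnitude, so your conclusion that polynomial-precision counting suffices stands for a different reason than the one you gave. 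Your constants ($\epsilon$ vs.\ $2\epsilon$ in the TV decomposition, hence $12\epsilon$ vs.\ $24\epsilon$ for the bad set) are off by a factor of two, which is immaterial.
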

\begin{proof}[Proof sketch]
    The full proof can be found in Appendix~\ref{appendix: approximate rv}, here we outline the main idea. We present an algorithm $\mathcal{A} '\in\mathsf{HeurBPP^{NP}}$ which can heuristically decide the language $\mathcal{L}$ with respect to the uniform distribution over the inputs, i.e. it satisfies Eq.~(\ref{eq:heuristic condition}) for the input distribution $\calD=\mathsf{Unif}$. Let $f=\{f_n\}_n$ be the family of functions associated to the $\mathsf{BQP}$ language as in Def.~\ref{def: bqp function}. The algorithm $\calA '$ follows directly from the one described in the proof of Theorem~\ref{thm: exact rv ph}. For a fixed function $f_n$ and error parameter $\epsilon$, the algorithm $\calA(f_n,0^{1/\epsilon},.)$ in Def.~\ref{def:imperfect rv} still maps random strings $r\in\{0,1\}^{\text{poly}(n)}$ to tuples $(x_{r},f_n(x_r))\in\{0,1\}^n\times\{0,1\}$. Then, on an input $x_{\Tilde{r}}$, $\mathcal{A} '$ still inverts the algorithm $\mathcal{A}$ in order to obtain a random string $\Tilde{r}$ such that $\mathcal{A} (f_n,0^{1/\epsilon},\Tilde{r})=(x_{\Tilde{r}},f_n(x_{\Tilde{r}}))$. This time however, it samples multiple such random strings uniformly at random (this can be done in polynomial time using the result from~\cite{bellare2000uniform}). By doing so, we can guarantee that taking the average of the corresponding $f_n(x_{\Tilde{r}_i})$, obtained from $\mathcal{A} (f_n,0^{1/\epsilon},\Tilde{r}_i)=(x_{\Tilde{r}_i},f_n(x_{\Tilde{r}_i}))$ for different $\Tilde{r}_i$, will correctly classify the input point $x_{\Tilde{r}}$ with high probability. More precisely, as stated in Def.~\ref{def:imperfect rv}, the algorithm $\mathcal{A} $ outputs samples $(x,f_n(x))$ which follow a distribution $\calL_\epsilon$ $\epsilon$-close in total variation with the exactly labeled, uniformly sampled over $x\in\{0,1\}^n$ one. It follows (see full proof in the Appendix~\ref{appendix: approximate rv}) that the maximum fraction of point $x$ which $\mathcal{A} '$ may misclassify is upper bounded by a linear function of $\epsilon$.  
\end{proof}

Although this is tangential to our present discussion, in Appendix~\ref{appendix: approximate rv} we present Corollary~\ref{cor:EVS} of the last theorem, which proves that a certain class of quantum generative models called expectation value samplers (introduced in \cite{romero2021EVS}, and proven universal in \cite{barthe2024EVS} and generalized in \cite{shen2024EVS}) is classically intractable.

\section{Hardness results for the verifiable identification problem}\label{sec: identification verifiable}

We now address the second problem studied in this paper, specifically the hardness of the identification task defined in Def.~\ref{def: identification verifiable}. 

\begin{restatable}
    [Identification hardness for the verifiable case ]{theorem}{identificationver}\label{thm: identification verifiable hard}
    Let $\mathcal{F}=\{f^\alpha:\{0,1\}^n\rightarrow\{0,1\}\;|\;\alpha\in\{0,1\}^m, m=\mathrm{poly}(n)\}$ be a concept class such that there exists a $f^{\alpha}\in\calF$ which is the characteristic function of a language $\calL\in \BQP$ as in Def.~\ref{def: bqp function}, and let $\mathsf{Unif}$ be the uniform distribution over the $x\in\{0,1\}^n$. If there exists a verifiable identification algorithm $\mathcal{A}_B $ for the uniform input distribution $\mathsf{Unif}$ as given in Def.~\ref{def: identification verifiable}, then $(\calL,\mathsf{Unif})\in\mathsf{ HeurBPP^{NP}}$.
\end{restatable}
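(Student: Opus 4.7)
The plan is to mirror, in the setting of verifiable identification, the inversion-plus-heuristic argument used in the proof of Theorem~\ref{thm: imperfect rv ph}: we build a classical randomized polynomial time algorithm $\mathcal{A}'$ with access to an $\NP$ oracle that, on input $x^{*}$ drawn from $\mathsf{Unif}$, outputs $f^{\alpha}(x^{*})$ correctly with probability $1-O(1/\mathrm{poly}(n))$ over $x^{*}$ and its own randomness. The key observation is that $\mathcal{A}_{B}$ is a classical polynomial time machine whose ``invalid'' output is a deterministic certificate of inconsistency with every concept in $\calF$: hence, for any proposed labelled dataset $T$, the predicate ``$T$ is valid'' is checkable in $\mathsf{P}$ via $\mathcal{A}_{B}$, and ``there exist labels and internal randomness making $\mathcal{A}_{B}(T)$ non-invalid'' is in $\NP$.

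Concretely, $\mathcal{A}'(x^{*})$ proceeds as follows. First, draw $B=\mathrm{poly}(n,1/\delta)$ auxiliary points $x_{1},\dots,x_{B}\sim\mathsf{Unif}$, with $B$ chosen large enough for $\mathcal{A}_{B}$ to PAC-identify with error $\epsilon$ inverse polynomial in $n$ and confidence $1-\delta$ as in Def.~\ref{def: identification verifiable}. Then, for each $b\in\{0,1\}$, issue the $\NP$ query
\[
Q_{b}:\ \exists\, (y_{1},\dots,y_{B},r)\ \text{such that}\ \mathcal{A}_{B}\bigl(\{(x_{\ell},y_{\ell})\}_{\ell=1}^{B}\cup\{(x^{*},b)\},\epsilon,r\bigr)\ \text{does not output \emph{invalid dataset}},
\]
whose witness $(y_{1},\dots,y_{B},r)$ is verified in classical polynomial time by simulating $\mathcal{A}_{B}$. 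Output the bit $b$ for which only $Q_{b}$ returns yes; if both return yes, a second phase uses the PAC guarantee of $\mathcal{A}_{B}$ on the labelling generated by $\alpha$ to break the tie by additional $\NP$ queries fixing parts of the output $\tilde{\alpha}$ via binary search and comparing which extension of the dataset with $(x^{*},0)$ or $(x^{*},1)$ is consistent with the identified $\tilde{\alpha}$.

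For correctness, note that the ``honest'' labelling $y_{\ell}=f^{\alpha}(x_{\ell})$ together with $b=f^{\alpha}(x^{*})$ is always a valid witness for $Q_{f^{\alpha}(x^{*})}$, so one of the queries always returns yes. The heuristic error comes entirely from the event that the $\NP$ oracle finds an adversarial labelling consistent with some $\beta\in\calF$ having $f^{\beta}(x^{*})\neq f^{\alpha}(x^{*})$, making $Q_{1-f^{\alpha}(x^{*})}$ also return yes. Standard PAC / Occam's razor applied to $\mathcal{A}_{B}$ ensures that, with probability at least $1-\delta$ over $(x_{1},\dots,x_{B})$, every concept consistent with the $\alpha$-labels on these samples agrees with $\alpha$ on a $(1-\epsilon)$-fraction of $\{0,1\}^{n}$ under $\mathsf{Unif}$; a union bound over these consistent concepts then shows that the probability over $x^{*}\sim\mathsf{Unif}$ that any of them differs from $\alpha$ at $x^{*}$ is bounded by an inverse polynomial, placing $(\calL,\mathsf{Unif})$ in $\mathsf{HeurBPP}^{\NP}$.

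The main obstacle is controlling the adversarial freedom the $\NP$ oracle has in choosing $(y_{\ell},r)$: since labels are existentially quantified, the oracle could in principle certify labels coming from a concept $\beta\neq\alpha$ and thereby decide $f^{\beta}(x^{*})$ instead of $f^{\alpha}(x^{*})$. Overcoming this is the technical heart of the proof and requires (i) using $B$ large enough that the labelled multiset $(x_{\ell},y_{\ell})$ restricts $\beta$ to a neighbourhood of $\alpha$ of total $\mathsf{Unif}$-measure at most $\epsilon$, and (ii) exploiting that $x^{*}\sim\mathsf{Unif}$ is independent of $x_{1},\dots,x_{B}$ so that the rare-disagreement set has small probability. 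Everything else is bookkeeping: choosing $\epsilon,\delta,B$ as inverse/positive polynomials in $n$ so that the heuristic error requirement of $\mathsf{HeurBPP}^{\NP}$ is met.
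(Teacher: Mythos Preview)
Your proposal has a genuine gap in the inversion step. Your $\NP$ queries $Q_b$ ask only whether there exist labels $(y_1,\dots,y_B)$ and randomness $r$ such that $\mathcal{A}_B$ does \emph{not} output ``invalid''. But by Def.~\ref{def: identification verifiable}, $\mathcal{A}_B$ accepts any dataset consistent with \emph{some} concept in $\calF$; it is in no way tied to the target $\alpha$. Hence the $\NP$ oracle can always witness $Q_b$ by picking any concept $\beta\in\calF$ with $f^\beta(x^*)=b$ and setting $y_\ell=f^\beta(x_\ell)$. For an exponentially large concept class this means both $Q_0$ and $Q_1$ return yes on essentially every input, so your first phase carries no information about $\alpha$. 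Your Occam/PAC argument (``every concept consistent with the $\alpha$-labels agrees with $\alpha$ on a $(1-\epsilon)$-fraction'') is then inapplicable: the labels the oracle chooses are not $\alpha$-labels, and nothing in the query forces them to be.

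Your tiebreaking phase is too vague to close this gap. ``Fixing parts of the output $\tilde\alpha$ via binary search'' does not specify on which dataset $\mathcal{A}_B$ is run, and if the labels remain existentially quantified the oracle can steer $\tilde\alpha$ to any concept it likes. The missing idea, which the paper exploits, is to constrain the $\NP$ witness so that $\mathcal{A}_B(T_{X,Y},\epsilon,r)=\alpha$ exactly: i.e., search for labels $Y$ (and randomness $r$) that make the identification algorithm \emph{output the specific target} $\alpha$, not merely accept. This ties the recovered dataset to $\alpha$, after which verifiability plus the PAC guarantee force the labels to be consistent with some $\tilde\alpha$ that is $\epsilon$-close to $\alpha$, and the \texttt{CHECK} trick (append $(x^*,0)$ and see if the dataset becomes invalid) reads off $f^{\tilde\alpha}(x^*)$. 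Without that constraint on the output of $\mathcal{A}_B$, the oracle's adversarial freedom is uncontrolled and your heuristic error bound does not follow.
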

\begin{proof}[Proof sketch]
    The full proof can be found in the Appendix~\ref{appendix: identification verifiable}, here we outline the proof sketch. The core idea is to show that if there exists an algorithm $\mathcal{A}_B $ that satisfies Def.~\ref{def: identification verifiable} for the concept class $\mathcal{F}$ in the theorem, then there exists a polynomial time algorithm $\calA '$ which, using $\calA_B $ and an $\NP$ oracle, takes as input any $\alpha\in\{0,1\}^m$ (which uniquely specifies a concept\footnote{We assume that each $\alpha$ provides an unambiguous specification of the concept $f^\alpha$ (possibly as a quantum circuit, i.e. that a quantum circuit can, given $\alpha$, evaluate $f^\alpha$ in polynomial time).}) and outputs a dataset of $B$-many inputs labeled by a concept $f^{\Tilde{\alpha}}$ in agreement with $f^{\alpha}$ under the PAC condition of Eq.~\ref{eq:PAC}.
    We recall here that, given an error parameter $\epsilon$ and a random string $r$, the algorithm $\calA_B $ takes as input any set $T$ of $B$ pairs $(x,y)\in\{0,1\}^n\times\{0,1\}$ and, if and only if the set $T$ is consistent with one of the target concepts $\alpha$, outputs with high probability a $\Tilde{\alpha} \in\{0,1\}^m$ close in PAC condition to $\alpha$. Note that the crucial ``if and only if'' condition stems directly from the ability of $\calA_B $ to detect invalid datasets, as described in Def.~\ref{def: identification verifiable}.
 We can then leverage this to construct the algorithm $\mathcal{A} '$ which correctly classifies the $\calL$ language associated to $f^\alpha$. Specifically, on any input $x\in\{0,1\}^n$, the algorithm $\mathcal{A} '$ first inverts $\mathcal{A}_B $ on the target $\alpha$ in order to obtain a dataset $T=\{(x_\ell,y_\ell)\}_{\ell=1}^B$ of $B$ input points such that $\calA_B (T,\epsilon,.)=\alpha$. This is possible by using an $\NP$ oracle to search for a dataset $T$ such that $\mathcal{A}_B(T, \epsilon, \cdot) = \alpha$, leveraging the efficiency of the algorithm $\mathcal{A}_B$. It then labels the input $x$ based on consistency with the training set $T$ generated in the previous step. By selecting an appropriate number of inputs $B$, it is possible to bound the probability that the dataset $T$ is consistent with a concept $f^{\Tilde{\alpha}}$ heuristically close to the target $f^\alpha$, thereby bounding the probability that the label assigned to $x$ corresponds to $f^\alpha(x)$.

\end{proof}

In Appendix~\ref{app: consistency model}, we observe that the verifiable identification task effectively captures the consistency learning model commonly used in the literature. 
Another compelling reason to consider learning algorithms within the verifiable case is that quantum learners \textit{can} verify whether a dataset is valid for a given $\alpha$. Specifically, given an input dataset $T$, the quantum learning algorithm outputs the guessed $\alpha$ and then can check whether every point in the dataset are correctly labeled by $f^\alpha$, similar to the process for efficiently evaluable hypotheses. Assuming $f^\alpha$ is quantum evaluable, and the dataset is polynomial in size, the verification procedure runs in polynomial time. A following question is then whether in order to verify a dataset we necessarily need a quantum computer for the $\BQP$ functions defined in Def.~\ref{def: bqp function}. We address this question in the following proposition, which asserts that for a singleton concept class\footnote{A singleton concept class is a concept class that consists of only one concept.}, determining whether a dataset is valid is possible if the concept can be evaluated in the class $\mathsf{P\slash poly}$ (see Def.~\ref{def:p/poly} in Appendix~\ref{app: definitions} for a definition of $\mathsf{P\slash poly}$).

\begin{proposition}[Hardness of verification - singleton case]\label{prop:verification hardness}
    Let $\mathcal{F}=\{f:\{0,1\}^n\rightarrow\{0,1\}\;\}$ be a singleton concept class. If there exists an efficient algorithm $\mathcal{A}_B $ such that for every set $T=\{(x_\ell,y_\ell)\}_{\ell=1}^{B}$, with $B$ polynomial in $n$, $x_\ell\in\{0,1\}^n$ such that $x_i\neq x_j \; \forall \; i,j\in \{1,...,B\}$ and $y_\ell\in\{0,1\}$, satisfies the following:
    \begin{itemize}
       \item If $\exists (x_\ell,y_\ell)\in T , \; y_\ell\neq f^\alpha(x_\ell)$, $\calA_B (T)$ outputs ``\textit{invalid dataset}''.
    \end{itemize}
    Then there exists a polynomial time non-uniform algorithm which computes $f(x)$ for every $x$. In particular, there exists an algorithm in $\mathsf{P\slash poly}$ which computes $f(x)$.

\end{proposition}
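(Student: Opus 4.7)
The plan is to construct an explicit non-uniform polynomial-time algorithm for $f$ by invoking $\mathcal{A}_B$ on polynomial-size datasets assembled from hard-coded advice. For each input length $n$, I would hard-code into the advice string a list of $B$ pairwise-distinct inputs $z_1, \ldots, z_B \in \{0,1\}^n$ together with their correct labels $f(z_1), \ldots, f(z_B)$. Since $B$ is polynomial in $n$ and each labeled example takes $n+1$ bits, the total advice has polynomial length in $n$.

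Given an input $x \in \{0,1\}^n$, the algorithm first checks whether $x$ matches any advice input $z_i$; if so, it returns the stored label $f(z_i)$ directly. Otherwise, it forms the dataset $T_0 = \{(z_i, f(z_i))\}_{i=1}^{B-1} \cup \{(x,0)\}$, which contains $B$ pairwise-distinct inputs by construction (the $z_i$'s are distinct by design of the advice, and $x \notin \{z_1, \ldots, z_{B-1}\}$ by the preceding case split). It then runs $\mathcal{A}_B(T_0)$ and outputs $1$ if the algorithm returns ``invalid dataset'' and $0$ otherwise.

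Correctness hinges on interpreting the hypothesis on $\mathcal{A}_B$ as a two-sided verification guarantee (the natural reading aligned with Def.~\ref{def: identification verifiable}, whereby $\mathcal{A}_B$ returns ``invalid dataset'' \emph{exactly} when the dataset is inconsistent with the unique concept $f$): since the advice pairs are labeled correctly by construction, the dataset $T_0$ is inconsistent with $f$ if and only if $f(x)=1$, so ``invalid dataset'' is returned exactly in that case. The whole procedure runs in time polynomial in $n$ with polynomial-length advice, which by the standard equivalence places $f$ in $\mathsf{P/poly}$. Any randomization internal to $\mathcal{A}_B$ can be absorbed into the advice in the standard $\mathsf{BPP} \subseteq \mathsf{P/poly}$ style after routine amplification. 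I do not anticipate a significant obstacle: the only genuinely new ingredient is the advice construction together with the distinctness-preserving case split, and the result essentially reduces to a single well-chosen query to the verifier.
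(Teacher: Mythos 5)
Your proof is correct and follows essentially the same route as the paper's: hard-code $B-1$ correctly labeled distinct samples as advice, append $(x,0)$, query $\mathcal{A}_B$, and read off $f(x)$ from whether the dataset is rejected. Your additions (the case split ensuring $x$ is not among the advice inputs, and the explicit note that correctness requires the two-sided reading of the verification guarantee) are minor refinements of details the paper's one-line proof leaves implicit.
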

\begin{proof}
    For every input size $n$, let us take as polynomial advice the $B-1$ samples $T^{B-1}=\{(x_\ell,f(x_\ell))\}_{\ell=1}^{B-1}$, for any sequence of different $x_\ell\in\{0,1\}^n$. Then, on any input $x\in\{0,1\}^n$, the algorithm in $\mathsf{P\slash poly}$ would run $\calA_B $ on the dataset $T_x=\{(x,0)\}\cup T^{B-1}$ and decide $x$ based on the corresponding output of $\calA_B $.   

\end{proof}
We specify that we require the inputs $x_\ell$ to be distinct in the training set $T$ to strengthen the result in Proposition~\ref{prop:verification hardness}. Without this condition, the result would be trivial, as one could provide $\calA_B $ with $B$ identical copies of $(x,0)$ as input and correctly decide each $x$ using a polynomial-time uniform algorithm by simply examining the corresponding output. 
In case of exponential-sized concept classes containing a $\mathsf{BQP}$ function which uniquely labels a set of polynomial number of inputs, the verifiability property of the identification algorithm can be used to prove that $\mathsf{BQP}$ is contained in the class $\mathsf{P\slash poly}$.

\begin{theorem}\label{thm: hardness verifiable advice}
 Let $\mathcal{F}=\{f^\alpha:\{0,1\}^n\rightarrow\{0,1\}\;|\;\alpha\in\{0,1\}^m, m=\mathrm{poly}(n)\}$ be a concept class such that there exists at least one function $f^{\alpha} \in \mathcal{F}$ that decides a language $\mathcal{L} \in \mathsf{BQP}$, and for which there exists a polynomial-sized subset $S \subset \{0,1\}^n$ such that $f^{\alpha}$ is uniquely determined by its labels on $S$.  If there exists a verifiable identification algorithm $\mathcal{A}_B $ as given in Def.~\ref{def: identification verifiable}, then $\mathsf{BQP}\subseteq\mathsf{P\slash poly}$.

\end{theorem}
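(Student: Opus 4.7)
The plan is to extend the advice-based argument of Proposition~\ref{prop:verification hardness} from the singleton setting to the full concept class $\calF$, using the polynomial-sized identifying set $S$ from the theorem hypothesis in place of the hardwired training set. Conceptually, this is the non-uniform analogue of the proof of Theorem~\ref{thm: identification verifiable hard}: the $\NP$ oracle (which there searches for a dataset consistent with the target $\alpha$) is replaced here by polynomial-sized advice that simply hardwires such a dataset for each input length.

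For every $n$, the advice encodes the identifying set $S_n\subseteq\{0,1\}^n$ together with the labels $\{f^{\alpha}(x_i):x_i\in S_n\}$. If $|S_n|<B-1$, where $B=\poly(n,1/\epsilon,1/\delta)$ is the sample complexity of $\calA_B$ with $\epsilon,\delta$ fixed to constants, we pad with arbitrary additional inputs and their $f^{\alpha}$-labels; this is permitted because $\mathsf{P\slash poly}$ allows the advice to be precomputed in exponential time offline. On input $x\in\{0,1\}^n$, the uniform algorithm returns the advice label when $x\in S_n$, and otherwise builds the two candidate datasets
\[
T_b\;=\;\bigl\{(x_i,f^{\alpha}(x_i)):x_i\in S_n\bigr\}\;\cup\;\{(x,b)\},\qquad b\in\{0,1\},
\]
each of size at least $B$, and queries $\calA_B(T_b,\epsilon,r)$ for $b\in\{0,1\}$.

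Correctness follows from the uniqueness hypothesis on $S_n$: the only label vector in $\{0,1\}^m$ whose concept is consistent with the hardwired pairs on $S_n$ is $\alpha$, so $T_b$ is valid in the sense of Def.~\ref{def: identification verifiable} iff $b=f^{\alpha}(x)$. The first bullet of Def.~\ref{def: identification verifiable} forces $\calA_B(T_{1-f^{\alpha}(x)},\epsilon,r)$ to output ``invalid dataset'' for every $r$, and verifiability---read in the natural operational sense that $\calA_B$ does not spuriously declare a valid dataset invalid, exactly as in the proof of Proposition~\ref{prop:verification hardness}---ensures the call on $T_{f^{\alpha}(x)}$ returns a non-``invalid'' answer. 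The algorithm then outputs the unique $b$ for which $\calA_B(T_b,\epsilon,r)$ is not ``invalid'', which equals $f^{\alpha}(x)$. Since this procedure runs in polynomial time and uses polynomial advice, $\calL\in\mathsf{P\slash poly}$; because by the conventions underpinning Def.~\ref{def: bqp function} we may take $f^{\alpha}$ to decide a $\mathsf{(Promise)BQP}$-complete language, we conclude $\BQP\subseteq\mathsf{P\slash poly}$.

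The main point requiring care is the reading of verifiability: strictly, Def.~\ref{def: identification verifiable} constrains $\calA_B$ only on invalid datasets and on valid datasets drawn iid from $\calD$, neither of which applies verbatim to the padded $T_{f^{\alpha}(x)}$. If one is unwilling to adopt the operational reading above, one can instead include in the advice, for each $n$, a single random string $r^{*}_n$ witnessing that $\calA_B$ does not return ``invalid'' on any valid $T_{f^{\alpha}(x)}$; such an $r^{*}_n$ exists by a union bound over the $2^n$ choices of $x$ once $\calA_B$ is amplified to failure probability at most $2^{-n-1}$ by independent repetitions, which incurs only polynomial overhead provided the per-trial success probability is bounded below by a constant. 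This is the only delicate step; everything else is a direct transcription of the single-concept argument of Proposition~\ref{prop:verification hardness}.
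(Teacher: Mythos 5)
Your proposal is correct and follows essentially the same route as the paper's own (very terse) proof: hardwire the identifying set $S$ with its $f^{\alpha}$-labels as advice, append $(x,b)$, and use the uniqueness of $\alpha$ on $S$ together with the invalid-dataset detection of Def.~\ref{def: identification verifiable} to read off $f^{\alpha}(x)$. Your added care about padding to size $B$ and about the fact that the definition is formally silent on valid datasets not drawn i.i.d.\ from $\calD$ addresses a gap the paper simply glosses over with ``label $x$ so that $\calA_B$ accepts,'' so your amplified-advice workaround is a genuine (if minor) improvement in rigor rather than a divergence in approach.
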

\begin{proof}   
    For every input size $n$, let us take as polynomial advice the samples from the subset $S$ which uniquely determine $f^\alpha$ and the corresponding labels, i.e. $T^{S}=\{(x_\ell,f(x_\ell))\}_{\ell=1}^{|S|}$, such that $y_\ell=f^\alpha(x_\ell)$ for every $x_\ell\in S$. 
    We then consider exactly that dataset as advice and we label any new input $x\in\{0,1\}^n$ selecting the $y\in\{0,1\}$ which keeps the dataset valid. More precisely, on any input $x\in\{0,1\}^n$, the algorithm in $\mathsf{P\slash poly}$ would run $\calA_B $ on the dataset $T_x=\{(x,0)\}\cup T^{S}$ and label $x$ so that the algorithm $\calA_B $ accepts the dataset.

\end{proof}


\section{Hardness result for the non-verifiable identification problem}\label{sec: identification hardness properPAC}
We now present the main result of our paper, which establishes hardness for the non-verifiable identification task under the assumption that it is solvable by an approximately correct algorithm. The theorem states that, assuming $\BQP$ is not contained in a low level of $\PH$, then no efficient classical learner can solve the identification problem for a broad class of concept classes. In particular, the result applies to concept classes that either include $c$-distinct concepts, as defined in Def.~\ref{def: approximate-correct } with $c \geq \frac{1}{3}$, or satisfy the average-case smoothness property given in Def.~\ref{def: average case smoothness}.

 \begin{restatable}
     [Hardness of the identification task - non verifiable case]{theorem}{maintheorem}\label{thm: main theorem}
Let $\mathcal{F}=\{f^\alpha: \{0,1\}^n\rightarrow\{0,1\} \;\;|\;\;\alpha\in\{0,1\}^m\}$ be a concept class containing at least a $\BQP$ function $f$ as in Def.~\ref{def: bqp function}  associated to a language $\calL\in\BQP$. Assume further that $\calF$ is either a $c$-distinct concept class with $c \geq 1/3$ or an average-case-smooth concept class. If the non-verifiable identification task for $\mathcal{F}$, as defined in Def.~\ref{def: approximate-correct }, is solvable by a classically efficient approximate-correct identification algorithm $\mathcal{A}_B$, then it follows that $(\mathcal{L}, \mathsf{Unif}) \in \mathsf{HeurBPP}^{\NP^{\NP^{\NP}}}$.
\end{restatable}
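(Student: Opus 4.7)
The plan is to extend Theorem~\ref{thm: identification verifiable hard} by reducing the non-verifiable identification task to the verifiable case, at the cost of two additional polynomial-hierarchy levels. I would proceed in two conceptual steps: (i)~boost the approximate-correct algorithm $\calA_B$ into an \emph{approximate-verifiable} identifier $\calA_B^{aV}$ using a $\Sigma_2^P$ oracle; (ii)~re-run the inversion argument of Theorem~\ref{thm: identification verifiable hard} on top of $\calA_B^{aV}$, contributing one more outer $\NP$ layer. The resulting algorithm decides the $\BQP$ language $\calL$ in $\mathsf{HeurBPP}^{\NP^{\Sigma_2^P}} = \mathsf{HeurBPP}^{\NP^{\NP^{\NP}}}$ on the uniform input distribution.

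For step~(i), the approximate-verifiable $\calA_B^{aV}$ must reject any $T$ for which no concept agrees with most labels, and otherwise output a $\tilde\alpha$ PAC-close to the best matching concept. I would emulate the ``rejection'' clause of Def.~\ref{def: identification verifiable} by querying the $\Sigma_2^P$ oracle: the outer $\exists$ searches for a candidate $\tilde\alpha$ and random string $r$ such that $\calA_B(T,\epsilon,r)=\tilde\alpha$; the inner $\forall$ certifies approximate consistency of $T$ with $\tilde\alpha$ \emph{without} directly evaluating the $\BQP$ function $f^{\tilde\alpha}$. Avoiding that evaluation is exactly where the structural assumptions are crucial. In the $c$-distinct case with $c>1/3$, Property~2 of Def.~\ref{def: approximate-correct } yields exact recovery ($\alpha_2=\alpha$), so consistency with $\tilde\alpha$ reduces to the existential statement ``$\exists r':\calA_B(T,\epsilon,r')=\tilde\alpha$'', while rival candidates $\alpha'\neq\tilde\alpha$ are ruled out by the $c$-distinct gap together with Property~1 (a fully inconsistent dataset never produces $\tilde\alpha$). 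In the average-case-smooth case, smoothness restricts the candidate set to a small $d$-neighborhood of the algorithm's output, over which a universal check can be performed.

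For step~(ii), I would mirror the proof of Theorem~\ref{thm: identification verifiable hard}: to decide $f^\alpha(x)$ on input $x$, use the outer $\NP$ oracle to search for $(T,r)$ with $(x,y)\in T$ and $\calA_B^{aV}(T,\epsilon,r)=\alpha$. Approximate-verifiability then forces $T$ to be approximately consistent with some concept PAC-close to $\alpha$; the $c$-distinct or smoothness assumption pins this concept down to $\alpha$ on a sufficient fraction of inputs, and choosing $B$ large enough ensures $y=f^\alpha(x)$ with high probability over the random choice of $T$. Because the PAC condition controls the error only on average over $x\sim\mathsf{Unif}$, the final algorithm lands in $\mathsf{HeurBPP}$ rather than $\mathsf{BPP}$, matching the statement.

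The main obstacle is step~(i): translating ``$T$ is approximately consistent with $\tilde\alpha$'' into a $\Sigma_2^P$ predicate without invoking $\BQP$ as a sub-oracle. The natural formulation $|\{\ell:y_\ell\neq f^{\tilde\alpha}(x_\ell)\}|\le \epsilon B$ would require evaluating the $\BQP$ function $f^{\tilde\alpha}$ on every sample—exactly what we are trying to decide. The self-referential use of $\calA_B$ as a proxy for consistency is viable only because the rigidity of $\calF$ ($c$-distinctness or smoothness) combined with Properties~1 and~2 jointly guarantees that the algorithm's output is tightly tied to genuine consistency rather than being spurious. Threading the thresholds $\epsilon_1,\epsilon_2$ of Def.~\ref{def: approximate-correct } through the reduction and controlling the accumulation of heuristic error across the $\Sigma_3^P$ oracle queries is routine but requires care, especially in choosing $\epsilon$ small enough that Property~1 applies uniformly over the oracle's branching.
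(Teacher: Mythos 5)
Your high-level architecture matches the paper's: boost the approximate-correct algorithm into an approximate-verifiable one using the structural assumptions on $\calF$, then rerun the inversion argument of the verifiable case on top of it, landing in $\mathsf{HeurBPP}^{\NP^{\NP^{\NP}}}$. (The paper splits the levels slightly differently---the approximate-verifiable algorithm sits in $\NP$ and the inversion contributes $\NP^{\NP}$ via witness-verification plus uniform sampling---but the bookkeeping comes out the same.) However, the step you yourself flag as ``the main obstacle'' is genuinely unresolved in your proposal, and it is exactly where the paper's key idea lives. Your candidate consistency test, ``$\exists r':\calA_B(T,\epsilon,r')=\tilde\alpha$'', is neither sound nor complete: Property~2 of Def.~\ref{def: approximate-correct } only constrains the algorithm's behavior on datasets that are \emph{fully} consistent with some concept, and Property~1 only constrains it on datasets that are \emph{fully inconsistent} with a concept. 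Nothing in the definition forbids $\calA_B$ from outputting $\tilde\alpha$ for some random string on a dataset with, say, a constant fraction of mislabeled points, nor does anything force it to output $\tilde\alpha$ on a dataset that is consistent except for one point. So the self-referential predicate does not track approximate consistency, and the ``rigidity'' appeal does not repair this.

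The paper's resolution is a complement-class trick that your proposal lacks. One defines $\bar\calF=\{f^{\bar\alpha}=f^\alpha\oplus 1\}$ and the label-flipped algorithm $\bar{\calA_B}$, and the check on a dataset $T$ of size $\beta B$ asks: does there exist a \emph{subset} $T_B\subset T$ of size $B$ and a random string $r$ with $\bar{\calA_B}(T_B,\epsilon_c,r)=\bar\alpha$? A point mislabeled with respect to $f^\alpha$ is, for binary labels, correctly labeled by $f^{\bar\alpha}$; hence if $T$ contains $B$ or more mislabeled points, some $T_B$ is \emph{fully} consistent with $\bar\alpha$, Property~2 applies, and $c$-distinctness with $c\ge 1/3$ (or average-case smoothness) forces the output to be exactly $\bar\alpha$ for some $r$---so the check fires. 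Conversely, if $T$ is fully consistent with $\alpha$, every $T_B$ is fully inconsistent with $\bar\alpha$ and Property~1 guarantees the check never fires. This converts approximate consistency into a purely existential ($\NP$) predicate using only the regimes where the definition actually constrains $\calA_B$, which is what your $\Sigma_2^P$ formulation with an inner universal quantifier was trying, unsuccessfully, to achieve. A secondary, more minor issue: in your step~(ii) the error per accepted dataset is controlled by the fraction parameter $\beta$, not by the dataset size $B$, and the paper needs an explicit counting argument over the whole family of accepted datasets containing $x$ to conclude that only an $O(1/\beta)$ fraction of inputs can be misclassified with probability above the threshold; ``choosing $B$ large enough'' does not by itself deliver this.
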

\begin{proof}[Proof sketch]
The proof can be found in the Appendix~\ref{appendix: identification non verifiable}. The proof combines the intermediate result in Theorem~\ref{thm: hardness approx-verif} with the result in Theorem~\ref{thm: construction distinct} for the $c$-distinct concept classes or in Theorem~\ref{thm: construction avg} for average-case-smooth concept classes. 
The intuition is the following. In Theorem~\ref{thm: construction distinct} and Theorem~\ref{thm: construction avg} we show that for $c$-dinstict or average-case-smooth concept classes the existence of an approximate-correct identification algorithm in Def.~\ref{def: approximate-correct } allows for the construction of an identification algorithm in the first level of the $\PH$ which is able to reject any dataset which contains a fraction of inputs greater than $\frac{1}{\beta}$ not labeled by $f^\alpha$. Then in Theorem~\ref{thm: hardness approx-verif} we prove that if such an algorithm exists, then on a given $\alpha$ we can invert it climbing up two more layers in the $\PH$ in order to obtain a dataset of inputs mostly consistent with $f^\alpha$. Because each of these datasets will contain only a fraction of $\frac{1}{\beta}$ inputs incorrectly labeled, we can guarantee that the fraction of misclassified inputs can be made polynomially small. Thus we can evaluate $f^\alpha$ in $\mathsf{HeurBPP}^{\NP^{\NP^{\NP}}}$.
 
\end{proof}
A full scheme of the proof overview can be found in Figure\ref{fig:proof_strategy}.
\begin{figure}[h]
\begin{center}
\includegraphics[width=0.7\linewidth]{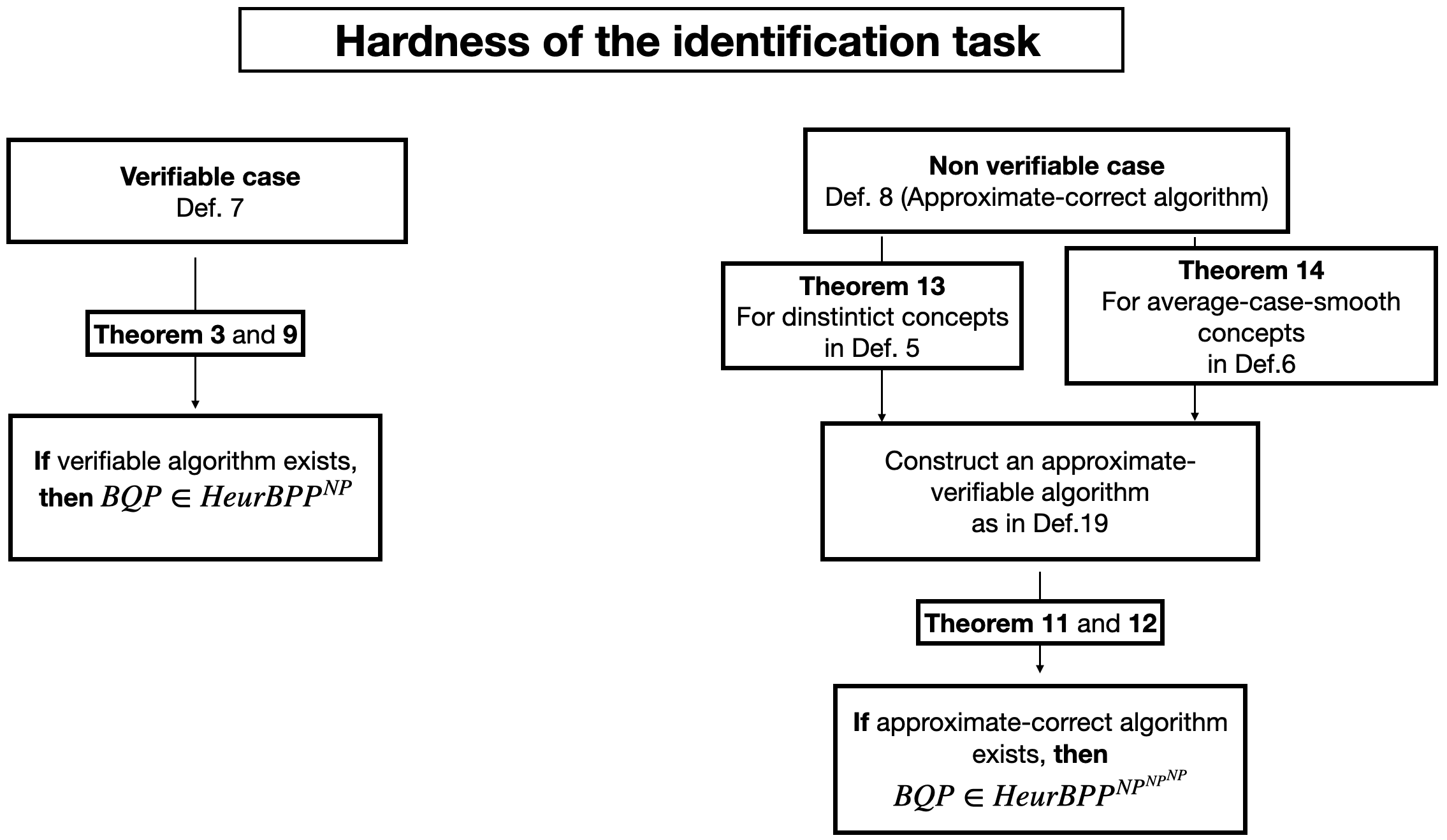}

\end{center}
\caption{An overview of our proof strategy for the identification task.} 
\label{fig:proof_strategy}
\end{figure}

Finally, we show that there exists a concrete learning problem for which a quantum learner can successfully identify the unknown label function, while any efficient classical method would fail. In particular, in Appendix~\ref{app: implication observables} we provide an example of a natural $c$-distinct concept class with $c\geq\frac{1}{3}$ which then satisfies the assumption of Theorem~\ref{thm: main theorem} and lead to the following final result. 

\begin{corollary}[Informal, see Theorem~\ref{thm: learn obs} for a formal version]\label{cor:learn obs}
 There exists a natural learning problem involving quantum functions for which there exist a quantum learning advantage for the identification task (i.e. the target concepts are efficiently identifiable on a quantum computer but not on a classical computer) unless $\BQP$ is in the (Heuristic) polynomial hierarchy.
 \end{corollary}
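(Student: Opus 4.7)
The plan is to produce an explicit concept class $\calF$ that (a) contains at least one $\BQP$ function in the sense of Def.~\ref{def: bqp function}, (b) is either $c$-distinct with $c \geq 1/3$ or average-case-smooth, and (c) admits an efficient quantum identification algorithm. Once such a class is in hand, the separation follows immediately from Theorem~\ref{thm: main theorem}: any efficient classical approximate-correct identification algorithm for $\calF$ would place $(\calL,\mathsf{Unif}) \in \mathsf{HeurBPP}^{\NP^{\NP^{\NP}}}$, contradicting the standard conjecture that $\BQP$ lies outside this heuristic level of the polynomial hierarchy.

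A natural template, echoing the construction sketched in Appendix~\ref{app:construction} and made formal in Appendix~\ref{app: implication observables}, is a class of ``learning observables'' functions. I would parameterize $\alpha = (\alpha_{\mathrm{hard}},\alpha_{\mathrm{tag}}) \in \{0,1\}^m$, associate to each $\alpha$ a state-preparation circuit $U_\alpha$ on $n$ qubits, encode each input $x$ as a local observable (e.g.\ a Pauli string) $P_x$, and set
\begin{equation}
    f^\alpha(x) = \mathrm{sign}\bigl(\bra{0}U_\alpha^\dagger P_x U_\alpha\ket{0}\bigr),
\end{equation}
where a small promise gap around zero is assumed. The sub-parameter $\alpha_{\mathrm{hard}}$ is chosen so that, for one distinguished value $\alpha^*$, $U_{\alpha^*}$ realizes the verifier circuit of a $\mathsf{PromiseBQP}$-complete problem and the map $x \mapsto P_x$ is arranged to make $f^{\alpha^*}$ coincide with the characteristic function of the associated language; the tag sub-parameter $\alpha_{\mathrm{tag}}$ is an explicit identifier hard-wired into a dedicated register so that distinctness and quantum identifiability follow by design. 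The two remaining technical steps are then: (i) verify $1/3$-distinctness by showing, via the Pauli identity $\sum_P \lvert\bra{\psi_1}P\ket{\psi_2}\rvert^2 = 2^n \lvert\braket{\psi_1|\psi_2}\rvert^2$ together with suitable redundancy in the tag encoding, that any two distinct $\alpha$'s produce sign-disagreements on at least a third of the Paulis; and (ii) exhibit a quantum learner that reads off the tag directly from tag-sensitive labeled samples and then verifies each candidate $\tilde\alpha$ against the remaining samples by efficient quantum evaluation of $f^{\tilde\alpha}$, which is efficient because every concept is $\BQP$-evaluable by construction.

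The main obstacle is step (i): the three desiderata -- containing a $\BQP$-hard concept, $1/3$-distinctness, and quantum identifiability -- tend to pull in opposite directions, since increasing the expressive power of the class to accommodate arbitrary $\BQP$ verifier circuits tends to weaken control over how distinct $\alpha$'s translate into distinct label patterns. The resolution is to allow a modest blow-up in the input size so that tag-sensitive observables form a genuinely positive-density subset of the Pauli input space, and to show that the tag, although small compared to $n$, still flips the sign of the expectation value on a constant fraction of Paulis. This is where I expect the bulk of the construction in Appendix~\ref{app: implication observables} to reside; alternatively, one could replace the $c$-distinctness hypothesis by average-case smoothness (Def.~\ref{def: average case smoothness}) and invoke the other branch of Theorem~\ref{thm: main theorem}, which may afford a cleaner encoding at the cost of a more subtle metric analysis on the parameter space.
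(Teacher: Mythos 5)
Your high-level architecture is the right one and matches the paper's: exhibit a concept class that (a) contains a $\BQP$(-complete) concept, (b) is $1/3$-distinct (or average-case-smooth), and (c) is quantumly identifiable, then invoke Theorem~\ref{thm: main theorem}. But the entire content of the formal statement (Theorem~\ref{thm: learn obs}) is the construction itself, and that is exactly the step you leave open. Your acknowledged obstacle (i) --- proving that any two distinct parameters induce sign-disagreement on a constant fraction of inputs --- is not a technicality to be patched with the Pauli second-moment identity; it is the key idea, and the paper resolves it with an error-correcting code. Concretely, the paper partitions $\{0,1\}^n$ into $3n$ blocks with indicator functions $f_j$, lets $\alpha$ range only over codewords $g(k)$ of a Reed--Solomon code $RS(3|k|,|k|)$ with relative minimum distance $\geq 1/3$, and defines $h_k(x)=\sum_j [g(k)]_j f_j(x)$; since each block selects one codeword coordinate and the blocks have equal size, codeword Hamming distance translates directly into disagreement on at least $2^n/3$ inputs. $\BQP$-hardness is then obtained by tensoring on a separate register carrying an arbitrary $\BQP$ computation read out by the $Z_1$ factor of $O(\bm\alpha)=Z_1\otimes\sum_j\alpha_j Z_{nj+1}$. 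Without some such coding-theoretic (or equivalent) mechanism, your ``tag redundancy'' argument does not obviously yield a \emph{constant} fraction of sign flips, especially once the class is made expressive enough to contain arbitrary $\BQP$ verifier circuits --- the tension you yourself identify.

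There is also a structural problem with your template: you swap the roles of $x$ and $\alpha$ relative to the paper, putting the parameter into the state ($U_\alpha\ket{0}$) and the input into the observable ($P_x$). With a \emph{fixed} state per concept and $x$ only selecting which observable to measure, it is unclear how $f^{\alpha^*}$ can be the characteristic function of a $\mathsf{PromiseBQP}$-complete language: the computation on $x$ must somehow be baked into a single fixed state, and for natural encodings (e.g.\ a superposition over $(x,f(x))$ pairs) the Pauli expectation values become exponentially small, violating your assumed promise gap. Indeed the paper's own discussion of Hamiltonian learning flags this ``flipped concepts'' regime as one where concepts can collapse into $\mathsf{P/poly}$ when the observable family is linear in the input. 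The paper's orientation --- $x$ drives the state preparation, $\alpha$ parameterizes a linear observable --- is what makes both the $\BQP$-completeness of a concept and the quantum identifiability (reading off $\alpha_j$ from samples in block $j$, as in the algorithm of \cite{molteni2024exponential}) go through. So the reduction to Theorem~\ref{thm: main theorem} is correct, but the construction you propose would need to be substantially reworked along these lines before the corollary follows.
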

The proof, along with the specific instance of the learning task that demonstrates the advantage, is provided in Appendix~\ref{app: implication observables}.

\section{Discussion}
In this paper, we have demonstrated learning separations for the identification task in the case of quantum target functions. In the case of concept classes consisting purely of quantum functions, provable quantum advantages in the PAC framework have so far been established only at the level of the evaluation step, i.e., when the learning algorithm is also required to efficiently evaluate the learned hypothesis. In this work, we instead analyze the hardness for a classical algorithm to identify a correct hypothesis within the given concept class itself (noting that, if no restriction is placed on the hypothesis class, a classical learner can always solve the task by outputting the circuit of the quantum learner with hardwired data, as shown in~\cite{gyurik2023exponential}). In this sense, our results provide a first quantum–classical separation for learning concept classes of quantum functions in the setting of proper PAC learning.

A natural question is whether our results extend to existing, well-studied learning tasks. In Appendix~\ref{app: implication of our results}, we examine in detail three scenarios where our results may be applicable. First, we show that our theorems directly apply to the task of learning observables introduced in~\citep{molteni2024exponential}, for which a learning separation had previously been established only at the evaluation stage and we outline a potential benchmark against a classical dequantized algorithm. We then consider two physically relevant tasks, such as Hamiltonian learning and learning the order parameter, that can naturally be formulated as identification problems. In one common formulation of Hamiltonian learning, one is given access to measurements on the Gibbs state of an unknown Hamiltonian at a fixed temperature and asked to recover the Hamiltonian. This can be cast as an identification problem by defining a concept class whose elements are functions mapping measurement settings to expectation values on the Gibbs state of the target Hamiltonian. In this sense, Hamiltonian learning is closely related to our identification task, although there are several important differences, which we spell out in Appendix~\ref{app: implication of our results}. Importantly, since Hamiltonian learning is known to be classically solvable~\cite{anshu2020sample,haah2024learning}, this suggests that the additional structural assumptions we impose on concept classes for our hardness theorems ($c$-distinctness or average-case smoothness) are not easily relaxed. As a consequence of our results, a general hardness theorem for identification without such constraints, together with the classical tractability of Hamiltonian learning, would lead to unexpected consequences in complexity theory.

We further elaborate on this, as well as on the connection to learning an unknown order parameter, in Appendix~\ref{app: implication of our results}.

\section*{Reproducibility Statement}

This manuscript presents exclusively theoretical results. To ensure reproducibility, we provide fully detailed proofs of all theorems claimed in the main text, which are included in the Appendix. Additionally, key proof ideas are outlined within the main body of the paper for further clarity.

\bibliography{iclr2026_conference}
\bibliographystyle{iclr2026_conference}

\newpage
\appendix

\section{Related works}\label{app: related works}
\subsection{Previous results on learning separations for quantum functions}
The study of different types of learning separations in quantum machine learning was initiated in~\cite{gyurik2023exponential}, distinguishing between advantages in the identification step and those arising solely from the evaluation of the target function. For $\BQP$-complete functions, the authors showed that learning separations straightforwardly follow under the assumption that there exists an input distribution $\mathcal{D}$ and a $\BQP$ language $L$ such that $(L, \mathcal{D}) \not\subseteq \mathsf{HeurP/poly}$. This separation relied on requiring the learning algorithm to evaluate the learned function on new inputs, thus considering the evaluation step only. The paper left open the question of whether a stronger result could be established by proving that even identifying the correct target function is classically hard for $\BQP$-complete concepts, which we address in this paper. In~\cite{molteni2024exponential}, stronger learning separations were established for $\BQP$ functions under the widely studied assumption that $\BQP \not\subseteq \mathsf{P/poly}$ (exactly, not with respect to heuristic conditions as in~\cite{gyurik2023exponential}), by reverting to standard PAC learning where the learning process must be successful for all input distributions (as opposed to settings where the distribution is fixed). Additionally, there the authors proposed a quantum algorithm capable of correctly identifying and evaluating the target concept in the nontrivial case of a superpolynomially large concept class, as opposed to~\cite{gyurik2023exponential} where only polynomially large classes were studied. However, even in this case, the learning separation was demonstrated only for the evaluation step. Learning separations for superpolynomially large concept classes were also presented in~\cite{yamasaki2023advantage}, though they are based on assumptions about heuristic classes similar to~\cite{gyurik2023exponential} and were not focused on finding physically motivated problems, which was instead the goal in~\cite{molteni2024exponential}. In Appendix~\ref{app: implication observables}, we show that our result directly applies to the physically relevant learning problem considered in~\cite{molteni2024exponential}.

\subsection{Previous results on hardness of the identification task}
Hardness results for the identification step are already known for certain tasks and it is important to highlight why the proof strategies used in those cases do not apply to the learning task considered in this paper, which involves quantum functions.
In particular, hardness results for identification are already known in the context of learning DNF formulas~\citep{kearns1994introduction} and for specific cryptographic functions, such as modular exponentiation~\citep{gyurik2023exponential} and discrete cube root~\citep{jerbi2024shadows}. 
In all of these examples, a key element underlying the hardness proofs is the existence of a representation of the target functions that allows efficient computation of properties of interest.
In the case of learning DNF formulas,~\citep{kearns1994introduction} showed that DNF formulas cannot be learned in the proper PAC setting—that is, given samples of variable assignments along with their evaluations under a DNF formula, it is not possible to reconstruct the original formula. While the proof in~\citep{kearns1994introduction} reduces the problem of learning DNF formulas to the $\NP$-complete problem of graph coloring, the following argument provides what we see as an intuitive explanation of how the key element mentioned above plays a role in the proof. It is well known that checking the satisfiability of CNF formulas is $\NP$-hard, whereas for DNF formulas this task is classically easy. Moreover, since DNF formulas are universal for Boolean functions, any CNF formula can be equivalently expressed as a DNF\footnote{Note that in general the conversion requires an exponential size DNF. This is the reason why our intuition cannot be trivially sharpened into a proof.}. Of course, the transformation from CNF to DNF is itself $\NP$-hard. However, if DNF formulas were learnable in the proper PAC sense, then a classical learner could, given samples from any target CNF, reconstruct the equivalent DNF representation and thereby decide satisfiability efficiently.

In the case of cryptographic functions the existence of an easy-to-compute representation implies a mapping between two representations of the same set of functions: one that is computationally intractable and another that is classically efficient to compute. If identifying the target concept in the second representation were classically efficient, it would also allow the evaluation of the target function in its hard-to-compute representation. Examples of this are provided in~\citep{kearns1994cryptographic,gyurik2023exponential} based around the discrete cube root function which is intractable given the input, but has an equivalent representation as modular exponentiation by some (hard-to-compute) exponent. 
Furthermore, in the case of cryptographic functions, thanks to the additional property of random generatability, a classical algorithm could easily generate labeled data and solve the identification problem using the hypothesis class corresponding to the easy-to-compute representation of the function. Once the correct concept is identified, the algorithm could then evaluate it, thereby violating the cryptographic assumptions of the considered set of functions. 
 
For the $\BQP$-complete functions analyzed in this paper, our findings on the hardness of random generatability already rule out the possibility of employing similar proof techniques to establish the hardness of identification. Finally, we observe that our proof does not rely on the existence of a simple representation for the target $\BQP$ functions. Rather, in our case, we directly tackle the property of the learning algorithm to find the label from data, by exploiting the fact this can be ``inverted'' in the $\PH$.
This approach is then really relying on the strongest properties of quantum functions as it could not work for cryptographic functions which typically are in the $\PH$.

\subsection{Relation to previous works on hardness of sampling}
In the work~\cite{aaronson2014equivalence} the authors showed an equivalence between sampling and searching problem. In particular, they showed that if classical computers could solve every search problem quantum solvable, then classical computer could also sample from the output distribution of any quantum circuit. While this might seem related to our results about the hardness of random generability of quantum functions in Section~\ref{sec: hardness rv}, we note that the problems addressed are fundamentally different. In particular, our results on classical intractability do not regard the whole class of distribution realizable by quantum circuits (i.e, $\mathsf{SampBQP}$),  nor do we make claims about the hardness of that class. We only consider the distributions given by the tuple $(x,f(x))$ with $f$ the characteristic function of a $\BQP$ language. Such distributions do not constitute the whole of the class $\mathsf{SampBQP}$, and thus the capacity to classically sample from them does not straightforwardly imply the capacity to compute (all) quantum functions.
Indeed, because we are considering a much more restricted sampling capability, our results are in fact weaker. We do not claim that sampling from these distributions would imply $\mathsf{FBQP}=\mathsf{FBPP}$. Instead, we show a weaker result: if a classical algorithm could sample pairs $(x,f(x))$, then it would imply that $\mathsf{BQP}\subseteq\mathsf{HeurBPP^{NP}}$.

\subsection{Consistency model learning}\label{app: consistency model}
As argued in the text, the ability of the learning algorithm to detect invalid dataset is crucial in the construction of proof of Theorem~\ref{thm: identification verifiable hard} for the verifiable identication task. Although it is a strong assumption (which we will remove in Section~\ref{sec: identification hardness properPAC} for certain families of concept classes) we can still argue that the verifiable case of Def.~\ref{def: identification verifiable} is of interest. An important reason is that Def.~\ref{def: identification verifiable} perfectly captures the framework of learning in the consistency model present in the literature~\cite{Balcan,Shapire,mohri2018foundations,kearns1994introduction}. In~\cite{kearns1994introduction}, a concept $f^\alpha$ is defined to be consistent with a dataset $T^\alpha=\{(x_\ell,y_\ell)\}_{\ell=1}^B$ if $y_\ell=f^\alpha(x_\ell)$ for every $\ell=1,...,B$. Based on that, we give the following definition of learning in the consistency model framework, which includes the ability of the learning model to distinguish invalid datasets. As we explain below, this assumption aligns with the general case found in the literature~\cite{kearns1994introduction,schapire1990strength,mohri2018foundations}, where only efficient hypothesis classes are considered.

\begin{definition}[Consistency model learning~\cite{Balcan,Shapire}]\label{def: consistency model}
    We say that an algorithm $\mathcal{A} $ learns the class $\mathcal{F}=\{f^\alpha:\{0,1\}^n\rightarrow\{0,1\}\;|\;\alpha\in\{0,1\}^m, m=\mathrm{poly}(n)\}$ in the consistency model if, given any set of labeled examples $T$, the algorithm produces a concept $f \in \mathcal{F}$ that is consistent with $T$, if such a concept exists, and outputs ``there is no consistent concept'' otherwise. The algorithm runs in polynomial time (in the size of $T$ and the size $n$ of the examples).
\end{definition}

Given the definition of learning in the consistency model in Def.~\ref{def: consistency model}, it is clear that an algorithm $\mathcal{A}_B $ solves the identification task in Def.~\ref{def: identification verifiable} for a given concept class $\mathcal{F}$ if and only if it learns $\mathcal{F}$ in the consistency model.

\section{Definitions from complexity theory}\label{app: definitions}
To enhance readability, we provide a glossary of the most commonly used symbols in this paper.
\begin{table}[h]
\centering
\begin{tabular}{ll}
\hline
\textbf{Symbol} & \textbf{Meaning} \\
\hline
$ f^\alpha(x) $ & Concept function parameterized by $\alpha$ \\
$ \mathcal{F} $ & Concept class (set of labeling functions $\calF=\{f^\alpha\;|\;\alpha\in\{0,1\}^m\}$ ) \\
$ \mathcal{D} $ & Distribution over input space of bitstrings $x\in\{0,1\}^n$ \\
$ \mathsf{Unif} $ & Uniform distribution over input space of bitstrings $x\in\{0,1\}^n$ \\
$ \mathsf{BQP} $ & Class of decision problems solvable by a quantum computer in polynomial time with bounded error \\
$ \mathsf{NP} $ & Class of decision problems verifiable in polynomial time\\
$ \mathsf{PH} $ & Polynomial Hierarchy \\
$ \mathsf{P^{NP}} $ & Class of problems solvable in $\mathsf{P}$ with access to an $\NP$ oracle \\
$ \mathsf{HeurBPP} $ & Class of problems heuristically solvable in $\BPP$ \\
$ \mathsf{HeurBPP^{NP}} $ & Class of problems solvable in $\mathsf{HeurBPP}$ with access to an $\NP$ oracle \\
$ \mathsf{HeurBPP^{NP^{NP^{NP}}}} $ & Class of problems solvable in $\mathsf{HeurBPP}$ with access to an $\NP^{\NP^{\NP}}$ oracle \\

\hline
\end{tabular}
\caption{Glossary of commonly used symbols and complexity classes.}
\label{tab:glossary}
\end{table}

\subsection{Tools from complexity theory}
All our hardness results are based on the assumption that the class $\BQP$ is not contained in (a low level of) the polynomial hierarchy ($\PH$). See Appendix~\ref{app: evidence bqp no ph} for arguments supporting this assumptions.
Specifically, given a $\BQP$ language $\mathcal{L}$, we define a function $f$ which ``correctly decides'' $\mathcal{L}$ as follows. 

\begin{definition}[$\BQP$ function $f$]\label{def: bqp function} 
We refer to a function $f:\{0,1\}^n \rightarrow \{0,1\}$ as a $\BQP$ function if there exists a language $\calL \in \BQP$ such that $f$ is its characteristic function. In particular:
\begin{equation}
f(x)=
    \begin{cases}
      f(x)=0 \text{ if $x\notin \calL$ }\\
      f(x)=1 \text{ if $x\in \calL$ }
    \end{cases}\,.
\end{equation}

\end{definition}

Where, as previously said, we use $f$ to refer to a uniform family of functions, one for each input size.
In our hardness results, we assume the existence of a $\BQP$ language $\mathcal{L}$ such that, under the uniform distribution $\mathsf{Unif}$ over inputs $x \in \{0,1\}^n$, the pair $(\mathcal{L}, \mathsf{Unif})$ is not contained in $\mathsf{HeurBPP}$ with access to $\NP$ oracles in the $\PH$.
We next provide a precise definition for the complexity class $\mathsf{HeurBPP}$ (for more details see~\cite{bogdanov2006average}).
To define heuristic complexity classes, we first need to consider the so-called distributional problems (which should not be confused with learning distributions in the context of unsupervised learning).

\begin{definition}[Distributional problem]
\label{def:distr_problem}
A distributional problem $(\calL, \mathcal{D})$ consists of a language $L\subseteq\{0,1\}^*$ and a family of distributions $\mathcal{D} = \{\mathcal{D}_n\}_{n \in \mathbb{N}}$ such that $\mathrm{supp}(\mathcal{D}_n) \subseteq \{0,1\}^n$.
\end{definition}
 We are now ready to provide a precise definition of the class $\mathsf{HeurBPP}$.
\begin{definition}[Class $\mathrm{HeurBPP}$]\label{def: heurBPP} 
A distributional problem $(\mathcal{L},\mathcal{D})$ is in $\mathsf{HeurBPP}$ if there exists a polynomial-time classical algorithm such that for all $n$ and $\epsilon\geq 0 $:
\begin{equation}\label{eq:heuristic condition}
    \text{Pr}_{x\sim \mathcal{D}} \left[\text{Pr}[\mathcal{A} (x,0^{1/\epsilon})=\calL(x)\geq \frac{2}{3}\right]\geq 1-\epsilon,
\end{equation}
in the above the inner probability is taken over the internal randomization of $\mathcal{A}_B $.
    \end{definition}

Finally, we observe that by slightly modifying the proofs in the theorems, our hardness results will also hold under a similar but distinct assumption. Specifically, in Appendix~\ref{appendix: identification non verifiable} we will prove our hardness results under the assumption that there exists a classically samplable distribution $\mathsf{Unif}_{\calL_P}$ such that $(\mathsf{PromiseBQP},\mathsf{Unif}_{\calL_P})$ is not contained in $\mathsf{HeurBPP}$ with access to $\NP$ oracles in the $\PH$.

\subsection{Promise problems}\label{app: promise}
We note that our results can also be proven under similar but different assumptions. In particular, by slightly modifying the proofs our main hardness results hold even under the assumption that $\mathsf{PromiseBQP}$ problems cannot be solved in $\mathsf{HeurBPP}$ with access to $\NP$ oracles. We now provide the necessary definitions to prove these versions of our results, which we will do in the following sections.
We notice that, unlike $\BQP$, the class $\mathsf{PromiseBQP}$ allows for complete problems. We note that complete problems for $\mathsf{PromiseBQP}$ are known to emerge in a number of physical processes, we list few of them in Appendix~\ref{app: practical implication}
\begin{definition}[$\mathsf{PromiseBQP}$ function $f$]\label{def: PromiseBQP function} 
We refer to a function $f:\{0,1\}^n \rightarrow \{0,1\}$ as a $\mathsf{PromiseBQP}$ function if there exists a language with a promise $\calL_P=(\calL_{\text{yes}},\calL_{\text{no}})\subset\{0,1\}^n$ in $\mathsf{PromiseBQP}$ such that $f$ is the characteristic function of $\calL_P$. In particular:
\begin{equation}
f(x)=
    \begin{cases}
      f(x)=0 \text{ if $x\notin \calL_{\text{yes}}$ }\\
      f(x)=1 \text{ if $x\in \calL_{\text{yes}}$ }
    \end{cases}\,.
\end{equation}
   Additionally, we say that $f$ is a $\BQP$-complete\footnote{It is known that $\BQP$ do not have complete problems. Here, we will abuse notation and actually refer to $\mathsf{PromiseBQP}$ problems.} function if the associated language $\calL$ is complete for $\mathsf{PromiseBQP}$. 
\end{definition}

Notice that for a $\mathsf{PromiseBQP}$ function $f$, $f(x)=0$ both if $x$ belongs to the NO subset of the promise, i.e. $x\in\calL_{\text{no}}$, and if $x$ does not belong to the promise subset at all, i.e. $x\not\in \calL_{\text{yes}}\cup\calL_{\text{no}}$. In general the promise subset $\calL_{\text{yes}}\cup\calL_{\text{no}}$ contains an exponentially small fraction of all the possible input $x\in\{0,1\}^n$ and consequently $f=0$ for the majority of $x\in\{0,1\}^n$. This means that correctly evaluating $f$ on average over the input, for instance under the uniform distribution over all $x \in \{0,1\}^n$, becomes trivial unless the error is exponentially small. We refer to~\cite{gyurik2023exponential} for a more detailed discussion on this. We will prove our hardness results by showing that a classical machine with access to an $\NP$ oracle can achieve achieve average-case correctness for evaluating a $\mathsf{PromiseBQP}$ function $f$ with respect a given input distribution. For $\mathsf{PromiseBQP}$ functions, to ensure the task remains non-trivial, we will state our hardness results with respect to the input distribution $\mathsf{Unif}_{\calL_P}$ which is defined as the uniform distribution over a subset of the promise set of input $x\in\calL_{\text{yes}}\cup\calL_{\text{no}}$. In our hardness results, we assume the existence of a $\mathsf{PromiseBQP}$ complete language $\calL_P$ such that there exists a distribution $\mathsf{Unif}_{\calL_P}$ of hard-to-evaluate but classically samplable instances.

\begin{definition}[Distribution $\mathsf{Unif}_{\calL_P}$]\label{def: distr promise}
Let $\calL_P=(\calL_{\text{yes}},\calL_{\text{no}})$ be a $\mathsf{PromiseBQP}$ complete language. We call $\mathsf{Unif}_{\calL_P}$, if it exists, the uniform distribution over a subset $\Pi\subset\calL_{\text{yes}}\cup\calL_{\text{no}}$ such that the following holds:
\begin{itemize}
    \item $\mathsf{Unif}_{\calL_P}$ is efficiently classically samplable.
\end{itemize}
\end{definition}

For our results in Theorem~\ref{thm: imperfect rv ph} and Theorem~\ref{thm: hardness approx-verif promise} and~\ref{thm: identification verifiable hard} we will then assume the existence of a $\mathsf{PromiseBQP}$ language $\calL_P$ for which there exists a $\mathsf{Unif}_{\calL_P}$ such that $(\calL_P,\mathsf{Unif}_{\calL_P})$ is not contained in $ \mathsf{HeurBPP}$ with access to $\NP$ oracles in the $\PH$.

\subsection{Advice classes}
In this section we provide further useful definitions. We first define the complexity class $\mathsf{P/poly}$ which appears in Proposition~\ref{prop:verification hardness} and Theorem~\ref{thm: hardness verifiable advice}.
\begin{definition}[Polynomial advice~\cite{arora2009computational}]
\label{def:p/poly}
A problem $L: \{0,1\}^* \rightarrow \{0,1\}$ is in $\mathsf{P/poly}$ if there exists a polynomial-time classical algorithm $\mathcal{A}_B $ with the following property: for every $n$ there exists an advice bitstring $\alpha_n \in \{0,1\}^{\mathrm{poly}(n)}$ such that for all $x \in \{0,1\}^n$:
\begin{align}
\label{eq:advice}
     \mathcal{A}_B (x, \alpha_n) = L(x).
\end{align} 

\end{definition}

We also provide the definition of the class $\mathsf{BPP/samp^{BQP}}$, which will appear in Theorem~\ref{thm: rv bpp qgen}.
\begin{definition}[$\mathsf{BPP/samp^{BQP}}$]
\label{def:bpp/samp}
A problem $L: \{0,1\}^* \rightarrow \{0,1\}$ is in $\mathsf{BPP/samp^{BQP}}$ if there exists a polynomial-time quantum algorithm $\mathcal{S}$ and a polynomial-time classical randomized algorithm $\mathcal{A} $ such that for every $n$:
\begin{itemize}
    \item $\mathcal{S}$ generates random instances $x \in \{0,1\}^n$  sampled from the distribution~$\mathcal{D}_n$.
    \item $\mathcal{A} $ receives as input $\mathcal{T} = \{(x_i, L(x_i)) \mid x_i \sim \mathcal{D}_n\}_{i=1}^{\mathrm{poly}(n)}$ and satisfies for all $x \in \{0,1\}^n$:
\begin{align}
\label{eq:samp}
    \mathsf{Pr}\big(\mathcal{A} (x, \mathcal{T}) = L(x)\big) \geq \frac{2}{3},
 \end{align} 
 where the probability is taken over the internal randomization of $\mathcal{A} $ and $\mathcal{T}$. 
\end{itemize}
\end{definition}

\section{Evidence that $\BQP\not\subset\PH$ }\label{app: evidence bqp no ph}
In this section, we provide a brief discussion on the primary assumptions underlying our work, which are derivatives and versions of $\BQP\not\subset\PH$. The first main result in this direction is given in~\citep{aaronson2010bqp}, where an oracle separation between the relational version of the two classes was proven. Specifically, Aaronson proved that the relation problem \textit{Fourier Fishing}, a variant of the \textit{Forrelation} problem, exhibits an oracular separation $\mathsf{FBQP}^{A}\not\subset \mathsf{FBPP}^{\PH^{A}}$. In the same paper Aaronson motivated the study of oracle separation for $\BQP$ and $\PH$ as lower bounds in a concrete computational model, claiming them as a natural starting point for showing evidence of $\BQP\not\subset\PH$.
Almost ten years later, in the remarkable work~\citep{raz2022oracle}, the authors managed to prove an oracle separation for the decision versions of the classes. Namely they proved the existence of an oracle $A$ such that $\BQP^A\not\subset\PH^A$. 
Although an unconditional proof of separation between $\BQP$ and $\PH$ is not likely to appear anytime soon\footnote{Note that any proof of $\BQP\not\subset\PH$ would immediately imply the hard-to-prove claim that $\BQP \neq \BPP$.}, the assumption $\BQP\not\subset\PH$ is generally considered reasonable.

Most of the results in this paper furthermore rely on the assumption that $\BQP$ languages cannot be decided correctly on average by algorithms in $\mathsf{HeurBPP}$, even when given access to oracles for the polynomial hierarchy. This is a stronger assumption than simply $\BQP \not\subset \PH$, but we believe not unreasonable.
However, we note that a workaround would be possible by requiring a stronger notion of learnability, in particular by moving from the less common setting of distribution-specific PAC of Def.~\ref{def:pac} to ``standard'' PAC, where the learner is required to output a consistent hypothesis for every input distribution. In this case, for any given input $x$, one can apply the classical identification algorithm to a distribution concentrated solely on $x$. To evaluate the target concept on this input, the same strategy used in the proof of Theorem~\ref{thm: hardness approx-verif} can be employed: namely, inverting the identification algorithm (which now works for the distribution focused on $x$) within $\PH$. This would enable evaluation of the target function at any point $x$, contradicting the assumption that $\BQP \not\subseteq \PH$.

If we choose to remain within the distribution-specific PAC setting, we however believe our assumptions are still widely accepted. Some evidence in this direction comes from sampling problems, where the proof of the best-known separation of $\mathsf{SampP}$ and $\mathsf{SampBQP}$ \citep{aaronson2011supremacy,marshall2024bounded} is identical for the heuristic case and produces the same collapse.
To be more exact, suppose that $\mathsf{SampBQP}$ can be decided in some heuristic level of the polynomial hierarchy, follow the proof of \citep{aaronson2011supremacy} for a heuristic-equivalent theorem until we get to the definition of $\mathsf{(Heur)GPE_{\pm}}$, at which point we note that the preexisting definition of $\mathsf{GPE_{\pm}}$ is already heuristic, so our $\mathsf{HeurGPE_\pm}$ is equal to $\mathsf{GPE_{\pm}}$. Therefore, assuming that $\mathsf{SampBQP}$ is in some heuristic level of the sampling analogue of the polynomial hierarchy would still imply the collapse of the standard polynomial hierarchy to some level (standard assumptions from quantum supremacy arguments notwithstanding).

From these arguments, however, no analogous claims regarding the decision (and distributional) variants $\BQP$ and $\mathsf{Heur}\PH$ can be proven using known techniques. 
For this reason, we stand by the following careful claim: there is no reason to believe that $\BQP$ is in $\mathsf{Heur}\PH$ (relative to relevant distrubutions). 

As a final remark, we do highlight that our results do not require considering the whole $\PH$. Specifically it is important to note that for the hardness of random generatability of quantum functions in Section~\ref{sec: hardness rv} we only need to assume that $\BQP$ is not in the second level of the $\PH$, while the assumption extends up to the fourth level for the hardness of the identification problem in Section~\ref{sec: identification hardness properPAC}.

\section{Proofs regarding random generatability}\label{appendix: RV}

\subsection{Hardness of random generatability based on the assumption $\mathsf{BPP\slash samp^{BQP}}\not\subseteq \mathrm{BPP}$}
In this section, we demonstrate that achieving exact random generatability for quantum functions would lead to $\BPP\slash\mathsf{samp^{BQP}}\subseteq\BPP$. Similar to Theorem~\ref{thm: exact rv ph}, we show here classical hardness of the task defined in Def.~\ref{def:exact rv} for quantum functions, though based on a different complexity-theoretic assumption. The proof is straightforward and relies on the idea that if a classical machine could efficiently generate samples for any quantum function, passing those samples as advice would offer no additional advantage.
\begin{restatable}[Exact RG implies $\BPP\slash\mathsf{samp^{BQP}}\subseteq\BPP$]{theorem}{rvbppqgen}\label{thm: rv bpp qgen} If $\BPP\slash\mathsf{samp^{BQP}}\not\subseteq\BPP$, then there exists a quantum function $f:\{0,1\}^n\rightarrow\{0,1\}$ as given Def.~\ref{def: PromiseBQP function} which is not exact random verifiable as in Def.~\ref{def:exact rv} with a classical algorithm $\calA $.
\end{restatable}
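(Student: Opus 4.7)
The plan is to prove the contrapositive: assume that every $\mathsf{PromiseBQP}$ function in the sense of Def.~\ref{def: PromiseBQP function} admits a classical exact random generator in the sense of Def.~\ref{def:exact rv} under the distribution of interest, and deduce $\BPP\slash\mathsf{samp^{BQP}}\subseteq\BPP$. The key enabling observation is that $\BPP\slash\mathsf{samp^{BQP}}\subseteq\BQP$: given any $L\in\BPP\slash\mathsf{samp^{BQP}}$ with witnesses $(\calA,\mathcal{S})$ (a classical decider and a quantum labeled-sample generator), a quantum polynomial-time machine can itself run $\mathcal{S}$ polynomially many times to produce the labeled training set $\calT=\{(x_i,L(x_i))\}_{i}$ and then simulate $\calA$ on $(x,\calT)$. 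Hence every such $L$ is the characteristic function of a $\BQP$ language, i.e.\ a quantum function, so the contrapositive hypothesis applies to it.

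Next I would apply the hypothesis to $L$ under the specific distribution $\mathcal{D}_n$ produced by $\mathcal{S}$. This yields a uniform classical polynomial-time algorithm $\calA^{\ast}$ that, given a description of $L$ and a uniformly random string $r$, deterministically outputs $(x_r,L(x_r))$, with the marginal of $x_r$ under uniform $r$ equal to $\mathcal{D}_n$. Running $\calA^{\ast}$ with fresh randomness $\mathrm{poly}(n)$ times generates a training set $\calT$ that is identically distributed to the one originally supplied by $\mathcal{S}$. Feeding $(x,\calT)$ into the original decider $\calA$ then computes $L(x)$ with success probability at least $2/3$, using only classical polynomial time and randomness. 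Since this construction works for every $L\in\BPP\slash\mathsf{samp^{BQP}}$, we obtain $\BPP\slash\mathsf{samp^{BQP}}\subseteq\BPP$, which is the contrapositive of the theorem.

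The only place where care is needed is in matching the distributions: Def.~\ref{def:exact rv} parametrizes exact random generatability by a specific $\mathcal{D}$, so the hypothesis of the contrapositive should be read as ``no quantum function fails to be exact random generatable under the relevant distribution,'' and the conclusion of the theorem should be read as the existence of a pair $(f,\mathcal{D})$ with no classical exact random generator. The distribution $\mathcal{D}_n$ on which the generator is needed is by construction the quantum-samplable one attached to the $\BPP\slash\mathsf{samp^{BQP}}$ witness of $L$, and failure of random generatability there produces exactly the pair required in the theorem's conclusion. Beyond this bookkeeping there is no genuine obstacle; every remaining step is a direct composition of two classically efficient algorithms, and the error budget is unchanged because $\calA^{\ast}$ is assumed to be exact.
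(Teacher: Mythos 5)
Your proposal follows essentially the same route as the paper's proof: argue the contrapositive, use the assumed classical exact random generator to classically reproduce the labeled training set $\mathcal{T}$ that the quantum sampler would have supplied, and then run the classical decider $\mathcal{A}$ of Def.~\ref{def:bpp/samp}, concluding $\mathsf{BPP/samp^{BQP}}\subseteq\mathsf{BPP}$. You are in fact more careful than the paper on the one point that needs care, namely matching the generator's distribution to the sampler's $\mathcal{D}_n$ rather than to the uniform distribution. The only step to double-check is your ``key enabling observation'' that $\mathsf{BPP/samp^{BQP}}\subseteq\mathsf{BQP}$: under the literal Def.~\ref{def:bpp/samp} the labels $L(x_i)$ in $\mathcal{T}$ are supplied by assumption rather than computed by the quantum sampler $\mathcal{S}$ (only the instances $x_i$ are quantum-generated), so the inclusion is not immediate --- for a language whose membership depends only on $|x|$, a single labeled sample can reveal a bit that no $\mathsf{BQP}$ machine could compute; that said, the paper's own two-line proof implicitly makes the same identification of every $g\in\mathsf{BPP/samp^{BQP}}$ with a quantum function, so this is a shared imprecision rather than an error you introduced.
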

\begin{proof}
    Suppose $\forall f$ in Def.~\ref{def: PromiseBQP function} there exists a randomized algorithm $\mathcal{A}_f$ such that $\mathcal{A}_f(r)\rightarrow (x,f(x))$ with $x\sim\mathrm{Unif}(\{0,1\}^n)$, for $r$ sampled uniformly at random. Then every function $g\in\mathsf{BPP\slash samp^{BQP}}$ can be computed in $\BPP$ by first generating the samples $(x,g(x)) $ using $\mathcal{A} _g$.
\end{proof}

We now argue that the hypothesis of the Theorem~\ref{thm: rv bpp qgen} are indeed reasonable. Indeed, a separation between the class $\BPP$ and $\BPP\slash\mathsf{samp^{BQP}}$ can be proven if we assume the existence of a sequence of quantum circuits $\{U_n\}_n$, one for each size $n$, such that the $Z$ measurement on the first qubit is hard to decide classically. A proof idea for the following theorem is stated in~\cite{huang2021power}, and we include here the whole proof for completeness.
\begin{restatable}[$\BPP\slash\mathsf{samp^{BQP}}\not\subset\BPP$, unless $\BPP=\BQP$ for unary languages]{theorem}{rvbppqgen}\label{thm: bpp vs bpp qgen} If there exists a uniform sequence of quantum circuits $\{U_n\}_n$, one for each size $n=1,2,...$ such that the $Z$ measurement on the first qubit is hard to decide classically, then $\BPP\slash\mathsf{samp^{BQP}}\not\subset\BPP$.
\end{restatable}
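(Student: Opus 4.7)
The plan is to exhibit a single unary language $L$ that sits in $\BPP\slash\mathsf{samp^{BQP}}$ but, by the hypothesis of the theorem, lies outside $\BPP$, thereby giving the desired separation by contradiction.

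First, I would use the hypothesized family $\{U_n\}_n$ to define the unary language $L\subseteq\{1\}^*$ by declaring $1^n\in L$ iff the probability of obtaining outcome $1$ when measuring the first qubit of $U_n|0\rangle^{\otimes n}$ in the $Z$ basis is at least $2/3$, and $1^n\notin L$ iff this probability is at most $1/3$. The assumption that the $Z$-measurement of $U_n$ is classically hard to decide is exactly the statement that $L\notin\BPP$ (up to promise formulation: the standard bounded-error promise is implicit in the hypothesis, otherwise ``hard to decide'' would be vacuous).

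Second, I would verify $L\in\BPP\slash\mathsf{samp^{BQP}}$. Because the sampler $\mathcal{S}$ in Def.~\ref{def:bpp/samp} is allowed to be an arbitrary polynomial-time quantum algorithm, on input length $n$ it can run $U_n$ internally $k=\mathrm{poly}(n)$ times, take the majority of the $k$ measured bits, and obtain a bit $\tb$ with $\Pr[\tb=L(1^n)]\geq 1-2^{-\Omega(n)}$ by a Chernoff bound. It then outputs the pair $(1^n,\tb)$, so that the training set $\mathcal{T}=\{(1^n,\tb_i)\}_{i=1}^{\mathrm{poly}(n)}$ handed to $\mathcal{A}$ satisfies $\tb_i=L(1^n)$ for every $i$ with overwhelming probability. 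The classical algorithm $\mathcal{A}$ simply reads any label from $\mathcal{T}$ (or takes a further majority) and outputs it, deciding $L$ correctly with probability at least $2/3$. Combining the two items, if $\BPP\slash\mathsf{samp^{BQP}}\subseteq\BPP$, then $L\in\BPP$, contradicting the hypothesis.

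The only subtle point is the formal reading of Def.~\ref{def:bpp/samp}: the labels in $\mathcal{T}$ are written as the \emph{true} $L(x_i)$, so one must argue that the quantum sampler can actually emit correct labels rather than merely approximate ones. This is precisely where the quantum advice helps, and here it is trivial because the language is unary and internal amplification of $U_n$ inside $\mathcal{S}$ makes the emitted labels correct up to negligible error. Apart from this bookkeeping, no further ingredients are needed; the heart of the argument is just that the $\mathsf{samp^{BQP}}$ advice lets the classical side ``outsource'' a single hard bit per input length to the quantum sampler.
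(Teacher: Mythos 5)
Your proposal is correct, but it takes a more direct route than the paper. The paper (following \citep{huang2021power}) does not use the unary language itself as the separating language: it first defines $L_{BQP}=\{1^n \mid \bra{0^n}U_n^\dagger Z U_n\ket{0^n}\geq 0\}$, then pads it with a dense easy language $L_{\mathrm{easy}}\in\BPP$ (assumed to have both a yes- and a no-instance at every length) to build a language $L$ over all of $\{0,1\}^*$ whose length-$n$ slice equals $L_{\mathrm{easy}}$ or its complement according to whether $1^n\in L_{BQP}$; a single labeled sample $(x_0,y_0)$ then lets the classical learner extract the hard bit and decide every $x$ of that length by classically testing membership in $L_{\mathrm{easy}}$. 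You instead place the unary language directly into $\BPP\slash\mathsf{samp^{BQP}}$ via a point-mass distribution on $1^n$, which is simpler and suffices for the stated separation; the paper's padding buys a non-unary, dense witness language where the learner must still do nontrivial work on each input, but it is not needed for the bare containment-plus-hardness argument. One small remark: your worry about the sampler having to \emph{emit} correct labels rests on a slight misreading of Def.~\ref{def:bpp/samp} — there the sampler $\mathcal{S}$ only generates the instances $x_i$, and the training set is formed with the \emph{true} labels $L(x_i)$ by definition, so no internal amplification of $U_n$ is required; your more conservative reading still yields a valid (indeed stronger) argument, so this does not affect correctness. As in the paper, the identification of ``the $Z$-measurement is classically hard to decide'' with ``$L\notin\BPP$'' implicitly assumes the bounded-error promise is satisfied at every length, a technicality both proofs gloss over.
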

\begin{proof}
    As shown in~\cite{huang2021power}, such a sequence of circuits would define a unary language
    \begin{equation}
    L_{BQP}=\{1^n\; |\; \bra{0^n}(U_n)^\dagger Z U_n\ket{0^n}\geq 0 \} 
   \end{equation}
    that is outside $\BPP$ but inside $\BQP$.
    The authors then consider a classically easy language $L_{\mathrm{easy}}\in\BPP$ and assume that for each input size $n$, there exists an input $a_n\in L_{\mathrm{easy}}$ and an input $b_n\not\in L_{\mathrm{easy}}$. Then it is possible to define a new language:

\begin{equation}
    L = \bigcup_{n=1}^{\infty} \left\{ x \mid \forall x \in L_{\text{easy}},\ 1^n \in L_{\text{hard}},\ |x| = n \right\} \cup \left\{ x \mid \forall x \not\in L_{\text{easy}},\ 1^n \not\in L_{\text{hard}},\ |x| = n \right\}.
 \end{equation}
 
For each size \( n \), if \( 1^n \in L_{\text{BQP}} \), \( L \) would include all \( x \in L_{\text{easy}} \) with \( |x| = n \). 
If \( 1^n \not\in L_{\text{hard}} \), \( L \) would include all \( x \not\in L_{\text{easy}} \) with \( |x| = n \). By definition, if we can determine whether \( x \in L \) for an input \( x \) using a classical algorithm (\(\mathsf{BPP}\)), we could also determine whether \( 1^n \in L_{\text{BQP}} \) by checking whether \( x \in L_{\text{easy}} \). However, this must be impossible as we are assuming that $L_{\text{BQP}}$ cannot be decided classically. Hence, the language \( L \) is not in \(\mathsf{BPP}\).
On the other hand, for every size \( n \), a classical machine learning algorithm can use a single training data point 
\((x_0, y_0)\) to decide whether \( x \in L \) by what said above.

\end{proof}

We note here that the existence of unary languages in $\BQP$ but not in $\BPP$ is a well believed assumption.

\subsection{Proofs of the theorems from the main text concerning the hardness of random generatability}
In this Section we provide the proves about hardness of random generability based on the fact that $\BQP$ is not in the $\mathsf{P^{NP}}$ or $\mathsf{HeurBPP^{NP}}$. We provide here an intuition on how the proofs work.
Suppose there exists an efficient classical algorithm that samples pairs $(x,f(x))$ at random, where $x~\calD$ for some $\calD$. Then, for every sampled $(x_r,f(x_r))$ there must exist a random string $r$ such that the algorithm, when run with randomness $r$, outputs that pair. Given any input $x$, we could then search for such an $r$ using an $\NP$ oracle (since the algorithm is efficient), and once the correct $r$ is found, we could evaluate the randomized algorithm on that random string to obtain $f(x)$. This would imply that $f(x)$ is classically computable with the help of an $\NP$ oracle, contradicting our complexity assumptions.

\subsubsection{Hardness of exact random generatability}\label{appendix: exact rv}
\exactrv*
\begin{proof}
    The existence of a randomized poly-time algorithm which satisfies Theorem~\ref{thm: exact rv ph} is equivalent to the existence of a uniform family of poly-sized algorithms $C(r)$ such that a random choice of $r\in\{0,1\}^{\mathrm{poly}(n)}$ outputs a tuple $(x,f(x))$ uniformly at random, i.e. :
    \begin{equation}
    \mathcal{C}=\{C(r)\;|\; r\in\{0,1\}^{\mathrm{poly(n)}}\}
\end{equation}
Then, for every $x$ there exists at least one $r_x$ such that $C(r_x) = (x,f(x))$.
In particular, for a given $x$ a $\mathsf{P^{NP}}$ machine can in polynomial time find a corresponding $r_x$.
Consider the set 
\begin{equation}
   P(\Tilde{C})=\{(u,x)\;|\;\exists v\in\{0,1\}^{*}\;\mathrm{s.t.}\;\Tilde{C}(uv)=x\} 
\end{equation}
 where $\Tilde{\mathcal{C}}$ is the same family of algorithms $\mathcal{C}$ except that the last bit of the output (i.e., the value $f(x)$) is omitted and $\Tilde{C}(r)\in\Tilde{\mathcal{C}}$. Then $(u,x)$ is in $P(\Tilde{C})$ if and only if $u$ is a prefix of some $r_x$ inverse of $x$ with respect to $\Tilde{C}$. Clearly $P(\Tilde{C})$ is in $\NP$ as a correct $u\in\{0,1\}^*$ can be verified in polynomial time as also the family $\mathcal{\Tilde{C}}$ consists of polynomial sized algorithms. Then we can run the following algorithm in $\mathsf{P}^{\mathsf{NP}}$:
 
 \begin{algorithm}
\caption{Prefix Search Algorithm}\label{ciao}
\begin{algorithmic}[1]
\Function{PrefixSearch}{$x$}
    \State $u \gets \varepsilon$;  \Comment{where $\varepsilon$ denotes the empty string}
    \For{$|r|$ times} 
        \If{$(u1,x)\in P(\Tilde{C})$} $u \gets u1$ \textbf{else} $u \gets u0$ \EndIf
    \EndFor
    \State{\Return{$u$}} 
\EndFunction
\end{algorithmic}
\end{algorithm}

As for every $x$ there exists at least one corresponding $r_x$, the above algorithm always succeeds in finding a correct $r_x$. Once the string $r_x$ is found, the $P$ machine can run $C$ on that string and evaluate $f(x)$. Since $f(x)$ can decide an arbitrary $\mathsf{BQP}$ language by Def.~\ref{def: bqp function}, it follows that $\mathrm{P^{NP}}$ can also correctly decide every $x \in \{0,1\}^n$.

\end{proof}

\subsubsection{Hardness of approximate random generatability}\label{appendix: approximate rv} 
In this and the following sections, our proofs concerning the hardness of the approximate random generatability and of the identification task will rely on a well-known result about sampling witnesses of an $\mathsf{NP}$ relation using an $\mathsf{NP}$ oracle. Specifically, the authors of~\cite{bellare2000uniform} showed that for a given $\mathsf{NP}$ language $L$ and relation $R$, where $L = \{x \;|\; \exists w \; \text{such that} \; R[x, w] = 1\}$, it is possible to uniformly sample witnesses $w$ from the set $R_x = \{w \;|\; R[x, w] = 1\}$ for any $x \in L$. Since this result will be central to all our subsequent proofs, we include the main theorem from~\cite{bellare2000uniform} here for reference.
\begin{theorem}[Theorem 3.1 in~\cite{bellare2000uniform}]\label{thm: unform generation NP}
    Let $R$ be an $\mathsf{NP}$-relation. Then there is a uniform generator for $R$ which is implementable in probabilistic, polynomial time with an $\mathsf{NP}$-oracle.
\end{theorem}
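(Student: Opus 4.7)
The plan is to describe a probabilistic polynomial-time algorithm, equipped with an $\NP$ oracle, that on input $x$ (with $R_x=\{w:R[x,w]=1\}$ nonempty) outputs a witness drawn exactly uniformly from $R_x$. The approach follows the standard Sipser--Stockmeyer--style ``hashing and bucketing'' paradigm, combined with rejection sampling to force exact uniformity.

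First I would use the $\NP$ oracle to compute an approximation $\widehat N$ of $N:=|R_x|$ with $\widehat N / 2 \leq N \leq 2 \widehat N$. This is done by binary-searching the predicate ``$|R_x| \geq 2^k$'', which, by the hashing lemma, is equivalent up to a constant factor to ``there exist a pairwise independent hash $h:\{0,1\}^m\to\{0,1\}^k$ and a target $y\in\{0,1\}^k$ such that $|R_x \cap h^{-1}(y)|$ is sufficiently large''. The inner existential is in $\NP$ because $R$ itself is polynomial-time checkable and $h,y$ and a constant number of witnesses form a short certificate. The $\NP$ oracle therefore answers each comparison, and a polynomial number of comparisons suffices to locate $\widehat N$.

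Next I would set $k$ so that $2^k = c\cdot \widehat N$ for a constant $c$ chosen so that a random pairwise independent $h:\{0,1\}^m\to\{0,1\}^k$ and random $y$ yield a bucket $R_x\cap h^{-1}(y)$ of size between $1$ and some fixed constant $B$ with probability bounded below by an absolute constant. By pairwise independence, for every fixed $w\in R_x$,
\begin{equation*}
\Pr_{h,y}[h(w)=y] = 2^{-k}, \qquad \mathrm{Var}\bigl[\,|R_x\cap h^{-1}(y)|\,\bigr] \leq 2^{-k}|R_x|,
\end{equation*}
so a Chebyshev bound gives the desired bucket-size control. I would then use the $\NP$ oracle self-reducibly to enumerate up to $B+1$ witnesses in $R_x\cap h^{-1}(y)$: query ``does there exist a witness in this bucket lexicographically greater than those found so far?'', and extract each one bit-by-bit. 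This takes polynomially many queries. If more than $B$ witnesses are found, or if zero are found, the algorithm outputs $\bot$; otherwise, it outputs each found witness with probability $1/B$ and $\bot$ with the remaining probability.

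The uniformity analysis is the key step. For any fixed $w\in R_x$, the probability $w$ is the output equals $\Pr[h(w)=y]\cdot\Pr[|R_x\cap h^{-1}(y)|\leq B \mid h(w)=y]\cdot (1/B)$. By pairwise independence, conditioned on $h(w)=y$ the hash values of the other witnesses are still uniformly and pairwise-independently distributed, so the conditional bucket-size distribution is the same for every choice of $w$. Therefore each $w\in R_x$ has the same unconditional output probability, and conditioned on a non-$\bot$ output the distribution is \emph{exactly} uniform on $R_x$. Repeat until success; by construction the expected number of trials is $O(1)$.

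The main obstacle is ensuring \emph{exact} uniformity rather than merely statistical closeness: this is what forces the rejection-sampling step with a fixed threshold $B$ and a uniform $1/B$ acceptance probability, rather than just outputting an arbitrary element of the bucket. A secondary subtlety is handling small $|R_x|$ (say, $|R_x|\leq$ some constant), where hashing is not helpful; this regime is detected in the counting step and handled directly by enumerating $R_x$ via the $\NP$ oracle, as there are only constantly many witnesses to extract. All other ingredients---approximate counting, bucket enumeration, and the pairwise-independent hash family (e.g.\ the affine family $h_{A,b}(w)=Aw+b$ over $\mathbb{F}_2$)---are standard and each runs in probabilistic polynomial time relative to an $\NP$ oracle, establishing the claim.
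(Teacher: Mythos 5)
The paper does not actually prove this statement; it is imported verbatim as Theorem~3.1 of \cite{bellare2000uniform}, so there is no internal proof to compare against. Judged on its own merits, your proposal contains a genuine gap at the decisive step, the one that distinguishes the Bellare--Goldreich--Petrank result (\emph{exactly} uniform generation with an $\NP$ oracle) from the older Jerrum--Valiant--Vazirani result (\emph{almost} uniform generation).

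The gap is in your uniformity analysis. You claim that, for a pairwise independent family, conditioning on $h(w)=y$ leaves the hash values of the remaining witnesses ``uniformly and pairwise-independently distributed,'' and you conclude that the conditional distribution of the bucket size $|R_x\cap h^{-1}(y)|$ is the same for every $w$. Neither step holds. Pairwise independence only constrains the joint distribution of \emph{pairs} of hash values; after conditioning on $h(w)=y$, even pairwise independence of the remaining values would require $3$-wise independence of the original family, and in any case the distribution of the bucket size is a functional of the full joint distribution of $\{h(w'):w'\in R_x\setminus\{w\}\}$, which pairwise independence does not determine and which genuinely varies with $w$. Concretely, for the affine family $h_{A,b}(w)=Aw\oplus b$ over $\mathbb{F}_2$, the size of the bucket containing $w$ is $1+|\{v\in (R_x\oplus w)\setminus\{0\}: Av=0\}|$, whose distribution is governed by the linear-dependency structure of the difference set $R_x\oplus w$; for a generic $R_x$ this structure differs between elements $w$, so $\Pr\bigl[\,|R_x\cap h^{-1}(y)|\le B \mid h(w)=y\,\bigr]$ differs between elements (one can exhibit sets of five $3$-bit strings for which these conditional probabilities differ by a constant factor). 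Your rejection step with the fixed threshold $B$ and acceptance probability $1/B$ therefore outputs different witnesses with probabilities that agree only up to constant factors, i.e.\ it yields an almost-uniform generator, not a uniform one. Closing this gap is exactly the technical content of \cite{bellare2000uniform}: they replace pairwise independence by $\ell$-wise independent hashing for polynomially large $\ell$ and design an acceptance rule whose per-element acceptance probability is provably identical across all of $R_x$ (rather than merely bounded), which is what forces exact uniformity. The remaining ingredients of your proposal (approximate counting via hashing, prefix search, bucket enumeration) are fine, but they constitute the easy part of the theorem.
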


For convenience, we also report here the definition of approximate random generatability of Def.~\ref{def:imperfect rv} in the main text.
\begin{definition}[Approximate random generatability (Def.~\ref{def:imperfect rv} in the main text).] Let $f=\{f_n\}_{n\in\mathbb{N}}$ be a uniform family of functions, with $f_n:\{0,1\}^n\rightarrow \{0,1\}$. $f$ is approximately random generatable under the distribution $\calD$ if there exists an efficient non-uniform (randomized) algorithm $\mathcal{A} $ such that given as input a description of $f_n$ for any $n$, a random string $r$ and an error value $\epsilon$, outputs:
\begin{equation}
    \mathcal{A} (f_n,\epsilon, r)=(x_r,f_n(x_r))
\end{equation}
with $(x_r,f_n(x_r))\sim \pi^f_\epsilon(\calD)$ when $r$ is sampled uniformly at random from the distribution of all the random strings. In particular, $\pi^f_\epsilon(\calD)$ is a probability distribution over $\{0,1\}^n\times\{0,1\}$ which satisfies: $\|\pi^f_\epsilon(\calD) - \pi^f(\calD)\|_{TV}\leq \epsilon$, where $\pi^f(\calD)$ is the same distribution defined in Def.~\ref{def:exact rv}.  
\end{definition}
We can now state our hardness result for the approximate case of random generatability.
\imperfectrv*
\begin{proof}
    The proof follows from the proof of Theorem~\ref{thm: exact rv ph}. We will present the proof under the assumption of a distribution $\mathsf{Unif}_{\calL_P}$, as defined in Def.~\ref{def: distr promise}, and a $\mathsf{PromiseBQP}$-complete language $\mathcal{L}_P$ such that $(\mathcal{L}_P, \mathsf{Unif}_{\calL_P}) \not\subseteq \mathsf{HeurBPP^{NP}}$. The proof then generalizes straightforwardly to the case where we assume $(\mathcal{L} \in \BQP, \mathsf{Unif}) \not\subseteq \mathsf{HeurBPP^{NP}}$, simply by considering the uniform distribution $\mathsf{Unif}$ over all inputs $x \in {0,1}^n$ instead of $\mathsf{Unif}_{\calL_P}$.
    
    Consider the same families of algorithms $\mathcal{C}$ and $ \Tilde{\mathcal{C}}$ introduced in the proof of Theorem~\ref{thm: exact rv ph}. The existence of an algorithm $\mathcal{A} $ as in Def.~\ref{def:imperfect rv} guarantees the existence of such families $\mathcal{C}$ and $ \Tilde{\mathcal{C}}$. Let $W=\{r_i\}_{i=1}^M$ be a set of $M$ random strings $r_i\in\{0,1\}^{|r|}$, with $|r|= \mathrm{poly}(n)$ and consider the set:
\begin{equation}
    P(\Tilde{C},W)=\{ x\in\{0,1\}^n \;|\;\exists r\in\{0,1\}^{|r|} \;\mathrm{s.t.} \;\Tilde{C}(r)= x \And r\not\in W \}
\end{equation}

If the set $W$ has a size $M=\mathrm{poly}(n)$ then $P(\Tilde{C},W)$ can be decided in $\mathsf{NP}$ since both the conditions $\Tilde{C}(r)= x$ and $r\not\in W$ can be verified in polynomial time. Theorem~\ref{thm: unform generation NP} from \cite{bellare2000uniform} guarantees the existence of an algorithm $A_R$ such that, given the $\NP$ relation $R:\Tilde{C}(r)= x$ and an input $x$, outputs, with high probability, a sample $r$ such that $\Tilde{C}(r)=x$ uniformly at random among all the witnesses of $x$.
 Then the following algorithm runs in $\mathrm{P^{NP}}$.

\begin{algorithm}[H]
\label{algo: multiprefixsearch}
\begin{algorithmic}[1]
\Function{MultiPrefixSearch}{$x$}
    \State $r \gets \varepsilon$ \Comment{where $\varepsilon$ denotes the empty string}
    \State $W=\{r\}$
    \For{$\mathrm{poly}(|x|)$ times} 
     \State $r\gets A_R[x]$
        \State\quad $W=W+\{r\}$
        \EndFor

    \For{$i: 1<i<|W|$ times}
        \State \Return{$r_i \in W$}
        \EndFor 
  
\EndFunction
\end{algorithmic}
\end{algorithm}
Where we denote as $A_R[x]$ the algorithm in~\cite{bellare2000uniform} which uniformly samples witnesses for the relation $R:\Tilde{C}(r)=x$ corresponding to the input $x$.
On an input $x$, the algorithm \texttt{MultiPrefixSearch} outputs a list of polynomial different random strings $\{r^i_x\}_i$ (if they exists) for which $\Tilde{C}(r^i_x)=x\;\;\forall i$, sampled uniformly at random among all the random strings associated to $x$. We now construct the following algorithm $\mathcal{A} '$ acting on input $x$. Specifically, on any $x\in\{0,1\}^n$, $\mathcal{A}'$ first applies the algorithm \texttt{MultiPrefixSearch}($x$), then either:
\begin{itemize}
    \item If $\texttt{MultiPrefixSearch}(x)$ outputs an empty string, then the algorithm assigns a random value to $f(x)$
    \item Otherwise, the algorithm computes the lables $\mathcal{A} (r^i_x)=(x,f(x))_{r^i_x}$ for any of the random strings outputted by $\texttt{MultiPrefixSearch}(x)$. It then assigns to $x$ the label determined by the majority vote among all the $f(x)_{r^i_x}$ values obtained.
\end{itemize}   
Now we show that the above algorithm $\mathcal{A} '$ is able to classify the language $\mathcal{L}$ correctly on average with respect to the uniform input distribution. More precisely, for each $x$ consider the sets $R^T_x=\{r_x \; | \; \mathcal{A} (r_x) = (x,y) \;\mathrm{with} \;y=f(x) \}$ and $R^F_x=\{r_x \; | \; \mathcal{A} (r_x) = (x,y) \;\mathrm{with} \;y= f(x)\oplus 1 \}$. Also, let $R_{\mathrm{tot}}$ be the set of all the random strings of the algorithm $\calA $, clearly $|R_{tot}|=\sum_{x\in\{0,1\}^n} |R^T_x|+|R^F_x|$. Notice that $|R^T_x|/|R_{tot}|$ precisely represents the probability that the classical algorithm $\calA $ outputs the pair $(x,f(x))$. Similarly, the probability that $\calA $ outputs a tuple containing $x$ is $|R_{tot}(x)|/|R_{tot}|$ where $|R_{tot}(x)| = |R^T_x|+|R^F_x|$.\\
By definition of difference in total variation between two distributions, i.e. $\|p - q\|_{TV}=\frac{1}{2}\sum_x |p(x)-q(x)|$, we have the following relation between the distribution $\pi_\epsilon$ generated by $\calA $ and the target distribution $\mathrm{Unif}_{\calL_P}(\{0,1\}^n)\times f(\mathrm{Unif}_{\calL_P}(\{0,1\}^n))$ uniform over the input of the promise of the language $\calL$:
\begin{align}\label{eq:TV distance strings}
\begin{split}
    \|\pi_\epsilon - \mathrm{Unif}_{\calL_P}(\{0,1\}^n)\times f(\mathrm{Unif}_{\calL_P}(\{0,1\}^n))\|_{TV}&=\frac{1}{2}\sum_{x\in P} \bigg|\frac{|R^T_x|}{|R_{\mathrm{tot}}|}-\frac{1}{|P|}\bigg|+\frac{1}{2}\sum_{x\in P}\frac{|R_x^F|}{|R_{tot}|}
    \\
    &+\frac{1}{2}\sum_{x\not\in P}\frac{|R_x^T|+|R_x^F|}{|R_{tot}|},
    \end{split}
\end{align}
where $P=\calL_{\text{yes}}\cup\calL_{\text{no}}\subset\{0,1\}^n$ denotes the set of $x$ belonging to the promise of for the language $\calL$. 
 We now bound the number of $x\in P$ for which the probability of $\calA $ computing incorrectly them exceeds $1/3$, specifically we are interested in the size of the set $X_{\mathrm{wrong}}$ such that:
\begin{equation}
    X_{\mathrm{wrong}}=\bigg\{x\in P\;|\;\;\frac{|R^T_x|}{|R_{\mathrm{tot}}(x)|}\leq\frac{2}{3}\bigg\}.
\end{equation}
From Eq.~(\ref{eq:TV distance strings}), it follows:
\begin{align}
     \|\calL_\epsilon - \mathrm{Unif}(\{0,1\}^n)\times f(\mathrm{Unif}(\{0,1\}^n))\|_{TV} &\geq \frac{1}{2}\sum_{x\in P} \bigg|\frac{|R^T_x|}{|R_{\mathrm{tot}}|}-\frac{1}{|P|}\bigg| \\
 &\geq \frac{1}{2}\sum_{x\in X_{\mathrm{wrong}}}\bigg|\frac{\frac{2}{3}|R_{\mathrm{tot}}(x)|}{|R_{\mathrm{tot}}|} -\frac{1}{|P|}\bigg| \\
 &\sim\frac{1}{2}\sum_{x\in X_{\mathrm{wrong}}}\bigg|\frac{2}{3}\frac{1}{|P|} -\frac{1}{|P|}\bigg|\\
 &=\frac{1}{2}\frac{|X_{\mathrm{wrong}}|}{3\cdot|P|}\label{eq:TV_final},
    \end{align}
where we used the fact that as the output distribution $\calL_\epsilon$ is close in TV with the uniform distribution over the $x\in\{0,1\}^n$ (and the corresponding labels $y=f(x)$), then $\frac{|R_{\mathrm{tot}}(x)|}{|R_{\mathrm{tot}}|}=\frac{1}{|P|}+\delta_x$ with the constraint $\sum_x|\delta_x|\leq\epsilon$. We can then assume $\delta_x<<\epsilon$, and therefore neglect it in the derivation above. From Eq.~(\ref{eq:TV_final}), it directly follows that $|X_{\mathrm{wrong}}|\leq 6\epsilon\cdot |P|$ and therefore the fraction of $x\in|P|$ for which $\calA $ outputs the correct label with a probability less than $2/3$ is approximately $6\epsilon$.

Then, the algorithm $\mathcal{A} '$, with a $\NP$ oracle, correctly decides in $\mathsf{BPP}$ at least a fraction $1-6\epsilon$ of all the possible $x\in\{0,1\}^n$. When $x$ are sampled uniformly, algorithm $\mathcal{A} '$ is therefore able to decide the language $\mathcal{L}$ in $\mathsf{HeurBPP}^{\mathsf{NP}}$ with respect to the distribution $\mathsf{Unif}_{\calL_P}$ uniform over the set of $x$ belonging to the promise.

\end{proof}


\begin{corollary}\label{cor:EVS}
The expectation value sampling (EVS) generative model, as defined in \cite{barthe2024EVS} is classically intractable.
\end{corollary}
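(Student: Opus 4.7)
The plan is to reduce the problem of approximate random generatability of a $\BQP$ function to classical sampling from an EVS distribution, and then invoke Theorem~\ref{thm: imperfect rv ph}. The starting point is to recall that an EVS model, as formalized in \cite{barthe2024EVS}, generates samples by first preparing a parameterized quantum state, measuring expectation values of a set of observables, and using these to define the probabilities of a classical distribution from which samples are drawn. Crucially, the universality result in \cite{barthe2024EVS} shows that EVS can express a broad class of distributions, in particular those arising from measurement statistics on polynomial-sized quantum circuits.

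First I would construct an explicit EVS instance whose output distribution is precisely $\pi^f(\mathsf{Unif})$ for a $\BQP$ function $f$ associated to a language $\mathcal{L}\in\BQP$. The construction would use a parameterized quantum circuit that prepares a uniform superposition over inputs $x \in \{0,1\}^n$ on a first register, coherently evaluates the $\BQP$ circuit for $f$ into an auxiliary register, and then uses the computational-basis observables on the $(n{+}1)$-qubit joint register to define the distribution of pairs $(x,y)$. Exact or approximate preparation of this state is inside $\BQP$ by construction, so by the universality of EVS this distribution lies in the EVS model class.

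Second, I would argue that a classical efficient sampler for EVS would, restricted to this particular instance, yield a randomized classical algorithm meeting Definition~\ref{def:imperfect rv} with respect to the uniform input distribution: the sampler, on input the parameters describing the EVS instance (equivalently a description of $f_n$) and an error parameter $\epsilon$, outputs samples from a distribution that is $\epsilon$-close in total variation to $\pi^f(\mathsf{Unif})$. Plugging this algorithm into Theorem~\ref{thm: imperfect rv ph} places $(\mathcal{L},\mathsf{Unif})$ in $\mathsf{HeurBPP}^{\NP}$, contradicting the standing assumption that $\BQP$ is not in the second level of the polynomial hierarchy (in the heuristic sense).

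The main obstacle will be the first step: verifying that the specific EVS definition adopted in \cite{barthe2024EVS} (including its normalization/softmax layer on expectation values and the allowed classes of observables) genuinely supports encoding the target distribution $\pi^f(\mathsf{Unif})$, rather than only approximations that distort the labels too much. If the universality theorem in \cite{barthe2024EVS} already covers distributions obtained from computational-basis measurements of polynomial-depth circuits, this step is immediate; otherwise, one needs a small bridging lemma showing that the EVS observable set and post-processing are expressive enough to realize such measurement distributions within the $\epsilon$-tolerance of Definition~\ref{def:imperfect rv}. Once this is in place, the rest of the argument is a direct application of Theorem~\ref{thm: imperfect rv ph}.
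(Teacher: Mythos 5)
Your overall reduction is the same one the paper uses: exhibit a polynomial-sized EVS whose output distribution is $\pi^f(\mathsf{Unif})$ for a $\BQP$ function $f$, observe that a classical $\epsilon$-approximate simulator of that EVS is exactly an approximate random generator in the sense of Def.~\ref{def:imperfect rv}, and invoke Theorem~\ref{thm: imperfect rv ph} to conclude $(\mathcal{L},\mathsf{Unif})\in\mathsf{HeurBPP}^{\NP}$, contradicting the standing assumption. Steps two and three of your argument are fine and match the paper's (admittedly brief) proof sketch.

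The gap is in your first step, and it is the step you yourself flagged as the obstacle. An expectation value sampler does not draw samples from the Born distribution of a measurement; its output is the (deterministic) vector of expectation values of fixed observables on a parameterized state, and the \emph{only} source of randomness in the generated sample is the continuous latent variable fed into the circuit's parameters. Your construction --- ``prepare a uniform superposition over inputs $x$, coherently evaluate $f$, and use computational-basis observables on the joint register to define the probabilities of a classical distribution from which samples are drawn'' --- is a Born-machine-style sampler, not an EVS. If you prepare the uniform superposition and output expectation values of $Z_i$ on each register, you get a single fixed vector (essentially all zeros), not random pairs $(x,f(x))$. The paper's construction avoids this by routing the randomness through the latent variable: the continuous latent $x$ is uploaded into the circuit with prefactors $2^k$ per bit position, phase estimation coherently extracts its bits so that the output expectation values reproduce the bits of $x$ deterministically given the latent, and a parallel register computes $f(x)$ whose expectation value is appended to the output. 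With that repair --- which is a genuinely different circuit from the one you describe, not merely a technical check against the definition in \cite{barthe2024EVS} --- the rest of your argument goes through as written.
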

\begin{proof}
To not detract from the main topic we just provide a quick proof sketch.
By intractable we mean that there cannot exist a classical algorithm which takes on input an arbitrary polynomial sized EVS $M$ specified by its parametrized circuit (see \cite{barthe2024EVS}) and outputs an $\epsilon$ approximation of the distribution outputted by $M$.
To prove this we note that for any $\BQP$ function $f:\{0,1 \}^n \rightarrow \{ 0,1\}$ we can construct a poly-sized EVS which produces samples of the form $(x,f(x))$, where $x$ is an $n$-bit bitsting sampled (approximately) uniformly (or according to any desired classically samplable distribution). 
There are many ways to realize this: the easiest is to coherently read out the bits of input random continuous-valued variable using phase estimation (for this, this variable $x$ is uploaded using prefactors $2^k$ for each bit position $k$), forwards this to part of the output, and in a parallel register computes $f(x)$ and outputs that as well. 
This is a valid EVS, and sampling from it is equivalent to random generatability for the BQP function $f$. This then implies $(\mathcal{L}_P,\mathsf{Unif}_{\calL_P})\in \mathsf{HeurBPP}^{\mathsf{NP}}$.
\end{proof}

We provide here an example of a $0.5$-distinct concept class, which therefore satisfies the condition of Theorem~\ref{thm: construction distinct}, where the concepts are all ($\mathsf{Promise-})$$\BQP$ complete functions. 

\begin{definition}[$0.5$-distinct concept class with $\BQP$ concepts]\label{def: bqp concept class with distant concepts}
Let $f$ be a $\mathsf{PromiseBQP}$ complete function as in Def.~\ref{def: PromiseBQP function} on input $x\in\{0,1\}^n$. Then for any $f$ there exists a concept class $\mathcal{F}$ containing $2^n$ $\mathsf{PromiseBQP}$-complete concepts defined as:
\begin{align}
    \mathcal{F}&=\{f^\alpha: \{0,1\}^n\rightarrow \{0,1\}\;\;|\;\; \alpha\in\{0,1\}^n\}\\
    &f^\alpha: x\rightarrow f(x)\oplus(\alpha\cdot x)
\end{align}
where $\alpha\cdot x$ is the bitwise inner product modulo 2 of the vectors $\alpha$ and $x$.  
\end{definition}

Clearly for any $\alpha_1,\alpha_2\in\{0,1\}^n$, $\alpha_1\neq\alpha_2$, it holds that $f^{\alpha_1}(x)\neq f^{\alpha_2}(x)$ on exactly half of the input $x$, also they are all $\mathsf{PromiseBQP}$ complete functions as by evaluating one of them we can easily recover the value of $f(x)$.

\section{Proofs of the theorems concerning the identification task}\label{appendix: identification}

\subsection{Proof of hardness for the identification task in the verifiable case}\label{appendix: identification verifiable}
\identificationver*
\begin{proof}
    In this proof, we consider any training set $T=\{(x_\ell,y_\ell)\}_{\ell=1}^B$ as a sequence of concateneted bitstrings, i.e. $T=\underbrace{x_1y_1x_2y_2...x_B y_B}_{B(n+1) \;\mathrm{ bits}}$.
    Let $X=\{x_\ell\}_{\ell=1}^B$ be a set of $B$ inputs $x_\ell\in\{0,1\}^n$. We define the following set:
    \begin{equation}
        P(\mathcal{A}_B ,\epsilon,B)=\{(X,\alpha )~|~ \exists Y\in\{0,1\}^{B},r\in\{0,1\}^{\text{poly}(n)} \;\;\mathrm{s.t.}\;\; \mathcal{A}_B (T_{X,Y},\epsilon,r)=\alpha \;\;\},
    \end{equation}
where $Y=\{y_\ell\}_{\ell=1}^{B}$ is a collection of labels $y_\ell\in\{0,1\}$ such that $T_{X,Y}=\{(x_\ell,y_\ell)\}_{\ell=1}^{B}$. We are also going to assume that for a given $\alpha$ and $\epsilon$, the number of training sets for which there exists an $r$ such that $\calA_B (T,\epsilon,r)=\alpha$ is greater than 0 and the majority of them are dataset labeled accordingly to one concept $\tilde{\alpha}$ which is $\epsilon$-close to $\alpha$ in PAC condition. 
 The set $P(\mathcal{A}_B ,\epsilon,B)$ contains then all the tuples $(X,\alpha)$ which satisfy the relation $R:\calA_B (T_{X,Y},\epsilon,r)=\alpha$.
Since $\calA_B $ runs in polynomial time (for $\epsilon\sim \frac{1}{\mathrm{poly}(n)}$ and $B\sim \mathrm{poly}(n)$), the relation $R$ belongs to $\NP$. Theorem~\ref{thm: unform generation NP} from~\cite{bellare2000uniform} guarantees the existence of an algorithm $A_R$ such that given the $\NP$ relation $R:\mathcal{A}_B (T_{X,Y},\epsilon,r)=\alpha$ outputs on input $(X,\alpha)$ a tuple $(Y,r)=A_R[(X,\alpha)]$ which satisfies $\mathcal{A}_B (T_{X,Y},\epsilon,r)=\alpha$, uniformly at random among all the possible tuples $(Y,r)$ which satisfies $R$.

We can now construct an algorithm $\calA '$ which, using a $\NP$ oracle and the algorithm $A_R$ in~\cite{bellare2000uniform}, evaluates any concept $f^\alpha$ with an average-case error of at most $\epsilon'$ under the uniform distribution over inputs $x\in\{0,1\}^n$. In other words, the algorithm $\calA '$ satisfies satisfies the heuristic condition in Eq.~(\ref{eq:heuristic condition}) with respect the target function $f^\alpha$.  
Firstly, since we are considering the verifiable case, we can implement the following function which, given a dataset consistent with one of the concept, tells if the label of an input $x\in\{0,1\}^n$ is consistent with the dataset or not.

    \begin{algorithm}[H]
\begin{algorithmic}[1]
\Function{check}{$x,\epsilon,T$}  
   \State \textbf{if} $\calA_B (\{(x,0)\}\cup T,\epsilon,r_0)=$``invalid 
    dataset'' 
    \State $\quad$  \textbf{return} $1$
    \State \textbf{else} \textbf{return} $0$
        
\EndFunction
\end{algorithmic}
\end{algorithm}

We can now construct the algorithm $\calA '$. Fix a desired average-case error $\epsilon'$. First consider an identification algorithm $\calA_B $ in Def.~\ref{def: identification verifiable} which runs with $\epsilon$ and $\delta$ such that $\epsilon=\frac{\epsilon'}{2}(1-\delta)$. Using the algorithm $\calA_B $ and the algorithm $R$ given in~\cite{bellare2000uniform}, we can construct the following algorithm $\calA'$ which runs in polynomial time using an $\NP$ oracle. 

    \begin{algorithm}[H]
\begin{algorithmic}[1]
\Function{$\calA'$}{$\alpha,x,\epsilon$,$B$}  
    \State \textbf{step 1.} Sample $B$ inputs $X=x_1,x_2,...,x_B$ from the uniform distribution $\mathsf{Unif}$.
    \State \textbf{step 2.} $(Y,r) \gets A_R[(X,\alpha)]$ 
    \State \textbf{step 3.} Construct $T_{X,Y}$ by assigning each label in $Y$ to the corresponding point in $X$. 
    \State \textbf{step 4.} \textbf{return} $y\gets $\texttt{CHECK}$(x,\epsilon,T_{X,Y})$.
\EndFunction
\end{algorithmic}
\end{algorithm}

The labels produced by $\mathcal{A} '$ ensure that all inputs in $\{(x,y)\}\cup T_{X,Y}$ are consistent with at least one concept labeled by $\Tilde{\alpha}$. We will now demonstrate that the concept $f^{\Tilde{\alpha}}$ is, with probability greater than $2/3$, heuristically close to the target $\alpha$, meaning that:
\begin{equation}\label{eq: pac appendix verifiable}
   \mathbb{E}_{x\sim \mathrm{Unif}(\{0,1\}^n)}||f^{\Tilde{\alpha}}(x)-f^\alpha(x)||\leq\epsilon' .
\end{equation}
In Step 2 of the algorithm $\mathcal{A} '$, the labeling $Y = \{y_\ell\}_{\ell=1}^B$ is chosen such that $\mathcal{A}_B $, given the dataset $T_{X,Y}$ and parameter $\epsilon$, outputs $\alpha$ with a probability greater than $0$. Specifically, the dataset is sampled uniformly at random from all possible datasets satisfying this condition.
The probability that an outputted dataset $T_{X,Y}$ is consistent with a concept $f^{\Tilde{\alpha}}$ that is not $\epsilon'$-close to $f^\alpha$ (as defined by Eq.~(\ref{eq: pac appendix verifiable})) depends on both the inputs in $X$ and the failure probability $\delta$ of the algorithm $\mathcal{A}_B $. For simplicity, we initially neglect the failure probability $\delta$ of $\mathcal{A}_B $.
Even in this case, the dataset $T_{X,Y}$ could still be consistent with a concept $f^{\Tilde{\alpha}}$ far from $f^\alpha$ if both $f^\alpha$ and $f^{\Tilde{\alpha}}$ agree on all the $B$ inputs in $X$. By setting $\epsilon = \epsilon'/2$, the fraction of inputs on which a concept $f^{\Tilde{\alpha}}$ (not satisfying Eq.~(\ref{eq: pac appendix verifiable})) disagrees with a concept that is $\epsilon$-close to $f^\alpha$ is at least $\epsilon$. If at least one of those inputs is in $X$, then such $f^{\Tilde{\alpha}}$ cannot be outputted by $R[P(\calA_B ,\epsilon,B),(X,\alpha)]$ (we are still assuming the algorithm $\calA_B $ has a zero failure probability, i.e. $\delta=0$). The probability that a random $x$ lies in a fraction $\epsilon$ of all the $x\in\{0,1\}^n$ is clearly $\epsilon$. It follows that the probability of selecting $B$ random inputs where at least one shows a disagreement between a concept that is $\epsilon$-close to $f^\alpha$ and a concept that is more than $\epsilon'$ away from $f^\alpha$ is:
\begin{equation}\label{eq: appendix valid dataset}
    1-(1-\epsilon)^B
\end{equation}
We want to bound this probability such that is above $\frac{2}{3}$. In order to obtain that, we need to extract a number $B$ of inputs greater than:
\begin{align}
    (1-\epsilon)^B&\leq\frac{1}{3}\\
    B\log(1-\epsilon)&\leq-\log3 \\
    -2\epsilon B&\leq -\log3\\
   B&\geq\frac{\log3}{2\epsilon}
\end{align}
Where we used the fact that for $\epsilon\leq0.7$, $-2\epsilon\leq\log(1-\epsilon)$. Therefore, by selecting $B\sim\frac{1}{\epsilon}$ (noting that the $\epsilon$ has to scale as $\epsilon\sim\frac{1}{n}$ for the algorithm $A$ to remain efficient), we can ensure with a probability greater than $2/3$ that the dataset $T_{X,Y}$ produced in Step 2 of $\calA '$ is consistent only with concepts that are $\epsilon'$-close to the target $\alpha$.
We now take into account the failure probability of the algorithm $\mathcal{A}_B $. Because of this, it is still possible that there exists a random string $r_0$ such that, for a dataset $T_{X,Y}$ consistent with a concept $f^{\Tilde{\alpha}}$ that is more than $\epsilon'$ away from $f^\alpha$, the algorithm $A(T_{X,Y},\epsilon,r_0)$ outputs $\alpha$, regardless of the $B$ inputs in the dataset. This happens with a probability $\delta$ over all the possible datasets. We can then take into account this probability of failure by asking that the probability in Eq.~(\ref{eq: appendix valid dataset}) is above $\frac{2}{3(1-\delta)}$. With this modification, the bound on the number of inputs $B$ becomes:
\begin{equation}
    B\geq \frac{\log3+2\delta}{\epsilon}
\end{equation}
Notice that for $\delta,\epsilon\approx\frac{1}{n}$ the above quantity is still polynomial in $n$.

For every input \( x \), the algorithm \( \mathcal{A} ' \) outputs, with probability greater than \( \frac{2}{3} \), a label \( y \in \{0, 1\} \) that aligns with one of the concepts \( f^{\Tilde{\alpha}} \) satisfying Eq.~(\ref{eq: pac appendix verifiable}), i.e., concepts that are \( \epsilon' \)-close to \( f^\alpha \). The maximum number of inputs \( x \in \{0,1\}^n \) that the algorithm can consistently misclassify is bounded above by \( \epsilon' \). This corresponds to the scenario where all \( \epsilon' \)-close concepts disagree with \( \alpha \) on the same subset of inputs.

\end{proof}

As discussed in Appendix~\ref{app: promise}, we state here the second version of our result on the hardness of the verifiable identification task. In this version, we assume the existence of an input distribution $\mathsf{Unif}_{\calL_P}$ as in Def.~\ref{def: distr promise} and a $\mathsf{PromiseBQP}$ language $\calL_P$ such that $(\calL_P,\mathsf{Unif}_{\calL_P})\not\subseteq\mathsf{HeurBPP^{NP}}$.

\begin{restatable}
    [Identification hardness for the verifiable case - second version]{theorem}{identificationver2}\label{thm: identification verifiable hard}
    Let $\mathcal{F}=\{f^\alpha:\{0,1\}^n\rightarrow\{0,1\}\;|\;\alpha\in\{0,1\}^m, m=\mathrm{poly}(n)\}$ be a concept class such that there exists a $f^{\alpha}\in\calF$ which is the characteristic function of a language $\calL_P\in \mathsf{PromiseBQP}$ as in Def.~\ref{def: bqp function}, and let $\mathsf{Unif}_{\calL_P}$ be the input distribution defined in Def.~\ref{def: distr promise}. If there exists a verifiable identification algorithm $\mathcal{A}_B $ for the input distribution $\mathsf{Unif}_{\calL_P}$ as given in Def.~\ref{def: identification verifiable}, then $(\calL_P,\mathsf{Unif}_{\calL_P})\in\mathsf{ HeurBPP^{NP}}$.
\end{restatable}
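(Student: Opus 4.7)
The plan is to mirror the strategy from the proof of the first version of the theorem (Theorem~\ref{thm: identification verifiable hard} in the main text), with the only substantive change being the distribution from which inputs are sampled. Concretely, I would construct a classical algorithm $\mathcal{A}'$ that, using $\mathcal{A}_B$ and an $\NP$ oracle, heuristically decides the promise language $\calL_P$ with respect to $\mathsf{Unif}_{\calL_P}$, which is the content of $(\calL_P, \mathsf{Unif}_{\calL_P}) \in \mathsf{HeurBPP^{NP}}$.

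First, I would fix a target heuristic error $\epsilon'$ and set the internal identification parameters $\epsilon$ and $\delta$ as in the original proof (e.g.\ $\epsilon = \tfrac{\epsilon'}{2}(1-\delta)$). Exactly as before, I would define the $\NP$ relation $R: \mathcal{A}_B(T_{X,Y}, \epsilon, r) = \alpha$ over tuples $(Y,r)$ indexed by a set of inputs $X = \{x_\ell\}_{\ell=1}^{B}$, and invoke the uniform $\NP$-witness sampler of~\cite{bellare2000uniform} (Theorem~\ref{thm: unform generation NP}) to sample a valid labeling $Y$ uniformly among all labelings of $X$ that cause $\mathcal{A}_B$ to output $\alpha$. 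The verifiability of $\mathcal{A}_B$ again supplies the \textsc{check} subroutine, which tests whether a proposed label $y$ for a fresh input $x$ keeps the extended dataset $\{(x,y)\} \cup T_{X,Y}$ valid.

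The key modification is in Step~1 of the construction of $\mathcal{A}'$: rather than drawing the $B$ inputs $x_1, \dots, x_B$ from the uniform distribution on $\{0,1\}^n$, I would draw them from $\mathsf{Unif}_{\calL_P}$. This is legitimate because $\mathsf{Unif}_{\calL_P}$ is efficiently classically samplable by Def.~\ref{def: distr promise}, so Step~1 remains polynomial-time. Analogously, the fresh input $x$ on which we want to evaluate $f^\alpha$ is assumed to come from $\mathsf{Unif}_{\calL_P}$. The PAC-style guarantee of $\mathcal{A}_B$ in Def.~\ref{def: identification verifiable} is with respect to whichever input distribution is fixed, so we simply instantiate it with $\calD = \mathsf{Unif}_{\calL_P}$; in particular, ``closeness'' between $f^\alpha$ and $f^{\tilde\alpha}$ is now measured by $\mathbb{E}_{x \sim \mathsf{Unif}_{\calL_P}} |f^\alpha(x) - f^{\tilde\alpha}(x)|$, which is exactly the quantity we want to control in order to conclude heuristic decidability with respect to $\mathsf{Unif}_{\calL_P}$.

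The probabilistic argument from the original proof then goes through verbatim, reading ``uniform'' as ``$\mathsf{Unif}_{\calL_P}$'' throughout. Namely: any concept $f^{\tilde\alpha}$ that is more than $\epsilon'$-far from $f^\alpha$ under $\mathsf{Unif}_{\calL_P}$ must disagree with every $\epsilon$-close concept on a subset of $\{0,1\}^n$ of $\mathsf{Unif}_{\calL_P}$-mass at least $\epsilon$, so a batch of $B = \Omega\bigl((\log 3 + 2\delta)/\epsilon\bigr)$ i.i.d.\ samples from $\mathsf{Unif}_{\calL_P}$ hits such a disagreement with probability at least $1 - \tfrac{1}{3(1-\delta)}$, ruling out $\tilde\alpha$ as the certificate returned by the $\NP$-witness sampler. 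Combining with the failure probability $\delta$ of $\mathcal{A}_B$ and invoking the verifiable \textsc{check} subroutine at the query point $x$, the label assigned by $\mathcal{A}'$ agrees with some $f^{\tilde\alpha}$ that is $\epsilon'$-close to $f^\alpha$ under $\mathsf{Unif}_{\calL_P}$, with success probability $> 2/3$. This is exactly the heuristic condition Eq.~\eqref{eq:heuristic condition} for $(\calL_P, \mathsf{Unif}_{\calL_P}) \in \mathsf{HeurBPP^{NP}}$. The only mild subtlety to watch out for, and which I expect to be the main place where the adaptation could go wrong, is ensuring that $\mathsf{Unif}_{\calL_P}$ is genuinely used consistently in three roles at once -- as the sampling distribution in Step~1, as the distribution underlying the PAC guarantee of $\mathcal{A}_B$, and as the evaluation distribution for $\mathcal{A}'$ -- so that ``$\epsilon'$-close'' always refers to the same underlying measure; this alignment is precisely what makes the original argument transplantable without recomputing the probability bounds.
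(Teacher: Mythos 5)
Your proposal is correct and follows essentially the same route as the paper: the paper's own proof simply states that the argument proceeds in direct analogy to the first version, with the only change being that Step~1 of $\mathcal{A}'$ samples the inputs from $\mathsf{Unif}_{\calL_P}$ (which is classically samplable by Def.~\ref{def: distr promise}) instead of the uniform distribution. Your additional observation about keeping $\mathsf{Unif}_{\calL_P}$ consistent across the sampling, PAC-guarantee, and evaluation roles is exactly the alignment that makes the transplant work.
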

\begin{proof}
    The proof proceeds in direct analogy to the proof of Theorem~\ref{thm: identification verifiable hard}. The only change regards the first step in Algorithm $\calA'$. Specifically, instead of sampling the inputs $X=x_1,x_2,...,x_B$ uniformly at random, now the algorithm $\calA'$ will sample them from the distribution $\mathsf{Unif}_{P}$. We note that, by Def.~\ref{def: distr promise}, $\mathsf{Unif}_{P}$ is assumed to be classically efficiently samplable.
\end{proof}

\section{Proofs of hardness for the identification task in the non-verifiable case}\label{appendix: identification non verifiable}
In this section we provide the full proof of our main result in Theorem~\ref{thm: main theorem}. As outlined in the proof sketch in the main text, the proof is a combination of two other theorems that we also state and prove here. We first restate the main theorem, including both the possible assumptions- average case hardness of evaluating $\BQP$ or $\mathsf{PromiseBQP}$ languages- together. 

\begin{theorem}[Hardness of the identification task - non verifiable case]\label{thm: main theorem}
Let $\mathcal{F}=\{f^\alpha: \{0,1\}^n\rightarrow\{0,1\} \;\;|\;\;\alpha\in\{0,1\}^m\}$ be a concept class containing at least a $\BQP$ ($\mathsf{PromiseBQP}$) function $f$ as in Def.~\ref{def: bqp function} (as in Def.~\ref{def: PromiseBQP function}) associated to a language $\calL\in\BQP$ ($\calL_P\in\mathsf{PromiseBQP}$). Assume further that $\calF$ is either a $c$-distinct concept class with $c \geq 1/3$ or an average-case-smooth concept class. If the non-verifiable identification task for $\mathcal{F}$, as defined in Def.~\ref{def: approximate-correct }, is solvable by an approximate-correct identification algorithm $\mathcal{A}_B$ (Def.~\ref{def: aprrox verifiable identification}), then it follows that $(\mathcal{L}, \mathsf{Unif}) \in \mathsf{HeurBPP}^{\NP^{\NP^{\NP}}}$ (respectively, $(\mathcal{L}_P, \mathsf{Unif}_{\calL_P}) \in \mathsf{HeurBPP}^{\NP^{\NP^{\NP}}}$).

\end{theorem}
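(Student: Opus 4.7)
The plan is to compose two already-established ingredients of the paper: the construction theorems for $c$-distinct and average-case-smooth concept classes (Theorems~\ref{thm: construction distinct} and~\ref{thm: construction avg}) and the inversion result of Theorem~\ref{thm: hardness approx-verif}. The first step is to upgrade the given classically efficient approximate-correct algorithm $\calA_B$ into an \emph{approximate-verifiable} identification algorithm $\calB$ living in the first level of the polynomial hierarchy, i.e.\ in $\BPP^{\NP}$. The structural hypotheses---$c$-distinctness with $c \geq 1/3$, or average-case smoothness---are exactly what is needed to convert the weak pointwise guarantees of Def.~\ref{def: approximate-correct } (which permits $\calA_B$ to behave arbitrarily on datasets only partially consistent with a concept) into an effective rejection criterion for $\calB$. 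Concretely, $c$-distinctness lower-bounds pairwise disagreement of distinct concepts, while average-case smoothness translates PAC closeness into parameter closeness; both allow cross-validating candidate outputs of $\calA_B$ produced by different random strings, so that an $\NP$ search can certify when no concept in $\calF$ accounts for more than a $1-1/\beta$ fraction of the input labels.

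Next, I apply the inversion procedure of Theorem~\ref{thm: hardness approx-verif} to $\calB$, mirroring the template used in the verifiable case of Theorem~\ref{thm: identification verifiable hard}. Fix the $\BQP$ target $f^\alpha \in \calF$ associated with $\calL$, and let $x \sim \mathsf{Unif}$ be a fresh input to classify. Using the Bellare--Goldreich--Petrank uniform-witness generator of Theorem~\ref{thm: unform generation NP} on the relation ``$\calB(T_{X,Y}, \epsilon, r) = \alpha$ with $x \in X$'', I sample a dataset $T_{X,Y}$ uniformly among the witnesses. Because $\calB$ already makes $\NP$-oracle calls, deciding this relation sits in $\NP^{\NP}$, and the BGP generator incurs one further $\NP$-layer, so the sampling step runs in $\BPP^{\NP^{\NP^{\NP}}}$. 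With $T_{X,Y}$ in hand, I invoke $\calB$ twice---absorbed into the same oracle budget---to check whether $T_{X,Y} \cup \{(x,0)\}$ or $T_{X,Y} \cup \{(x,1)\}$ remains accepted, returning the bit for which $\calB$ does not reject.

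Correctness follows from the approximate-rejection guarantee of $\calB$: any accepted $T_{X,Y}$ disagrees with $f^\alpha$ on at most a $1/\beta$ fraction of its inputs. Since the elements of $X$ are fresh uniform samples, the probability that the target $x$ falls in this mislabeled fraction is bounded by $1/\beta$. Choosing $\beta$ polynomially large in $1/\epsilon'$---where $\epsilon'$ is the prescribed heuristic error---keeps $B$ polynomial in $n$ and yields $(\calL, \mathsf{Unif}) \in \mathsf{HeurBPP}^{\NP^{\NP^{\NP}}}$, as required. The $\mathsf{PromiseBQP}$ version follows identically with $\mathsf{Unif}$ replaced by $\mathsf{Unif}_{\calL_P}$ from Def.~\ref{def: distr promise}, exactly as in the proof of Theorem~\ref{thm: identification verifiable hard}.

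The main obstacle I expect is tracking the oracle levels of the composite algorithm: one must carefully verify that the BGP uniform generator lifts to witness relations decided by randomized oracle machines, so that sampling witnesses of a relation decidable in $\BPP^{\NP}$ costs exactly two further $\NP$-levels rather than more. A secondary technical point is the joint choice of $\epsilon, \delta, B, \beta$: the thresholds $\epsilon_1, \epsilon_2$ of Def.~\ref{def: approximate-correct } must be respected, $\calB$'s rejection must be sharp enough to enforce the $1/\beta$ accuracy bound on every accepted dataset, and the final heuristic error on $(\calL, \mathsf{Unif})$ must remain below any prescribed inverse polynomial while $B$ stays polynomial in $n$.
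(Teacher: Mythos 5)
Your proposal follows essentially the same route as the paper: use Theorems~\ref{thm: construction distinct} and~\ref{thm: construction avg} to upgrade the approximate-correct learner to an approximate-verifiable one inside the polynomial hierarchy, then invoke the inversion result of Theorem~\ref{thm: hardness approx-verif} with $\mathsf{A}=\NP$ (via the Bellare--Goldreich--Petrank uniform witness generator) to place $(\calL,\mathsf{Unif})$ in $\mathsf{HeurBPP}^{\NP^{\NP^{\NP}}}$. One small caution: your one-line justification that ``the probability that $x$ falls in the mislabeled fraction is bounded by $1/\beta$'' is not valid as stated, since the sampled dataset is conditioned on containing $x$ and on being accepted; the actual guarantee comes from the global counting argument over all accepted datasets in the proof of Theorem~\ref{thm: hardness approx-verif}, which bounds the \emph{fraction of inputs} $x$ that are mislabeled in a large share of their datasets --- but as you invoke that theorem as a black box, the composition stands.
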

\begin{proof}
    The proof combines the intermediate result in Theorem~\ref{thm: hardness approx-verif} (or Theorem~\ref{thm: hardness approx-verif promise}) with the result in Theorem~\ref{thm: construction distinct} for the $c$-distinct concept classes or in Theorem~\ref{thm: construction avg} for average-case-smooth concept classes. Specifically, Theorems~\ref{thm: construction distinct} and~\ref{thm: construction avg} show that if an approximate-correct algorithm exists for a concept class $\calF$ satisfying Def.~\ref{def: c-distant concept } or Def.~\ref{def: average case smoothness}, then there exists an approximate-verifiable algorithm for $\calF$ running in $\NP$. Therefore, as a consequence of the intermediate result in Theorem~\ref{thm: hardness approx-verif} (Theorem~\ref{thm: hardness approx-verif promise}), it follows that if such approximate-verifiable algorithm would exist, then $(\calL,\mathsf{Unif})\in\mathsf{HeurBPP}^{\NP^{\NP^{\NP}}}$ ($(\calL_P,\mathsf{Unif}_{\calL_P})\in\mathsf{HeurBPP}^{\NP^{\NP^{\NP}}}$). This comes from selecting $\mathsf{A}=\NP$ in Theorem~\ref{thm: hardness approx-verif} 
\end{proof}
In the following two subsections we will provide the proof of the results used in the proof of Theorem~\ref{thm: main theorem}.

\subsection{Hardness of approximate-verifiable identification tasks}
We first present an intermediate result, which concerns the hardness of the approximate-verifiable identification task. We define the approximate-verifiable task an identification task where the learning algorithm can reject datasets which contains more than a fraction $\beta$ of inputs incorrectly labeled.

\begin{definition}[Identification task - approximate-verifiable case]\label{def: aprrox verifiable identification}
    Let $\mathcal{F}=\{f^\alpha: \{0,1\}^n\rightarrow\{0,1\} \;\;|\;\;\alpha\in\{0,1\}^m\}$ be a concept class. An \textit{approximate-verifiable} identification algorithm is a (randomized) algorithm $\mathcal{A}^\beta_B $ such that when given as input a set $T=\{x_\ell,y_\ell\}_{\ell=1}^B $ of at least $B$ pairs $(x,y)\in  \{0,1\}^n\times \{0,1\} $, an error parameter $\epsilon>0$ and a random string $r\in R$, it satisfies the two following properties:
    \begin{itemize}
       
        \item \textbf{Soundness} If $\calA^\beta_B (T,\epsilon,r)=\alpha$ then there exists a subset $T'\subset T$, with $|T'|\geq (1-\frac{1}{\beta})B$ such that $y_\ell=f^\alpha(x_\ell)$ for all the $(x_\ell,y_\ell)\in T'$. In case there does not exist a subset $T'\subset T$ of $(1-\frac{1}{\beta})B$ inputs consistent with any one of the concepts, then $\calA^\beta_B $ outputs ``\textit{invalid dataset}''.

         \item \textbf{Completeness} If $T=\{(x_\ell,y_\ell)\}_\ell$ and $y_\ell = f^\alpha(x_\ell)$ for all $(x_\ell, y_\ell) \in T'$, then, for any $\epsilon \geq \frac{1}{3}$ and $\beta\geq0$, the following condition holds:
         \[
         \exists r \text{ s.t. } \mathcal{A}^\beta_B (T, \epsilon, r) = \alpha.
         \]
    \end{itemize}
    
In addition to being a proper PAC learner: if the samples in $T$ come from the distribution $\calD$, i.e. $x_\ell\sim \calD$, and there exists an $\alpha\in\{0,1\}^m$ such that  $T=\{(x_\ell,y_\ell)\}_{\ell=1}^B$, with $x_\ell\sim \calD$ and $y_{\ell}=f^\alpha(x_\ell)\in\{0,1\}$ then with probability $1-\delta$ outputs:
        \begin{equation}
    \mathcal{A}_B (T^\alpha,\epsilon,r)=\Tilde{\alpha}, \;\;\;\Tilde{\alpha}\in\{0,1\}^m,
\end{equation}
with the condition $\mathbb{E}_{x\sim\calD}|f^\alpha(x)-f^{\Tilde{\alpha}}(x)|\leq\epsilon$.\\
We say that $\calA_B $ solves the identification task for a concept class $\calF$ under the input distribution $\calD$ if the algorithm works for any value of $\epsilon,\delta\geq0$.
The success probability $1-\delta$ is over the training sets $T$ where the input points are sampled from the distribution $\calD$ and the internal randomization of the algorithm. The required minimum size $B$ of the input set $T$ is assumed to scale as poly($n$,$1/\epsilon$,$1/\delta$), while the running time of the algorithm scales as poly($B$, $n$,$\beta$).

\end{definition}

In the following theorem we prove hardness of the approximate-verifiable identification task. It will be useful to state our result assuming that the approximate-verifiable algorithm lies in a generic complexity class $\mathsf{A}$ and obtain hardness based on the assumption that $\BQP$ (or $\mathsf{PromiseBQP}$ in the second version of the theorem) is not in $\mathsf{HeurBPP^{NP^{NP^A}}}$. In case $\mathsf{A}=\mathsf{P}$, and the approximate-verifiable algorithm is classically efficient, then $\mathsf{HeurBPP^{NP^{NP^A}}}=\mathsf{HeurBPP^{NP^{NP}}}$. 
\begin{restatable}[Approximate-verifiable identification implies $(\BQP,\mathsf{Unif})\subset\mathsf{HeurBPP}^{\NP^{\NP}}$]{theorem}{approximatervbqp}\label{thm: hardness approx-verif}
Let $\mathcal{F}=\{f^\alpha: \{0,1\}^n\rightarrow\{0,1\} \;\;|\;\;\alpha\in\{0,1\}^m\}$ be a concept class containing at least a $\BQP$ function $f^\alpha$ as in Def.~\ref{def: bqp function} associated to a language $\calL\in\BQP$. If for any $\beta\geq0$ there exists an approximate-verifiable identification algorithm $\calA^\beta_B $ running in $\mathsf{A}$ for the input distribution $\mathsf{Unif}$ as in Def.~\ref{def: aprrox verifiable identification} then $(\calL,\mathsf{Unif})\in\mathsf{HeurBPP}^{\NP^{\NP^{A}}}$.
\end{restatable}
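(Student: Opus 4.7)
The plan is to extend the inversion strategy from the verifiable case (Theorem~\ref{thm: identification verifiable hard}) to the approximate-verifiable setting, showing that the average-case evaluation of $f^\alpha$ on uniform inputs can be performed by a randomized algorithm with access to an $\NP^{\NP^{\mathsf{A}}}$ oracle. The key new ingredient is to exploit the soundness property of $\calA^\beta_B$: any dataset on which the identifier outputs $\alpha$ must agree with $f^\alpha$ on at least a $(1-1/\beta)$ fraction of its points. By taking $\beta$ to be a sufficiently large polynomial in $n$, the per-input error can be driven below any desired inverse polynomial.

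The algorithm for evaluating $f^\alpha(x)$ proceeds as follows. First, sample $B-1$ auxiliary inputs $x_2,\ldots,x_B$ uniformly from $\{0,1\}^n$ and inject the target $x$ at a uniformly random position of the dataset $X=(x_1,\ldots,x_B)$; this random placement enforces positional symmetry. Then, using a relativized version of the Bellare--Goldreich uniform generator (Theorem~\ref{thm: unform generation NP}) applied to the relation ``$(Y,r)$ such that $\calA^\beta_B(T_{X,Y},\epsilon,r)=\alpha$'', sample $(Y,r)$ approximately uniformly among all valid extensions. The label of $x$ in the resulting $Y$ is the algorithm's guess for $f^\alpha(x)$. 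Verifying the relation requires one call to $\calA^\beta_B$, which by hypothesis runs in $\mathsf{A}$; the existential quantification over $(Y,r)$ together with the subsequent uniform generation together account for the two additional $\NP$ layers that appear in the final complexity, giving an outer $\mathsf{BPP}^{\NP^{\NP^{\mathsf{A}}}}$ procedure.

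Correctness follows from a symmetry and averaging argument. Because $X$ is drawn i.i.d.\ from $\mathsf{Unif}$ and $(Y,r)$ is sampled uniformly among acceptors, the joint distribution on dataset positions is exchangeable, so the marginal probability that the injected position carries a label disagreeing with $f^\alpha$ is bounded by the global soundness ratio $1/\beta$. Averaging over $x\sim\mathsf{Unif}$ and applying Markov's inequality, at least a $(1-3/\beta)$-fraction of inputs have per-input error probability below $1/3$; after standard amplification this places the distributional problem $(\calL,\mathsf{Unif})$ in $\mathsf{HeurBPP}^{\NP^{\NP^{\mathsf{A}}}}$, as claimed.

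The main obstacle will be formalizing the exchangeability/symmetry claim: the distribution of $(X,Y)$ conditioned on acceptance is induced by the uniform generator and could in principle be skewed across positions, since $x$ is distinguished and the other $x_i$ are drawn externally. This is addressed by placing $x$ at a uniformly random index in $[B]$ rather than at a fixed slot, which restores position exchangeability and lets the global soundness bound be applied coordinate-wise through a double-expectation over $X$ and over the uniformly sampled witness. A secondary technical point is handling the completeness failure probability $\delta$ of $\calA^\beta_B$, which can be dealt with by repeating the sampling step polynomially many times and taking the majority vote on the label at the injected position; because $\calA^\beta_B$ is a polynomial-time PAC learner, all parameters $\epsilon,\delta,B,\beta$ can be tuned as inverse polynomials in $n$ without breaking efficiency of the overall oracle machine.
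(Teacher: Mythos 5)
Your proposal is correct and rests on the same core strategy as the paper's proof: invert the approximate-verifiable identifier on the target $\alpha$ via the Bellare--Goldreich--Petrank uniform generator inside the polynomial hierarchy, and convert the per-dataset soundness guarantee (at most a $1/\beta$ fraction of mislabeled points in any accepted dataset) into a heuristic average-case evaluator for $f^\alpha$. The realization differs in a substantive way, though. The paper treats the \emph{entire} dataset $T_{X,Y}$ as the witness and samples it uniformly from the set of accepting datasets that contain $x$; its error analysis is then a global double-counting argument over all $2^{nB}$ possible input tuples, which is somewhat loose (it glosses over the multiplicity of accepting labelings per $X$). You instead fix $X$ externally as $x$ plus $B-1$ i.i.d. uniform points and invert only over $(Y,r)$, and your correctness argument is an exchangeability computation. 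Two remarks. First, your stated justification is slightly off: placing $x$ at a uniformly random index does \emph{not} restore positional symmetry for a fixed $x$, since with overwhelming probability $x$ occurs exactly once in $X$ and the index is then determined by $X$ itself, so an adversarial identifier could in principle concentrate its mislabeling budget on specific input \emph{values}. What saves the argument is exactly the double expectation you invoke at the end: after averaging over $x\sim\mathsf{Unif}$, the pair (tuple $X$, target index $P$) is distributed as i.i.d.\ uniform entries times an independent uniform index, so $\mathbb{E}_{x}\big[\Pr[\text{error on }x]\big]\leq 1/\beta$ follows coordinate-wise from soundness, and Markov plus amplification finishes the proof; since only the average over $x$ is needed for the heuristic guarantee, your argument is sound as written, and you should just not lean on per-$x$ exchangeability. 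Second, your variant actually buys something the paper does not claim: because the relation you invert is verified by a single run of $\calA^\beta_B$ (the randomness $r$ is part of the witness rather than existentially quantified inside the relation, and $X$ is not a witness at all), uniform generation needs only an $\NP^{\mathsf{A}}$ oracle rather than $\NP^{\NP^{\mathsf{A}}}$, so your construction would in fact place $(\calL,\mathsf{Unif})$ in $\mathsf{HeurBPP}^{\NP^{\mathsf{A}}}$, one level lower than the stated bound.
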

\begin{proof}
	Let the concept $f^\alpha\in\calF$ be the $\BQP$-function associated to a given $\BQP$ language.
	In this proof we are considering training sets $T_{X,Y}=\{(x_\ell,y_\ell)\}_{\ell=1}^B$ as sequences of concatenated bitstrings, i.e. $T_{X,Y}=\underbrace{x_1y_1x_2y_2...x_B y_B}_{B(n+1) \;\mathrm{ bits}}$ where $X=\{x_1,x_2,...,x_B\}$ and $Y=\{y_1,y_2,...,y_B\}$. 
    
	There are $2^{nB}$ possible sets $X=\{x_\ell\}_{\ell=1}^B$ of $B$ inputs $x\in\{0,1\}^n$. For any of these sets $X$, we can construct a dataset $T_{X,Y}=\{(x_\ell,y_\ell)\}_{\ell=1}^B$  by assigning a set of labels $Y=\{y_\ell\}_{\ell=1}^B$ to the inputs in $X$.  We define now for each input $x\in\{0,1\}^n$ the set $T^\alpha(x)$ containing all the datasets such that:
	\begin{equation}\label{eq: Talphaxdelta}
		T^{\alpha}(x)=\{T_{X,Y}=\{(x_\ell,y_\ell)\}_{\ell=1}^B \;| \; \exists r \text{ s.t. }\calA^\beta_B(T_{X,Y},r,1/3)=\alpha\;\wedge\; x\in X \}
			\end{equation}
	Because of the soundness property of the algorithm $\calA^{\beta}_{B}$ in Def.~\ref{def: aprrox verifiable identification}, every set $T_{X,Y}\in T^{\alpha}(x)$ will have at least a fraction $1-\frac{1}{\beta}$ of the $x$'s correctly labeled by $f^\alpha$. We can now construct a randomized algorithm $M^\alpha(\calA^\beta_B,x,s)$ which, using the approximate-verifiable algorithm $\calA^\beta_B$ in Def.~\ref{def: aprrox verifiable identification}, on input $x$ sample one random dataset $T_{X,Y}$ from $T^\alpha(x)$ and label $x$ accordingly to the label of the corresponding $y$ in $T_{X,Y}$.
	
	 \begin{algorithm}[h!]\label{algo: Dataset search}
		\begin{algorithmic}[1]
			\Function{$M^\alpha$}{$\calA^\beta_B ,x,s$}  
			\State \textbf{step 1.} Sample $T_{X,Y}$ from the set $T^\alpha(x)$ uniformly at random.
			\State \textbf{step 2.} Output $y$ from $(x,y)\in T_{X,Y}$ 
			\EndFunction
		\end{algorithmic}
	\end{algorithm}

	We first show that the algorithm $M^\alpha$ belongs to $\BPP^{\NP^{\NP^{\mathsf{A}}}}$ if $\calA^\beta_B $ runs in $\mathsf{A}$.      
	In the algorithm $M^\alpha$ the random string $s$ determines the set $T_{X,Y}\in T^\alpha(x)$ sampled. We then consider the following set:
	\begin{equation}\label{eq: set P}
		P(\mathcal{A}_B ,x)=\{\alpha\in\{0,1\}^m ~|~ \exists T_{X,Y} \text{ s.t. } T_{X,Y}\in T^\alpha_\delta(x)\},
	\end{equation}
	
	The set $ P(\mathcal{A}_B ,x) $ contains all $\alpha$ for which there exists a dataset $ T_{X,Y} $ and a random string $ r $ such that the identification algorithm $\mathcal{A}^\beta_B $ in Def.~\ref{def: aprrox verifiable identification} outputs $\alpha$, i.e. $\mathcal{A}^\beta_B (T_{X,Y}, 1/3, r) = \alpha$, and $x$ appears in $T_{X,Y}$. Deciding if an $\alpha$ belongs to $P(\calA^\beta_B ,x)$ can be done in $\NP^{\NP^{\mathsf{A}}}$. The reason for that is that within $\NP^{\mathsf{A}}$ it is possible to decide if a given $T_{X,Y}$ belongs to $T^\alpha(x)$ since the algorithm $\calA^\beta_B (T_{X,Y},r,1/3)$ runs in polynomial time.
	Therefore, the condition $T_{X,Y} \in T^\alpha(x)$, which ensures that $\alpha \in P(\mathcal{A}^\beta_B , x)$, can be verified in polynomial time with the help of an $\mathsf{NP^{\mathsf{A}}}$ oracle. 
	We can then use Theorem~\ref{thm: unform generation NP} from~\cite{bellare2000uniform} that guarantees the existence of an algorithm in $\BPP^{\NP}$ such that given any $\NP$ relation outputs a 
	witness for a given $x$ uniformly at random. In particular, we apply Theorem~\ref{thm: unform generation NP} to sample a training set $T_{X,Y}$ from the set $T^\alpha(x)$ which can be regarded as witnesses for the relation $\alpha\in P(\calA^\beta_B ,x)$. This allows to perform the first step of the algorithm $M^\alpha$ in $\BPP^{\NP^{\NP^{\mathsf{A}}}}$. We have then proved that the algorithm $M^\alpha$ can be run in $\BPP^{\NP^{\NP^{\mathsf{A}}}}$.
	
	We now show that the algorithm $M^\alpha(\calA^\beta_B ,x,s)$ can correctly evaluate the $\BQP$ function $f^\alpha$  on a large fraction of input points. In particular, we obtain that for any $\epsilon \geq0$ the algorithm $M^\alpha$ is such that:
	\begin{equation}\label{eq:target prob}
			\text{Pr}_{x\sim\mathsf{Unif}\{0,1\}^n}\bigg[\text{Pr}_s[M^\alpha(\calA_B ,x,s)\neq f^\alpha(x)]\geq \frac{2}{5}\bigg]\leq\epsilon.
	\end{equation}

	In other words, the fraction of inputs $x \in \{0,1\}^n$ for which algorithm $M^\alpha$ misclassifies with probability greater than $2/5$ is less than $\epsilon$.
	Let us first consider the set $T^\alpha(x)$ for a given $x$. Because of the soundness property of the approximate-verifiable algorithm $\calA^\beta_B$ in Def.~\ref{def: aprrox verifiable identification},  in every dataset $T_{X,Y}\in T^\alpha(x) $ the number of inputs $x\in X$ incorrectly labeled are at most $\frac{1}{\beta} B $. Since in total there can be $2^{nB}$ different sets $X$, the maximum budget for incorrectly labeled inputs across all the possible datasets is $\frac{B}{\beta}2^{nB}$. The algorithm $M^\alpha$ will misclassify a point $x$  with a probability greater than $\frac{2}{5}$ if the point appears with the incorrect label on more than $\frac{2}{5}$ of the datasets $T_{X,Y}\in T^\alpha(x)$. The total number of possible sets $X$ of $B$ points which contains $x$ is given by the total number of possible datasets minus the number of datasets which do not contains $x$, i.e.
	\begin{equation}
		2^{nB}-(2^n-1)^B=2^{nB}-2^{nB}(1-\frac{1}{2^n})^B\sim B 2^{n(B-1)}.
	\end{equation}
	Then, the maximum number $n_{\text{max}}$ of different inputs $x$ for which $\text{Pr}_s[M^\alpha(\calA^\beta_B ,x,s)\neq f^\alpha(x)]\geq \frac{2}{5}$ is 
	\begin{equation}
		n_{\text{max}} \frac{2}{5}2^{n(B-1)}B\leq\frac{1}{\beta}B2^{nB}
	\end{equation}
	which gives $	n_{\text{max}}\leq \frac{5}{6\beta}2^n$ and therefore a fraction $\epsilon=\frac{5}{6\beta}$ of misclassified inputs. By choosing $\beta$ to be sufficiently small (e.g., inverse polynomial in size), we can make the error $\epsilon$ in Eq.~(\ref{eq:target prob}) arbitrarily small.

	When $x$ comes from the uniform distribution over the set of allowed inputs, this means that:
	\[
	\text{Pr}_{x\sim\mathsf{Unif}\{0,1\}^n}\bigg[\text{Pr}_s[M^\alpha(\calA_B ,x,s)\neq f^\alpha(x)]\geq \frac{2}{5}\bigg]\leq\epsilon
	\]
	and thus $(\calL,\mathsf{Unif})=\mathsf{HeurBPP}^{{\NP}^{\NP^{\mathsf{A}}}}$.

\end{proof}

As it was the case for the previous results, we can show hardness for the approximate-verifiable identification task based on the assumption that there exists a distribution $\mathsf{Unif}_{\calL_P}$ and a $\mathsf{PromiseBQP}$-complete language $\calL_P$ not heuristically decidable given access to oracle in the $\PH$.

\begin{restatable}[Approximate-verifiable identification implies \small$(\mathsf{PromiseBQP},\mathsf{Unif}_{\calL_P})\subset\mathsf{HeurBPP}^{\NP^{\NP}}$]{theorem}{approximatervp}\label{thm: hardness approx-verif promise}
Let $\mathcal{F}=\{f^\alpha: \{0,1\}^n\rightarrow\{0,1\} \;\;|\;\;\alpha\in\{0,1\}^m\}$ be a concept class containing at least a $\mathsf{PromiseBQP}$ function $f$ as in Def.~\ref{def: PromiseBQP function} associated to a language $\calL_P\in\mathsf{PromiseBQP}$ and let $\mathsf{Unif}_{\calL_P}$ be an associated input distribution as in Def.~\ref{def: distr promise}. If there exists an approximate-verifiable identification algorithm $\calA_B $ running in $\mathsf{A}$ for the input distribution $\mathsf{Unif}_{\calL_P}$ as in Def.~\ref{def: aprrox verifiable identification} then $(\calL_P,\mathsf{Unif}_{\calL_P})\in\mathsf{HeurBPP}^{\NP^{\NP^A}}$.
\end{restatable}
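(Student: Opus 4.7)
The plan is to follow the proof of Theorem~\ref{thm: hardness approx-verif} essentially verbatim, with a single modification to the counting step that accounts for the promise structure of $\calL_P$. The overall architecture carries over unchanged: for each input $x$, define
\[
T^\alpha(x) = \bigl\{\, T_{X,Y} \;\bigl|\; X \subseteq \Pi,\ x \in X,\ \exists r:\calA^\beta_B(T_{X,Y}, 1/3, r) = \alpha\,\bigr\},
\]
where $\Pi\subseteq \calL_{\text{yes}}\cup\calL_{\text{no}}$ is the support of $\mathsf{Unif}_{\calL_P}$; define $P(\calA^\beta_B, x) = \{\alpha : T^\alpha(x)\neq\emptyset\}$; and construct the algorithm $M^\alpha(\calA^\beta_B, x, s)$ that samples $T_{X,Y}\in T^\alpha(x)$ uniformly at random using its random string $s$ and outputs the label assigned to $x$ in that dataset.

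The containment $M^\alpha\in\BPP^{\NP^{\NP^{\mathsf{A}}}}$ is argued exactly as in the $\BQP$ version. The one genuinely new consideration is the constraint $X\subseteq\Pi$: since $\mathsf{Unif}_{\calL_P}$ is efficiently classically samplable by Def.~\ref{def: distr promise}, the predicate ``$x_\ell\in\Pi$'' is $\NP$-decidable (guess random bits for the sampler and verify its output), so membership in $T^\alpha(x)$ remains $\NP^{\mathsf{A}}$-decidable and the relation $\alpha\in P(\calA^\beta_B, x)$ stays in $\NP^{\NP^{\mathsf{A}}}$. Theorem~\ref{thm: unform generation NP} of~\cite{bellare2000uniform} then places the uniform sampling step within $\BPP^{\NP^{\NP^{\mathsf{A}}}}$, so $M^\alpha$ has the required complexity.

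The counting argument reproduces line for line with $|\Pi|$ replacing $2^n$: there are $|\Pi|^B$ ordered input tuples drawn from $\Pi$, each $x\in\Pi$ appears in roughly $B|\Pi|^{B-1}$ of them, and the soundness property of $\calA^\beta_B$ still bounds the total number of mislabeled positions across all consistent datasets by $\tfrac{B}{\beta}|\Pi|^B$. The identical pigeonhole then yields at most $\tfrac{5}{2\beta}|\Pi|$ inputs in $\Pi$ on which $M^\alpha$ errs with probability exceeding $\tfrac{2}{5}$. Choosing $\beta$ polynomially large in $n$ drives this fraction below any desired inverse polynomial $\epsilon$, so $M^\alpha$ satisfies the heuristic condition of Eq.~\eqref{eq:heuristic condition} with respect to $\mathsf{Unif}_{\calL_P}$, yielding $(\calL_P, \mathsf{Unif}_{\calL_P}) \in \mathsf{HeurBPP}^{\NP^{\NP^{\mathsf{A}}}}$.

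The main point requiring care is precisely the restriction $X\subseteq\Pi$: the $\NP$-witness for $T_{X,Y}\in T^\alpha(x)$ must package together the sampler randomness for each coordinate $x_\ell$ and the algorithm randomness $r$ for $\calA^\beta_B$, so that Bellare--Goldwasser uniform generation produces witnesses compatible with the promise-restricted counting. Once this bookkeeping is in place, every remaining step—membership of $P(\calA^\beta_B,x)$ in $\NP^{\NP^{\mathsf{A}}}$, uniform witness generation, and the pigeonhole bound—transfers directly from the $\BQP$ case.
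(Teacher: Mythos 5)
Your proposal is correct and follows essentially the same route as the paper, which likewise reduces to the proof of Theorem~\ref{thm: hardness approx-verif} by restricting the training-set inputs to the support of $\mathsf{Unif}_{\calL_P}$ and invoking its efficient classical samplability; your additional bookkeeping (packaging the sampler randomness into the $\NP$ witness and redoing the count with $|\Pi|$ in place of $2^n$) just makes explicit the details the paper leaves implicit.
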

\begin{proof}
    The argument proceeds in direct analogy to the proof of Theorem~\ref{thm: hardness approx-verif}, the only change regards the sets of trainingsets considered. In particular, we impose that the inputs $x_\ell$ in the training sets $T^\alpha(x)$ of Eq.~(\ref{eq: Talphaxdelta}) comes from the input distribution $\mathsf{Unif}_{\calL_P}$. Since we assume efficient classical samplability of $\mathsf{Unif}_{\calL_P}$ in Def.~\ref{def: distr promise}, this condition can be enforced within polynomial time. The rest of the proof follow the same steps as the proof of Theorem~\ref{thm: hardness approx-verif}.
\end{proof}

\subsection{Construction of an approximate-verifiable algorithm in the $\PH$}\label{app:construction}
We now present the second part of results needed to complete the proof of Theorem~\ref{thm: main theorem}. Specifically, we show how to construct an approximate-verifiable algorithm which satisfies Def.~\ref{def: aprrox verifiable identification} given an approximate-correct algorithm which satisfies the non-verifiable task in Def.~\ref{def: approximate-correct }. We will show this is possible by climbing up one layer in the $\PH$ for two types of concept classes: $c$-distinct and average-case-smooth. 

\begin{theorem}\label{thm: construction distinct}
    If there exists an approximate-correct identification algorithm $\calA_B $ as in Def.~\ref{def: approximate-correct } for a $c$-distinct concept class $\calF$, with $c\geq 1/3$, which works correctly for a dataset of size $B$, then there also exists an approximate-verifiable algorithm $\calA^{\beta}_{\beta B} $ as in Def.~\ref{def: aprrox verifiable identification} in $\NP$ which works correctly for a dataset of size $\beta B$.
\end{theorem}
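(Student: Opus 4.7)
The plan is to place $\calA^\beta_{\beta B}$ in $\NP$ by guessing a candidate label together with a consistency witness and then invoking $\calA_B$ on $B$-sized sub-blocks for verification. The $c$-distinctness assumption with $c \geq 1/3$ is what pins down the output of $\calA_B$ uniquely whenever it is run on a block genuinely consistent with some concept, and this is the structural feature that makes the $\NP$ construction possible.

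Concretely, on input $T$ of size $\beta B$, I would have $\calA^\beta_{\beta B}$ nondeterministically guess a candidate $\tilde{\alpha}$, a subset $T^{\mathrm{good}} \subseteq T$ of size at least $(1-1/\beta)|T| = (\beta-1) B$, a partition of $T^{\mathrm{good}}$ into $(\beta-1)$ blocks $T^{\mathrm{good}}_1, \ldots, T^{\mathrm{good}}_{\beta-1}$ of size $B$ with distinct inputs, and, for each block, a random string $r_j$. The deterministic verifier then checks in polynomial time that $\calA_B(T^{\mathrm{good}}_j, 1/3, r_j) = \tilde{\alpha}$ for every $j$. If a witness is found, the algorithm outputs $\tilde{\alpha}$; otherwise it outputs ``invalid dataset''. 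Since $\calA_B$ runs in classical polynomial time, the whole procedure lies in $\NP$.

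Completeness is the easier direction. If $T$ is fully consistent with some $\alpha$, take $T^{\mathrm{good}} = T$, $\tilde{\alpha} = \alpha$, and any partition of $T$ into $\beta$ blocks of size $B$ with distinct inputs. Each block is then fully consistent with $\alpha$ on distinct inputs, so property 2 of Def.~\ref{def: approximate-correct } supplies, for each block, an $r_j$ with $\calA_B(T_j, 1/3, r_j) = \alpha_{2,j}$ satisfying $\mathbb{E}_{x \sim \mathsf{Unif}}|f^{\alpha_{2,j}}(x) - f^{\alpha}(x)| \leq 1/3$. The $c$-distinctness assumption with $c \geq 1/3$ forces any two distinct concepts in $\calF$ to disagree on at least a $1/3$ fraction of inputs under the uniform distribution, so $\alpha_{2,j} = \alpha$ for every block. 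The required witness therefore exists and the verifier outputs $\alpha$, matching the completeness clause of Def.~\ref{def: aprrox verifiable identification}.

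The main obstacle is soundness: I must show that whenever a witness is accepted, the guessed $T^{\mathrm{good}}$ is genuinely consistent with $\tilde{\alpha}$, giving the $(1-1/\beta)$-fraction bound of Def.~\ref{def: aprrox verifiable identification}. Property 1 of Def.~\ref{def: approximate-correct } by itself only guarantees that each certified block contains at least one point consistent with $\tilde{\alpha}$, which is far too weak. To strengthen it I would combine property 1 with $c$-distinctness as follows: if a certified block $T^{\mathrm{good}}_j$ contained inputs inconsistent with $\tilde{\alpha}$, then any concept making that block fully consistent would have to disagree with $\tilde{\alpha}$ on those inputs and hence, by $c$-distinctness with $c \geq 1/3$, differ from $\tilde{\alpha}$ globally on at least a $1/3$ fraction of inputs; property 2 applied to any consistent ``completion'' of the block would then force the $\calA_B$-output to lie within PAC distance $1/3$ of that alternative concept, contradicting the verifier's observation that the output equals $\tilde{\alpha}$. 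The delicate point is to make this quantitative argument rigorous block-by-block without losing constant factors in the final bound $1/\beta$, and in the worst case such losses can be absorbed by replacing $\beta$ by a larger constant without leaving $\NP$.
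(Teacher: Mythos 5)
Your construction takes a genuinely different route from the paper's, and its soundness argument has a real gap. The paper never tries to positively certify consistency of blocks; instead it introduces the flipped concept class $\bar{\calF}=\{f^{\bar\alpha}\,:\,f^{\bar\alpha}(x)=f^\alpha(x)\oplus1\}$ and checks, for the candidate $\alpha$, whether there exists a $B$-sized subset $T_B\subset T$ and a random string $r$ with $\bar{\calA}_B(T_B,\epsilon_c,r)=\bar\alpha$. A subset on which every point is \emph{mislabeled} with respect to $f^\alpha$ is exactly a subset fully consistent with $f^{\bar\alpha}$, so property 2 plus $1/3$-distinctness guarantees such an $r$ exists (detection is an existential, hence $\NP$, statement), while property 1 guarantees that no such $(T_B,r)$ exists when $T$ is fully consistent with $f^\alpha$. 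Passing the check therefore certifies that fewer than $B$ of the $\beta B$ points are mislabeled, which is precisely the $(1-1/\beta)$ soundness bound.

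Your proposal instead guesses a partition into blocks and verifies that $\calA_B$ outputs $\tilde\alpha$ on each block for some guessed $r_j$. As you yourself note, property 1 only yields one consistent point per certified block, i.e.\ $\beta-1$ points out of $\beta B$ --- far from the required fraction. The repair you sketch does not work: property 2 is an \emph{existential} statement over $r$ (``there exists at least one $r\in R$ for which the output is close to the consistent concept''), so the fact that a block is fully consistent with some other concept $\alpha'$ in no way contradicts the verifier's observation that $\calA_B(T_j,1/3,r_j)=\tilde\alpha$ for the \emph{particular} $r_j$ you guessed; both can hold simultaneously. Moreover, a block containing points inconsistent with $\tilde\alpha$ need not be fully consistent with any concept at all, in which case property 2 gives you nothing. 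Since the only universal (over $r$) guarantee in Def.~\ref{def: approximate-correct } is the negative one of property 1, a positive block-certification scheme cannot be made sound without the label-flipping device; you would need to adopt something like the paper's check on $\bar{\calF}$ to close the argument.
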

\begin{proof}
The proof works by explicitly constructing an algorithm $\mathcal{A}^{\beta}_{\beta B} $ that is within $\NP$ and that, given access to the approximate-correct algorithm $\calA_B $ in Def.~\ref{def: approximate-correct }, satisfies the condition in Def.~\ref{def: aprrox verifiable identification} of approximate-verifiable identification algorithm. In particular, the main objective of the proof is to show how an approximate-correct identification algorithm in Def.~\ref{def: approximate-correct } can be used to construct an algorithm in $\NP$ which satisfies the completeness condition of Def.~\ref{def: aprrox verifiable identification}.
We first present the main steps of the algorithm, then show that they can be performed within $\NP$ and finally check the correctness of the algorithm.

First we notice the following. If there exists a learning algorithm $\calA_B $ which solves the identification task in Def.~\ref{def: approximate-correct } for the concept class $\calF$, then we can easily construct an algorithm $\bar{\calA_B }$ that can solve the identification task even for the concept class $\bar{\calF}$ defined as follows:
\begin{equation}
    \bar{\calF}=\{f^{\bar{\alpha}}:\{0,1\}^n\rightarrow\{0,1\}\;|\; \forall x \in \{0,1\}^n \;\; f^{\bar{\alpha}}(x)\oplus 1=f^\alpha(x)\in\calF \}
\end{equation}
Specifically, the algorithm $\bar{\calA_B }$ solves the identification task for the concept class $\bar{\calF}$ by flipping the labels of all points in the input dataset before applying $\calA_B $.

We are now ready to describe the algorithm $\calA^{\beta}_{\beta B} $. We notice that having access to the algorithm $\calA_B $ in Def.~\ref{def: approximate-correct } we can implement the following function which, given a dataset $T=\{(x_\ell,y_\ell)\}_{\ell=1}^{\beta B}$ of $\beta B$ inputs and a label $\alpha$ corresponding to one of the concept, detects if there is a subset $T_B$ of $B$ of inputs not consistent with $\alpha$. Consider an algorithm $\calA_B $ as in Def.~\ref{def: approximate-correct } and let $\epsilon_c$ be $\epsilon_c=\min(\epsilon_1,\epsilon_2)$, with $\epsilon_1$ and $\epsilon_2$ the error values in the conditions of Eq.~(\ref{eq: property 1}) and (\ref{eq: property 2}) of Def.~\ref{def: approximate-correct }. Then we define the \texttt{check} function as follows.

    \begin{algorithm}[H]\label{algo: check distinct}
\begin{algorithmic}[1]
\Function{check}{$\alpha,T$} 
    \State \textbf{for} $\; T_B \textbf{ in } T$
    \State $\quad$ \textbf{for} $r \textbf{ in } R$
    \State $\quad\quad$ \textbf{if} $\bar{\calA_B }(T_B,\epsilon_c,r)=\bar\alpha$ \textbf{return} ``invalid dataset''
        
\EndFunction
\end{algorithmic}
\end{algorithm}
The function \texttt{check} can be implemented by a polynomial time algorithm running in the second level of the $\PH$. In order to prove it, consider the following set:
\begin{equation}
        S_1(\alpha,\beta)=\{T=\{(x_\ell,y_\ell)\}_{\ell=1}^{\beta B} \;|\; \exists r\in R, T_B = \{(x_i,y_i)\}_{i=1}^{B}, \; T_B\subset T \text{ s.t. } \bar{\mathcal{A}_B }(T_B,\epsilon_c,r)=\bar\alpha \;\}
    \end{equation}
The set $S_1(\alpha,\beta)$ contains all the datasets $T$ for which there exists at least one subset of $B$ inputs $T_B$ and a random string $r\in R$ such that $\bar\calA_B (T_B,\epsilon_c,r)=\bar\alpha$. Note that for the property of Eq.~(\ref{eq: property 1})  in Def.~\ref{def: approximate-correct } of approximate-correct identification algorithm, a dataset of $\beta B$ inputs entirely consistent with $\alpha$ is not contained in $S_1(\alpha,\beta)$. On a given input dataset $T=\{(x_\ell,y_\ell)\}_{\ell=1}^{\beta B}$, the algorithm \texttt{check} decides if the dataset $T$ belongs to $S_1(\alpha,\beta)$ or not. 
Deciding if a set $T$ belongs to $S_1(\alpha,\beta)$ can be done in $\mathsf{NP}$ as a YES instance can be verified in polynomial time if $\calA_B $ runs in polynomial time.

We construct the algorithm $\calA^{\beta}_{\beta B} $ as follows.
    \begin{algorithm}[H]
\begin{algorithmic}[1]
\Function{$\calA^{\beta}_{\beta B}$}{$T,\epsilon,r$} 
    \State $\alpha\gets \calA_{\beta B} (T,\epsilon,r)$ \Comment{with $\calA_{\beta B} $ the approximate-correct identification algorithm in Def.~\ref{def: approximate-correct }}
    \State \textbf{if} \texttt{check}($\alpha,T$) $=$ ``invalid dataset'' 
    \State $\quad$ \textbf{return} ``invalid dataset''
    \State \textbf{else} \textbf{return} $\alpha$  
        
\EndFunction
\end{algorithmic}
\end{algorithm}

We now show that, for $c$-distinct concept classes with $c\geq1/3$, the algorithm $\calA^{\beta}_{\beta B} $ described above rejects any dataset $T=\{(x_i,y_i)\}_{i=1}^{\beta B}$ with $B$ or more inputs not labeled by $f^{\alpha}$. Thus, the constructed $\calA^{\beta}_{\beta B} $ satisfies the soundness condition of the approximate-verifiable algorithm as in Def.~\ref{def: aprrox verifiable identification}. 
 First, if there exists a subset $T_B\subset T$ such that every point in it is labeled by $f^{\bar\alpha}$, then there exists a random string $r$ and a value $\epsilon\leq\epsilon_2$ such that $\bar\calA_B (T_B,\epsilon,r)=\bar\alpha'$ with the condition $\mathbb{E}_{x\sim\mathsf{Unif}(\{0,1\}^n)}|f^{\bar\alpha}(x)-f^{\bar\alpha '}(x)|<\frac{1}{3}$. However, since by assumption all the concepts differ one from each other on more than $1/3$ of the inputs, then $\bar\calA_B (T_B,\epsilon,r)=\bar\alpha$. It follows that the function \texttt{check} will successfully detect and discard any input dataset that contains $B$ or more inputs not labeled by $\alpha$. Therefore, for any input dataset of $\beta B$ or more inputs the algorithm $\calA^{\beta}_{\beta B} $ satisfies the conditions in Def.~\ref{def: aprrox verifiable identification} of approximate-verifiable algorithm.

\end{proof}

As anticipated in the main text, we provide here an example of a $0.5$-distinct concept class, which therefore satisfies the condition of Theorem~\ref{thm: construction distinct}, where the concepts are all ($\mathsf{Promise})$$\BQP$ complete functions. 

\begin{definition}[$0.5$-distinct concept class with $\BQP$ concepts]\label{def: bqp concept class with distant concepts}
Let $f$ be a $\mathsf{PromiseBQP}$ complete function as in Def.~\ref{def: PromiseBQP function} on input $x\in\{0,1\}^n$. Then for any $f$ there exists a concept class $\mathcal{F}$ containing $2^n$ $\mathsf{PromiseBQP}$-complete concepts defined as:
\begin{align}
    \mathcal{F}&=\{f^\alpha: \{0,1\}^n\rightarrow \{0,1\}\;\;|\;\; \alpha\in\{0,1\}^n\}\\
    &f^\alpha: x\rightarrow f(x)\oplus(\alpha\cdot x)
\end{align}
where $\alpha\cdot x$ is the bitwise inner product modulo 2 of the vectors $\alpha$ and $x$.  
\end{definition}

Clearly for any $\alpha_1,\alpha_2\in\{0,1\}^n$, $\alpha_1\neq\alpha_2$, it holds that $f^{\alpha_1}(x)\neq f^{\alpha_2}(x)$ on exactly half of the input $x$, also they are all $\mathsf{PromiseBQP}$ complete functions as by evaluating one of them we can easily recover the value of $f(x)$. 

\begin{restatable}
{theorem}{constructionavg}\label{thm: construction avg}
      If there exists an approximate-correct identification algorithm $\calA_B $ as in Def.~\ref{def: approximate-correct } for an average-case-smooth concept class $\calF$, which works correctly for a dataset of size $B$, then there also exists an approximate-verifiable algorithm $\calA^{\beta}_{\beta B} $ as in Def.~\ref{def: aprrox verifiable identification} in $\NP$ which works correctly for a dataset of size $\beta B$.
\end{restatable}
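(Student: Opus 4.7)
My plan is to adapt the construction from the proof of Theorem~\ref{thm: construction distinct} nearly verbatim, replacing exact label matching by matching in the $d$-metric guaranteed by average-case smoothness. As a preliminary step I would define the auxiliary algorithm $\bar\calA_B$ exactly as in Theorem~\ref{thm: construction distinct}: it takes a dataset, flips all its labels, and then applies $\calA_B$. This yields an approximate-correct identification algorithm for the complementary concept class $\bar\calF=\{f^\alpha\oplus 1 : f^\alpha\in\calF\}$, which inherits average-case smoothness from $\calF$ with the same constant $C$.

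Next, I would define a modified \texttt{check} function that, given a candidate label $\alpha$ and a dataset $T$, uses an $\NP$ oracle to search for a subset $T_B\subset T$ of size $B$ and a random string $r$ such that the output $\bar\alpha':=\bar\calA_B(T_B,\epsilon_c,r)$ satisfies $d(\bar\alpha,\bar\alpha')\leq 1/(3C)$. Because $\bar\calA_B$ is polynomial-time and $d$ is assumed to be polynomial-time computable, this existence predicate lies in $\NP$, with witness $(T_B,r)$ plus a polynomial-time check of the inequality. The approximate-verifiable algorithm $\calA^\beta_{\beta B}$ would then be the same two-step composition as in Theorem~\ref{thm: construction distinct}: run $\calA_{\beta B}(T,\epsilon,r)$ to obtain a candidate $\alpha$, invoke \texttt{check}$(\alpha,T)$, and return ``invalid dataset'' if the check fires, $\alpha$ otherwise.

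The soundness direction is where average-case smoothness carries the weight of $c$-distinctness in the original proof. If $T$ contains strictly more than $B$ inputs not labeled by $f^\alpha$, I pick $B$ of them to form a subset $T_B$ that is fully consistent with $\bar\alpha$. Property~(2) of approximate-correct identification applied to $\bar\calA_B$ on $T_B$ yields a string $r^{\ast}$ with $\bar\calA_B(T_B,\epsilon_c,r^{\ast})=\bar\alpha'$ and $\mathbb{E}_{x\sim\mathsf{Unif}}|f^{\bar\alpha}(x)-f^{\bar\alpha'}(x)|\leq 1/3$, which by average-case smoothness upgrades to $d(\bar\alpha,\bar\alpha')\leq 1/(3C)$. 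Hence \texttt{check} fires, $T$ is rejected, and the soundness condition of Def.~\ref{def: aprrox verifiable identification} holds.

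The main obstacle will be the completeness direction, which in the $c$-distinct proof was immediate: property~(1) rules out $\bar\calA_B$ ever outputting $\bar\alpha$ on a fully inconsistent subset, and $c\geq 1/3$ automatically makes any other output lie outside the ball tested by \texttt{check}. With only average-case smoothness, property~(1) still forbids the literal output $\bar\alpha$ but leaves open the possibility of a different $\bar\alpha''$ with $d(\bar\alpha,\bar\alpha'')\leq 1/(3C)$, since the smoothness inequality is one-directional. I plan to close this gap by choosing the threshold $\tau$ in \texttt{check} strictly smaller than the minimum $d$-separation between $\bar\alpha$ and the (finite) set of labels that $\bar\calA_B$ can emit on subsets of $T$; for the natural classes appearing in Corollary~\ref{cor:learn obs} this minimum separation is at least inverse-polynomial, so setting $\tau=\min(1/(3C),d_0/2)$ simultaneously preserves the smoothness-based soundness step and recovers the exact-output behaviour exploited in Theorem~\ref{thm: construction distinct}. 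Once this is arranged, the remainder of the argument --- $\NP$ membership, polynomial dataset size $\beta B$, and correctness of the composition with $\calA_{\beta B}$ --- transfers verbatim from the proof of Theorem~\ref{thm: construction distinct}.
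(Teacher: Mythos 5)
Your construction and soundness argument coincide with the paper's: its proof of Theorem~\ref{thm: construction avg} is exactly the proof of Theorem~\ref{thm: construction distinct} with the \texttt{check} predicate replaced by $d(\bar{\calA_B}(T_B,\epsilon_c,r),\bar\alpha)\leq 1/3$ (your $1/(3C)$ is the more careful version of that threshold), and the rejection of any dataset containing $B$ inputs consistent with $f^{\bar\alpha}$ follows from property~(2) of Def.~\ref{def: approximate-correct } combined with the smoothness inequality, exactly as you argue. The completeness worry you raise is also legitimate: property~(1) only forbids the literal output $\bar\alpha$ on a subset fully inconsistent with $f^{\bar\alpha}$, and since small $d$ does not imply small function distance under Def.~\ref{def: average case smoothness}, nothing prevents $\bar{\calA_B}$ from emitting some $\bar\alpha''$ with $d(\bar\alpha'',\bar\alpha)$ below the threshold on a subset of a perfectly valid dataset, causing the check to fire. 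The paper does not address this point at all; it argues only the soundness direction and asserts that the rest ``follows the same structure,'' so here your analysis is sharper than the source.

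The gap is in your proposed repair. Setting the threshold to $\tau=\min(1/(3C),d_0/2)$, where $d_0$ is the minimum $d$-separation among emittable labels, destroys the soundness step whenever $d_0\leq 1/(3C)$: property~(2) only guarantees an output $\bar\alpha'$ with $d(\bar\alpha,\bar\alpha')\leq 1/(3C)$, and such a $\bar\alpha'$ may lie at distance at least $d_0>\tau$ from $\bar\alpha$, so the check no longer fires on a subset fully consistent with $f^{\bar\alpha}$. In the complementary regime $d_0>1/(3C)$, the smoothness inequality gives $\mathbb{E}_{x}|f^{\alpha_1}(x)-f^{\alpha_2}(x)|\geq C\,d_0>1/3$ for every distinct pair, i.e.\ the class is already $1/3$-distinct and Theorem~\ref{thm: construction distinct} applies directly, so your fix only rescues cases for which the average-case-smooth theorem is not needed. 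The paper's own motivating example (Lemma~\ref{lemma: avg-case-smooth concept class }) has $d_0=1/m$ and $C=1/2$, hence falls squarely in the regime where your threshold breaks soundness. To reproduce the paper's argument you should keep the check at threshold $1/(3C)$ and prove only the soundness direction as it does, rather than import a separation assumption absent from the theorem's hypotheses; the completeness direction genuinely requires an additional idea (or an additional assumption) that neither your proposal nor the paper supplies.
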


\begin{proof}
    The proof is analogous to the one of Theorem~\ref{thm: construction distinct}. In particular we again leverage the existence of the algorithm $\bar\calA_B $ to check if a dataset $T=\{(x_i,y_i)\}_{i=1}^{\beta B}$ of $\beta B$ inputs contains a fraction  of $B$ inputs not consistent with a previously outputted $\alpha$. Since now the concept class satisfies the condition in Def.~\ref{def: average case smoothness}, in order to check if a set $T_B$ of $B$ inputs from $T$ are labeled by $f^{\bar\alpha}$ is sufficient to check if $d(\bar\calA_B (T_B,\epsilon,r),\bar\alpha)\leq1/3$, where $d$ is the distance function in Def~\ref{def: average case smoothness}. By the second property in Def.~\ref{def: approximate-correct }, if all the inputs in $T_B$ are consistent with $f^{\bar\alpha}$, then there exists a $r$ such that $d(\bar\calA_B (T_B,\epsilon,r),\bar\alpha)\leq1/3$ for any $\epsilon\leq\epsilon_2$ in Def.~\ref{def: approximate-correct }.
The proof then follows the same structure as that of Theorem~\ref{thm: construction distinct}, with the only change being in the \texttt{check} function, which is now defined as follows.
 \begin{algorithm}[h!]\label{algo: check smooth}
\begin{algorithmic}[1]
\Function{check}{$\alpha,T$} 
    \State \textbf{for} $\; T_B \textbf{ in } T$
    \State $\quad$ \textbf{for} $r \textbf{ in } R$
    \State $\quad\quad$ \textbf{if} $d(\bar{\calA_B }(T_B,\epsilon_c,r),\bar\alpha)\leq1/3$ \textbf{return} ``invalid dataset''
        
\EndFunction
\end{algorithmic}
\end{algorithm}

\end{proof}

As it was the case for the $c$-distinct concepts, even for the average-case-smooth concept class we present a concrete example of quantum functions which satisfies the average-case-smooth condition.  

\begin{lemma}[Average-case-smooth quantum concept class]\label{lemma: avg-case-smooth concept class }
Let $x\in\{0,1\}^n$, $m=\mathrm{poly}(n)$, and $i(x)=\mathrm{int}(x[1:\lceil\log m\rceil ])$, then the concept class:
\begin{align}\label{eq: avg-case concept}
    \mathcal{F}&=\{f^\alpha: \{0,1\}^n\rightarrow \{0,1\}\;\;|\;\; \alpha\in\{0,1\}^n\}\\
    &f^\alpha: x\rightarrow \begin{cases}
\alpha_{i(x)} \quad \text{if } i(x)\leq m\\
0 \quad \text{else}
\end{cases}
\end{align}
is average-case-smooth and it can be efficiently implemented on a quantum computer.
\end{lemma}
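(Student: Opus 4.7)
The plan is to establish the two claims in the lemma separately: first that $\calF$ satisfies the average-case-smooth condition of Def.~\ref{def: average case smoothness} with a suitably chosen distance $d$ and constant $C > 0$, and second that every $f^\alpha \in \calF$ admits a polynomial-size classical (and hence quantum) evaluation circuit.

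For the smoothness part, I would take $d(\alpha_1, \alpha_2)$ to be the Hamming distance on the relevant $m$ coordinates of the parameter strings, and evaluate the expectation in Eq.~(\ref{eq:closeness condition}) directly. Writing $M = 2^{\lceil \log m \rceil}$, I would use the fact that $i(x)$ depends only on the first $\lceil \log m \rceil$ bits of $x$, so that for $x \sim \mathrm{Unif}(\{0,1\}^n)$ the index $i(x)$ is uniformly distributed on $\{0,1,\dots,M-1\}$ and independent of the remaining bits. Conditioning on $i(x)$, the contributions with $i(x) > m$ vanish (both concepts output $0$), while each index $i$ with $i \leq m$ contributes $\tfrac{1}{M}|\alpha_{1,i} - \alpha_{2,i}|$. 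Summing and using $M < 2m$ would give
\begin{equation*}
\mathbb{E}_{x \sim \mathrm{Unif}(\{0,1\}^n)} \big|f^{\alpha_1}(x) - f^{\alpha_2}(x)\big| \;=\; \frac{d(\alpha_1,\alpha_2)}{M} \;\geq\; \frac{1}{2m}\, d(\alpha_1,\alpha_2),
\end{equation*}
so Eq.~(\ref{eq:closeness condition}) would hold with $C = 1/(2m) > 0$. This is permitted since the definition only requires $C \geq 0$, and an inverse-polynomial $C$ is enough for the downstream use in Theorem~\ref{thm: construction avg}.

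For quantum implementability, I would observe that evaluating $f^\alpha(x)$ reduces to (i) reading the $\lceil \log m \rceil$ leading bits of $x$, (ii) interpreting them as an integer $i$, (iii) comparing $i$ with $m$, and (iv) returning either the $i$-th bit of $\alpha$ via a lookup into the polynomial-length description of $\alpha$, or $0$ otherwise. Each of these is a standard polynomial-time Boolean operation, so they compile into a reversible polynomial-size circuit that runs on a quantum computer in $\poly(n)$ time.

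The only subtlety I anticipate is purely notational: the lemma writes $\alpha \in \{0,1\}^n$ while Def.~\ref{def: average case smoothness} uses $\alpha \in \{0,1\}^m$. I would simply fix an embedding identifying each concept with the canonical first $m$ bits of $\alpha$, so that $d$ is well-defined on the effective parameter space and distinct concepts correspond to distinct Hamming classes. No genuine obstacle is expected: the key step is an elementary conditional expectation, and the evaluation circuit is entirely standard.
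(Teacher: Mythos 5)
Your proof is correct and follows essentially the same route as the paper's: the smoothness bound is the identical conditional-expectation computation over the leading $\lceil\log m\rceil$ bits, differing only in that you use the unnormalized Hamming distance (getting $C=1/(2m)$) where the paper normalizes $d$ by $m$ (getting $C=1/2$) — both satisfy Def.~\ref{def: average case smoothness}. The only substantive difference is in the implementability claim, where you compile a plain reversible lookup circuit, while the paper encodes $\alpha$ into a computational basis state and evaluates $f^\alpha(x)$ as $\bra{\psi_\alpha}O(x)\ket{\psi_\alpha}$ with a single-qubit $Z$ observable so as to match the $\Tr[\rho\,O]$ form used in its other constructions; both arguments are standard and valid.
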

\begin{proof}
    We first show that for any different $\alpha$ and $\alpha'$ the functions $f^\alpha$ and $f^{\alpha'}$ satisfies the definition of average-case-smoothness in Eq. (~\ref{eq:closeness condition}). 
    From the definition of $f^\alpha$ in Eq. (\ref{eq: avg-case concept}), and recalling that the Hamming distance between two bitstrings $\alpha$ and $\alpha'$ is defined as $d(\alpha,\alpha'):=\frac{1}{m}|\{i\in [m] : \alpha'_i\neq \alpha_i \}|$ we have that:
    \begin{align}
        \mathbb{E}_{x\sim\mathsf{Unif}(\{0,1\}^n)}|f^{\alpha}(x)-f^{\alpha'}(x)|&=\frac{1}{2^n}\sum_{x\in\{0,1\}^n}|f^{\alpha}(x)-f^{\alpha'}(x)|\\
        &=\frac{1}{2^n}\sum_{i\in\{0,1\}^l}\sum_{z\in\{0,1\}^{n-l}}|f^{\alpha}(i|z)-f^{\alpha'}(i|z)|\\
        &=\frac{1}{2^n}\sum_{i\in\{0,1\}^l}|f^{\alpha}(i|0)-f^{\alpha'}(i|0)|\cdot2^{n-l}\\
        &=\frac{1}{2^l}\sum_{i\in\{0,1\}^l}|f^{\alpha}(i|0)-f^{\alpha'}(i|0)|\\
        &=\frac{1}{2^l}\sum_{i=i}^{m}|\alpha_i-\alpha'_i|\\
        &=\frac{m}{2^l}d(\alpha,\alpha')
    \end{align}
    Where $i|z$ is the concatenation of the bitstring $i$ and $z$. 
    Since $l=\lceil\log m\rceil$, we have $2^{l-1}< m \leq 2^l$, therefore:
    \begin{equation}
       \mathbb{E}_{x\sim\mathsf{Unif}(\{0,1\}^n)}|f^{\alpha}(x)-f^{\alpha'}(x)|\geq\frac{1}{2}d(\alpha,\alpha') 
    \end{equation}
    So we proved that indeed the concept class $\calF$ satisfies the average-case-smoothness in Def.~\ref{def: average case smoothness}.

    We now present an easy implementation for the functions $f^\alpha$ on a quantum computer. For any concept $f^\alpha$, we encode $\alpha$ in an $m$-qubit computational basis state: 
    \begin{equation}
        \ket{\psi_\alpha}=\ket{\alpha_1}\otimes...\otimes\ket{\alpha_m}
    \end{equation}
    Then the concept $f^\alpha$ can be implemented as:
    \begin{equation}
        f^\alpha(x)=\bra{\psi_\alpha}O(x)\ket{\psi_\alpha}
    \end{equation}
    Where for $i(x)\leq m$:
    \begin{equation}
        O(x)=I^{\otimes (i(x)-1)}\otimes Z\otimes I^{m-i(x)}
    \end{equation}
    while if $i(x)> m$ we define $f^\alpha(x)=0$.
    
\end{proof}
We notice that the functions considered in the above construction are obviously classically easy. However, one could consider the tensor product between the register $\ket{\alpha}$ and a second register undergoing any $\BQP$ hard computation, in direct analogy with the construction used in the proof of Theorem~\ref{thm: learn obs}. That would construct a concept class of $\BQP$ function satisfying the average-case-smooth property, and therefore our Theorem~\ref{thm: main theorem} would apply.

\section{Some implications of our results}\label{app: implication of our results}
In this section, we discuss the connections between our theoretical findings with other well-known physically relevant tasks: Hamiltonian learning, the learning of parametrized observables and the learning of order parameters. We also present a physically motivated concept class to which our hardness results apply directly and where a quantum advantage is achieved. Moreover we provide a list of example of physical processes which give rise to $\BQP$ functions and practical settings where the identification task naturally arises.

For Hamiltonian learning, we clarify why an efficient classical algorithm for the identification task remains feasible. We then show how our results demonstrate the hardness of the identification task in learning parametrized observables. Finally, we discuss the connection to the task of learning of order parameters and the potential for quantum advantages therein.

\subsection{Hardness of identification for a physically motivated $\BQP$-complete concept class}\label{app: implication observables}
In the recent work~\cite{molteni2024exponential} a physically motivated learning problem for which a learning separation was proved. Importantly, the classical hardness for the task was achieved by considering concepts being $\BQP$-complete functions as in Def.~\ref{def: PromiseBQP function} and then arguing that a classical learner would not be able to evaluate such concepts unless $\mathsf{BQP}\subset \mathsf{P\slash poly}$. An interesting question is whether the classical hardness could arise already from identifying the target concept that labels the data, rather than only from evaluating it. Using the results in Section~\ref{sec: identification hardness properPAC}, we can answer this question in an affirmative way for a subclass of problems for which the hardness of evaluation holds. Let us restate the learning problem presented in~\cite{molteni2024exponential} and show how our results in the previous section apply to it. 
The learning problem considered is described by the following concept class:
\begin{equation}\label{eq:concept class}
   \mathcal{F}^{U,O}=\{f^{\bm{{\bm{\alpha}}}} (\bm{x})\in \mathbb{R} \;\; | \; \mathbf{{\bm{\alpha}}}\in[-1,1]^m\}
   \end{equation}
 \begin{align*}
     \text{with: } \;\; &f^{\bm{\alpha}}(\bm{x}): \;\;\;\bm{x}\in \{0,1\}^n \rightarrow f^{\bm{\alpha}}(\bm{x})=\text{Tr}[\rho_U(\bm{x}) O(\mathbf{{\bm{\alpha}}})] \\ &O({\bm{\alpha}})=\sum_{i=1}^m {\bm{\alpha}}_i P_i.
\end{align*}
Where $\rho_U(\mathbf{x})$ is a ``classically hard to compute'' quantum state (i.e., certain properties of this state are hard to compute) depending on $\bm{x}\in\{0,1\}^n$. An example could be $\rho_U(\bm{x})=U\ket{\bm{x}}\bra{\bm{x}}U^\dagger$ with $U=e^{iH\tau}$ where $H$ is a local Hamiltonian whose time evolution is universal for $\mathsf{BQP}$~\cite{feynman1985quantum}. As each $P_i$ represents a $k$-local Pauli string, the function $f^{\bm{\alpha}}$ is a $\BQP$-complete function for some values of $\bm{\alpha}$. For example, for the $\bm{\alpha}_Z$ which corresponds to $O(\bm{\alpha}_Z)=Z\otimes I \otimes ...\otimes I$ the function $f^{\bm{\alpha}_Z}$ is exactly the function which corresponds to the time evolution problem in the literature~\cite{kitaev2002classical}, which is known to be universal for $\BQP$. This already motivates the classical hardness of the learning task under the assumption $\mathsf{BQP}\not\subseteq\mathsf{P\slash poly}$ as the classical learner is required to evaluate a correct hypothesis on new input data. 
We now argue that concept classes of the kind $\calF^{U,O}$ are also not classically identifiable as in Def.~\ref{def: approximate-correct }. In order to show it, we will consider a particular construction for the quantum states $\rho_U(\bm{x})$ and observables $O(\bm{\alpha})$ such that the corresponding concept class $\calF^{U,O}$ satisfies the definition of $c$-distinct concept class in Def.~\ref{def: c-distant concept } with $c=1/3$.

It is important to note that in~\cite{molteni2024exponential}, a quantum learning algorithm was proposed that solves the task by first identifying a sufficiently good $\alpha$, and then evaluating the model using the identified $\alpha$. As such, the quantum algorithm also addresses the identification task. Furthermore, as demonstrated in~\cite{molteni2024exponential}, the algorithm remains effective even in the presence of noise in the training data.

\begin{restatable}{theorem}{learnobs}\label{thm: learn obs}
    Under the assumption $(\BQP,\mathsf{Unif})\not\subset\mathsf{HeurBPP^{NP^{NP^{NP}}}}$, there exists a family of unitaries $\{U(\bm x)\}_{\bm x}$ and observables $\{O(\bm\alpha)\}_{\bm\alpha}$ such that the corresponding concept class $\calF^{U,O}$ is not classically identifiable by an approximate-correct algorithm as in Def.~\ref{def: approximate-correct }, but is identifiable by a quantum learning algorithm.
\end{restatable}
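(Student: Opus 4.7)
The plan is to exhibit an explicit subfamily of $\calF^{U,O}$ whose concepts (after a fixed affine post-processing to Boolean values) are $\BQP$-complete and form a $c$-distinct class with $c \geq 1/3$; classical hardness will then follow from the $c$-distinct branch of Theorem~\ref{thm: main theorem}, and the quantum direction by invoking the shadow-tomography identification algorithm of~\citep{molteni2024exponential}.

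Concretely, I would fix a $\BQP$-complete Boolean function $g:\{0,1\}^n\to\{0,1\}$, e.g.\ the characteristic function of a universal local-Hamiltonian dynamics language, and let $U(\bm x)$ be the polynomial-size circuit that writes $|g(\bm x)\rangle|\bm x\rangle|\bm 0\rangle$ on $n{+}1{+}\poly(n)$ qubits, with $\rho_U(\bm x) = U(\bm x)|\bm 0\rangle\langle \bm 0|U(\bm x)^\dagger$. Concepts will be indexed by one-hot vectors $\bm\alpha = e_i \in \{0,1\}^m$ with $m = \poly(n)$, each tied to a distinct $\bm\beta_i\in\{0,1\}^n$ via a fixed injective map. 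Set
\begin{equation}
P_i = Z_0 \otimes \bigotimes_{j=1}^{n} Z_j^{(\bm\beta_i)_j}, \qquad O(\bm\alpha) = \sum_{i} \alpha_i P_i,
\end{equation}
which preserves the prescribed linear-in-$\bm\alpha$ form but collapses to a single Pauli-$Z$ string on each $e_i$. A direct computation on the basis state gives $\Tr[\rho_U(\bm x) O(e_i)] = (-1)^{g(\bm x)\oplus \bm\beta_i\cdot \bm x}$, and after the rescaling $y\mapsto (1-y)/2$ the Boolean concepts $f^{e_i}(\bm x) = g(\bm x)\oplus(\bm\beta_i\cdot \bm x)$ emerge; each remains $\BQP$-complete because $\bm\beta_i\cdot \bm x$ is classically trivial to compute.

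Next I would verify $\tfrac12$-distinctness: for $i\neq j$, the two concepts disagree precisely on $\{\bm x : (\bm\beta_i\oplus \bm\beta_j)\cdot \bm x = 1\}$, a nonzero $\mathbb{F}_2$-linear form, which is balanced and hence has density exactly $1/2 \geq 1/3$. Under the assumption $(\BQP,\mathsf{Unif})\not\subseteq \mathsf{HeurBPP}^{\NP^{\NP^{\NP}}}$, Theorem~\ref{thm: main theorem} on the $c$-distinct branch then rules out any classical approximate-correct identification algorithm for this concept class. For the quantum side I would invoke the identification procedure of~\citep{molteni2024exponential}: polynomially many classical-shadow samples of $\rho_U(\bm x_\ell)$ yield unbiased estimators of $\Tr[\rho_U(\bm x_\ell) P_i]$, and on the one-hot subfamily the identification task reduces to detecting which single $P_i$ best explains the observed labels, which is solvable in polynomial quantum time.

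The main obstacle is reconciling three structural requirements simultaneously: conforming to the prescribed linear form $O(\bm\alpha) = \sum_i \alpha_i P_i$ (which forces restricting $\bm\alpha$ to a discrete, e.g.\ one-hot, subset so the observable reduces to a single Pauli string), producing Boolean outputs so that Definitions~\ref{def: c-distant concept } and~\ref{def: approximate-correct } apply verbatim (handled by a fixed affine rescaling of the $\pm 1$-valued expectations), and preserving both $\BQP$-completeness of the resulting concepts and at least $1/3$-pairwise distinctness. Once these are aligned, the hardness step is a black-box application of Theorem~\ref{thm: main theorem} and quantum learnability is exactly the cited shadow-based identification algorithm.
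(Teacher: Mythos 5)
Your proposal is correct and establishes the theorem, but it takes a genuinely different route from the paper. The paper builds $1/3$-distinctness via a Reed--Solomon code: it partitions $\{0,1\}^n$ into $3n$ blocks with indicator functions $f_j$, maps each parameter $k$ to a codeword $g(k)$ of an $RS(3|k|,|k|)$ code, and defines $h_k(x)=\sum_j [g(k)]_j f_j(x)$, so that the minimum code distance forces any two concepts to disagree on at least $2^n/3$ inputs; the observable is then a genuine linear combination $Z_1\otimes\sum_j \alpha_j Z_{nj+1}$ over the (non-degenerate) set of codeword vectors $\bm\alpha=g(k)$. You instead restrict $\bm\alpha$ to one-hot vectors so that $O(e_i)$ collapses to a single Pauli string, and realize the concepts as $g(x)\oplus(\bm\beta_i\cdot x)$ --- which is exactly the paper's own $0.5$-distinct class from Definition~\ref{def: bqp concept class with distant concepts}, now embedded into the $\calF^{U,O}$ form. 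Your route is shorter, gives the stronger constant $c=1/2$, and the distinctness and $\BQP$-completeness checks are immediate (a nonzero $\mathbb{F}_2$-linear form is balanced). What the paper's construction buys in exchange is that the surviving parameter set $\{g(k)\}_k$ indexes observables that are honest many-term linear combinations of commuting Paulis, staying closer to the physical motivation of learning a parametrized observable $O(\bm\alpha)=\sum_i\alpha_i P_i$; your one-hot family makes each admissible observable a single Pauli string, which is formally sufficient for the theorem as stated but degenerates that aspect of the task. Both proofs share the same residual informalities (restricting to a discrete subfamily of $\bm\alpha$, binarizing $\pm1$-valued expectations by an affine map, and treating the bounded-error $\BQP$ output qubit as exactly $\ket{g(\bm x)}$), and both conclude identically by the $c$-distinct branch of Theorem~\ref{thm: main theorem} for hardness and the algorithm of~\citep{molteni2024exponential} for quantum identifiability.
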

 The key idea of the proof is to construct a particular family of quantum states $\rho_U$ and observables $O$ such that the corresponding concept class in Eq.~(\ref{eq:concept class}) contains $c$-distinct concepts with $c\geq1/3$. We achieve this by leveraging the Reed-Solomon code to construct functions that differ from each other on at least $1/3$ of all the input $x$. 
As this is still an instance of the same category of concept classes as studied in~\cite{molteni2024exponential}, the quantum learning algorithm from this work will solve the identification task, and also the corresponding evaluation task.

We give here the complete proof of Theorem~\ref{thm: learn obs}.
\begin{proof}
    Let us first introduce the set of $3n$ classical functions $\{f_j\}_{j=1}^{3n}$ defined as follows:
\begin{equation}\label{eq: fj}
f_j(x)=
    \begin{cases}
      f_j(x)=0 \text{ if $x\notin \calL_j$ }\\
      f_j(x)=1 \text{ if $x\in \calL_j$ }
    \end{cases}\,.
\end{equation}
Where $ \mathcal{L}_j = \{\bm{x_j}, \bm{x_{j+1}}, \dots, \bm{x_{j+ 2^n/{3n}}} \} $ denotes the $ j $-th block, a contiguous subset of $ \frac{2^n} {3n} $ elements selected from the set of all possible $ x \in \{0,1\}^n $, according to a predefined order.
We then consider the Reed-Solomon code~\cite{wicker1999reed} $RS(3|k|,|k|)$ defined over a field with $q=2^m$ elements which maps a message $k$ of length $|k|$ into a codeword of length $3|k|$. Let us consider $m$ and $|k|$ such that $n=|k|\cdot m$. In this way the minimum distance between two different codewords is lower bounded by $d\geq3|k| -k +1\geq2|k|+1\geq |k|$. This corresponds to a Hamming distance greater than $d_H\geq |k|\cdot m$ between the bitstring representation of codewords corresponding to two different messages $k$ and $k'$.  
Next, we define the function $g:\{0,1\}^n\rightarrow\{0,1\}^{3n}$ as the function which maps the bitstring representation of the message $k$ into the bitstring representation of its codeword of the previously introduced $RS(3|k|,|k|)$ code. In this way, $d_H(g(k), g(k'))\geq n$ for every $k\neq k'$, where $d_H(g(k), g(k'))$ stands for the Hamming distance between the $3n$-bits given by the function $g$. Consider now the family of $2^n$ functions $\{h_k\}_{k=1}^{2^n}$ defined as:
\begin{equation}
    h_k(x)=\sum_{j=1}^{3n} [g(k)]_j\cdot f_j(x)
\end{equation}
It is clear that for different $k$ each function $ h_k $ will differ from the others on at least $ \frac{2^n}{3} $ distinct values of $ x \in \{0,1\}^n $. 
We now want to implement the functions $h_k$ in the form of the concepts in $\calF^{H,O}$. We can do so by considering a register of $3n \times n $ qubits. On each of the $3n$ registers the function $f_j$ in Eq.~(\ref{eq: fj}) is implemented by measuring the $Z$ observable on the first qubit after having prepared a corresponding state $\rho_j(\bm{x})$. In particular the unitary $U$ is the tensor product of each $U_j$ acting on the $j$-th register of $n$ qubits, with $j={1,2,...,3n}$. Each $U_j$ is such that:
\begin{equation}\label{eq: Uj}
    f_j(x)=\bra{x}U_j^\dagger Z_1U_j\ket{x}
\end{equation}
where $Z_1$ represent the $Z$ Pauli operator acting on the first qubit of the $j$-th register. We note that since classical computations are in $\BQP$ such a unitary always exists.
Let us consider then the following circuit composed of $3n+1$ register of $n$ qubit each. On the first register the input dependent state $\ket{\bm{x}}$ undergoes an arbitrary $\BQP$ computation and the observable $Z$ is measured on the first qubit. The other registers implement the functions $f_j$ as described above. As they are classical functions, they can also be implemented quantumly. Next we define the required set of observables $\{O(\bm{\alpha})\}_{\bm{\alpha}}$ defined as: 
\begin{equation}\label{eq: observable}
    O(\bm\alpha)=Z_1\otimes\sum_{j=1}^{3n} \alpha_j Z_{nj+1},
\end{equation}
where $Z_i$ represents the Pauli string consisting of identity operators on all qubits except for the $Z$ matrix acting on the $i$-th qubit.
Each concept in $f^{\bm\alpha} \in \mathcal{F} $ is now expressed as $ f^{\bm\alpha} = \text{Tr}[\rho(\bm{x})O(\bm{\alpha})]$, where $ \rho(\bm{x})$ represents the quantum state formed by the tensor product of the previously described $ 3n+1$ registers, and $ O(\bm{\alpha}) $ is defined as in Eq.~(\ref{eq: observable}). Specifically, $\rho(\bm x)$ is defined as follows:
\begin{equation}\label{eq: rhoj}
\rho(\bm x)= \bigotimes_{j=1}^{3n+1}\rho_j(x), \quad\rho_j=
    \begin{cases}
       \text{ if $j=1$: $\rho_1(\bm x)=\ket{\psi}\bra{\psi}_1$ with $\ket{\psi}=U_{BQP}\ket{\bm x}$ }\\
      \text{ if $j\neq 1$: $\rho_j(\bm x)=\ket{\psi}\bra{\psi}_j$ with $\ket{\psi}=U_j\ket{\bm x}$  }
    \end{cases}\,.
\end{equation}
where $U_{BQP}$ represents any arbitrary $\BQP$ computation while $\{U_j\}_j$ are the unitaries associated to the functions $f_j$ as defined in Eq.~(\ref{eq: Uj}).

If we now consider the concept class 
\begin{equation}
    \calF=\{f^{\bm\alpha} \;|\; f^{\bm\alpha} = \text{Tr}[\rho(\bm{x})O(\bm{\alpha})],\; \bm\alpha=g(k) \;\forall k\in\{0,1\}^n \},
\end{equation}
where the functions $f^{\bm\alpha}$ and $g$ are defined as above, we obtain a concept class $\calF$ such that for any pair of different $\bm\alpha_1,\bm\alpha_2$ the corresponding concepts differ on at least $2^n/3$ different inputs $x\in\{0,1\}^n$. Therefore the constructed concept class is a $c$-distinct concept class as in Def.~\ref{def: c-distant concept }, with $c=1/3$. We can then apply our Theorem~\ref{thm: construction distinct} and construct an approximate-verifiable algorithm from an approximate-correct identification algorithm as in Def.~\ref{def: approximate-correct } which correctly solves the identification task for $\calF$ by using an $\NP$ oracle. Moreover, as there is at least a $\mathsf{PromiseBQP}$ complete concept for $\bm\alpha=\bm 0$\footnote{We do note that, even for most values of $\bm\alpha \neq \bm 0$, we still expect the corresponding concepts to be $\mathsf{PromiseBQP}$-complete.}, for Theorem~\ref{thm: hardness approx-verif} if there exists an approximate-verifiable algorithm which runs in the second level of the polynomial hierarchy then $\BQP$ would be contained in the fourth level of the $\PH$. We can therefore conclude that the constructed concept class $\calF$ is not learnable even in the identification sense with an approximate algorithm which satisfies Def.~\ref{def: approximate-correct }. 
\end{proof}
We note here that, although the task defined by the concept class in Eq.(~\ref{eq:concept class}) is general and include many physical scenarios as motivated in~\cite{molteni2024exponential}, the specific choice of unitaries $\{U(\bm x)\}_{\bm x}$ and observables $\{O(\bm\alpha)\}_{\bm\alpha}$ used in the proof of Theorem~\ref{thm: learn obs} is purely technical, intended to ensure the concept class is $1/3$-distinct.

\subsection{Hamiltonian Learning }
Hamiltonian learning is a well-known problem which, arguably surprisingly, admits an efficient classical solution. Moreover, the Hamiltonian learning problem can be easily framed as a standard learning problem within the PAC framework, revealing clear connections to the identification task discussed in this paper. Although the underlying concepts may initially seem inherently quantum, the existence of an efficient classical algorithm for solving the problem raises the question of why our hardness results do not extend to this setting.
In its main variant, the task involves reconstructing an unknown Hamiltonian from measurements of its Gibbs state at a temperature $T$. Specifically, given an unknown local Hamiltonian of the form $H(\bm{\lambda}) = \sum_i \lambda_i P_i$, the goal of the Hamiltonian learning procedure is to recover $\bm{\lambda}$ from measurements of its Gibbs state $\rho(\beta, \bm{\lambda})$ at temperature $T$:
\[
    \rho(\beta, \bm{\lambda}) = \frac{e^{\beta H(\bm{\lambda})}}{\text{Tr}[e^{\beta H(\bm{\lambda})}]}
\]
where $\beta = \frac{1}{T}$. In~\cite{anshu2020sample}, the authors proposed a classical algorithm capable of recovering the unknown Hamiltonian parameterized by $\bm{\lambda}$ using a polynomial number of measurements of the local Pauli terms $\{P_i\}_i$ that constitute the target Hamiltonian $H(\bm{\lambda}) = \sum_i \lambda_i P_i$. In~\cite{haah2024learning}, an also time-efficient algorithm was proposed.
The above task can be reformulated as an identification problem by considering the following concept class:
\begin{equation}\label{eq: concept HL}
\mathcal{F}_{\beta}=\{f^{\bm{\lambda}}(\bm{x})\in\mathbb{R} \;\;|\;\; \bm{\lambda}\in [-1,1]^m\}
\end{equation}
\begin{align*}
\text{with:  }& \;\; f^{\bm{\lambda}}(\bm{x}): \;x\in\mathcal{X}\subseteq\{0,1\}^n\rightarrow \text{Tr}[\rho(\beta,\bm{\lambda})O(\bm{x})] \\
&O(\bm{x})=\sum_{i=1}^{\text{poly}(n)} x_iP_i . 
\end{align*}
Then given polynomially many training samples of the kind $T=\{(\bm{x}_\ell,f^{\bm{\lambda}}(\bm{x}_\ell))\}_\ell$ it is possible to recover the expectation values of the local Pauli terms $\{P_i\}_{i}$ and therefore the identification problem can be solved using the learning algorithm in~\cite{haah2024learning} for learning the vector $\bm{\lambda}$. 

We now provide a list of incompatibilities between the task of Hamiltonian learning and the identification task considered in our results, and examine them to try to find the crux of the reason why Hamiltonian learning remains classically feasible.
\begin{itemize}
    \item \textbf{Hamiltonian learning is a regression problem.} It is important to highlight that the concepts in Eq.~(\ref{eq: concept HL}) are not binary functions, as they are in our theorems, but rather form a regression problem. We note that this categorical difference alone is also not the end of the explanation. It is in fact possible to ``binarize'' the concepts in the following way:
\begin{equation}\label{eq: binirazing}
    f^{\bm{\lambda}}(\bm{x},i): \;(\bm x,i)\in\subseteq\{0,1\}^n\times\{0,1\}^{\log n}\rightarrow \text{bin}\Big[\text{Tr}[\rho(\beta,\bm{\lambda})O(\bm{x})],i\Big]
\end{equation}
where $\text{bin}\Big[\bm y,i\Big]$ returns the $i$-th bit of $\bm y$. A Hamiltonian learning algorithm can still be used to determine the unknown $\bm \lambda$. This can be done, for example, by considering an input distribution that focuses on exploring the most significant bits of the extreme inputs where the observable $O(\bm x)$ reduces to a single Pauli string, specifically inputs $\bm x \in \{0,1\}^n$ with Hamming weight 1.

\item \textbf{The concepts are not $\BQP$-hard.} The concepts in $\calF_\beta$, binarized as in Eq.~(\ref{eq: binirazing}), involve measurements performed on Gibbs states. For large values of $\beta$, these states closely approximate ground states, which might suggest a connection to tasks where quantum computers are expected to outperform classical ones. However, regardless of how hard it is to estimate expectation values from cold Gibbs states, it is important to clarify that each concept is associated with a \textit{fixed} Gibbs state, and the input $\bm{x} \in \{0,1\}^n$ simply determines the observable being measured. This is similar to the ``flipped concepts'' case studied in~\cite{molteni2024exponential}. It is easy to see that a $\mathsf{P/poly}$ machine can compute these concepts given the expectation values of the (polynomially many) Pauli strings $\{P_i\}_i$ as advice, due to the linearity of the trace. Thus, the concepts we have here are somewhere in the intersection of $\mathsf{P/poly}$ and $\BQP$-like problems (depending on what temperature Gibbs states we are dealing with, let us call the corresponding class $A$, where $A$ could be $\mathsf{QXC}$~\cite{Bravyi2024QXC} ). 
However, even this is not sufficient to full explain the apparent gap between our no-go results and the efficiency of Hamiltonian learning. In our proofs we worked towards the (unlikely) implication that $\BQP$ was not in a heuristic version of a level of the $\PH$. Here, we can construct fully analogous arguments and obtain the implication that, e.g., $\mathsf{P/poly} \cap A$ is in $\PH$, and we also have no reason to believe this to be true. Indeed, proving that this inclusion holds would, to our understanding, constitute a major result in quantum complexity theory.
The reasons why the no-go's do not apply are more subtle still.

\item \textbf{Hamiltonian learning algorithm doesn't satisfy the assumptions of the right type of identification algorithm.} 
It is clear that the Hamiltonian learning algorithm does not satisfy the conditions of an approximate-verifiable algorithm in Def.~\ref{def: aprrox verifiable identification}, as it fails to detect datasets that do not contain enough inputs labeled by a single concept - it always outputs \textit{some guess} for the Hamiltonian. 
However we could again try to circumvent this issue by employing the constructions from Theorem~\ref{thm: construction distinct} or Theorem~\ref{thm: construction avg}, to construct 
approximate-verifiable identification algorithm somewhere in the $\PH$, which would again suffice for a likely contradiction. 
However, it is important to note that in normal Hamiltonian learning settings neither of the two Theorems apply. 
The theorems require that concept class must either consist of sufficiently distinct concepts (Def.\ref{def: c-distant concept }) or exhibit average-case smoothness (Def.\ref{def: average case smoothness}). 
On the face of it, neither of these conditions seem to hold for the concept class in Eq.~(\ref{eq: concept HL}), and it is not clear how one would go about attempting to enforce them. 
This final point is in our view at the crux of the difference between the settings where our no-go's apply and Hamiltonian learning. 
\end{itemize}
While we leave the investigation of the hardness of the Hamiltonian learning task as a potential direction for future work, already at this point an interesting observation emerges. It is a natural question if the conditions of the two Theorems \ref{thm: construction distinct} and \ref{thm: construction avg} could be relaxed and generalized, which would lead to the hardness of identification for broader classes of problems. Due to the analysis of the Hamiltonian learning case we see that \textit{if} one could generalize the settings to the point they apply to Hamiltonian learning, since Hamiltonian learning \textit{is} classically tractable, this would imply very surprising results in complexity theory (see second bullet point above). We see it more likely that this is a reason to believe the range of generalization of the settings where identification is intractable is more limited and will not include the standard settings of Hamiltonian learning.

\subsection{The case of learning of order parameters}\label{sec: order parameter}
Another physically meaningful problem that can be framed as an identification task is learning the order parameter that distinguishes between different phases of matter. Consider, for example, a family of Hamiltonians $\{H(\bm{x})\}_{\bm{x}}$ parametrized by vectors $\bm{x} \in [-1,1]^m$, where the corresponding ground states exhibit distinct phases depending on the value of $\bm{x}$. In many cases, such as symmetry-breaking phases, there exists a local order parameter of the form $O = \sum_i \alpha_i P_i$, whose expectation value on a given ground state reveals the phase to which it belongs to. The task is to learn the order parameter from a collection of samples $\{(\rho(\bm{x}_\ell), y_\ell)\}_\ell$, where each $\rho(\bm{x}_\ell)$ represents the ground state of the Hamiltonian $H(\bm{x}_\ell)$, and $y_\ell$ denotes the label identifying its associated phase. We can formalize the problem by considering the following concept class:
\begin{equation}\label{eq: concept order parameter}
\mathcal{F}_{\alpha}=\{f^{\bm{\alpha}}(\bm{x})\in\mathbb{R} \;\;|\;\; \bm{\alpha}\in [-1,1]^m\}
\end{equation}
\begin{align*}
\text{with:  }& \;\; f^{\bm{\alpha}}(\bm{x}): \;\bm x\in\mathcal{X}\subseteq\{0,1\}^n\rightarrow \text{Tr}[\rho(\bm{x})O(\bm{\alpha})] \\
&O(\bm{\alpha})=\sum_{i} \bm\alpha_iP_i . 
\end{align*}
Learning the correct order parameter from ground states labeled by their phases can be framed as a machine learning task, where the goal is to identify the underlying labeling function. Specifically, given training data of the form $T=\{(\bm{x}_\ell, \text{Tr}[\rho(\bm{x}_\ell)O(\bm{\alpha}^*)])\}_\ell$, the objective is to recover the correct parameter $\bm{\alpha}^*$ that enables accurate labeling of the ground state $\rho(\bm{x}_\ell)$ corresponding to the Hamiltonian $H(\bm{x}_\ell)$. In other words, the task reduces to identifying the correct order parameter $O(\bm{\alpha^*})$. In this setting, the value $\text{Tr}[\rho(\bm{x}_\ell)O(\bm{\alpha}^*)]$ acts as a phase indicator. For instance, in systems exhibiting two distinct phases, the sign of $\text{Tr}[\rho(\bm{x}_\ell)O(\bm{\alpha}^*)]$ serves to distinguish between them.
In case the ground states of the Hamiltonian family $\{H(\bm{x})\}_{\bm{x}}$ are computable in $\BQP$, then the quantum algorithm from~\cite{molteni2024exponential}, which solves the identification task for the concept class in Eq.(\ref{eq:concept class}), can also be employed to solve the learning problem defined in Eq.(\ref{eq: concept order parameter}), thus enabling the recovery of the correct order parameter.
In the general case, however, the concepts in Eq.~(\ref{eq: concept order parameter}) compute expectation values on ground states, a task that for a general local Hamiltonian is in $\mathsf{QMA}$-hard, but the situation is actually more involved.

First, we note that the closely related task learning of phases of matter \textit{given the shadows of the ground states} - so where the data consists of pairs $(\sigma(\rho(\bm{x})), \text{phase}(\bm{x}))$, where $\sigma(\cdot)$ denotes the classical shadow of a state, and ``phase'' assigns a binary label specifying the phase - is classically easy, even for many topological phases of matter ~\cite{huang2022provably}. In this case, the task is to assign the correct phase to a new datum which is a shadow of a new state given as $\sigma(\rho(\bm{x}))$.
This is different from the setting we consider here, where we explicitly deal with the specification $\bm{x}$ of the Hamiltonian; that is, pairs $(\bm{x}, \text{phase}(\bm{x}))$ \footnote{Also, the setting in \cite{huang2022provably} do not explicitly find the observable, which is the order parameter, although we suspect this can be computed from the hyperplane found by the linear classifier.}.
While \cite{huang2022provably} also shows how mappings $\bm{x} \rightarrow \sigma(\rho(\bm{x}))$ can be classically learned as long as the Hamiltonians specified by all $\bm{x}$ are within the same phase, the identifying of phases requires the crossing of the phase boundaries, and it is not clear how the approaches could be combined. 

To analyze the perspectives of classical intractability and then quantum tractability of this task from the lens of the work of this paper, we analyze the key criteria. 
First, for our hardness results to apply at all, the concepts in the concept class should be sufficiently hard, in a complexity-theoretic sense. That is, the functions that we consider $\bm{x} \rightarrow  \text{Tr}[\rho(\bm{x})O(\bm{\alpha})]$ should be classically intractable, yet quantum easy.
In general, this is easy to achieve: by using the standard Kitaev circuit-to-Hamiltonian constructions we can construct Hamiltonians whose ground states encode the output of arbitrary quantum circuits (here encoded in $\bm{x}$), as was used in~\cite{molteni2024exponential}.
However, from the perspective of phases of matter identification, it is important to notice that such $\BQP-$hard Hamiltonians are critical: they have an algebraically vanishing gap, and it is an open question whether in this sense $\BQP-$hard Hamiltonians could be gapped at all~\cite{Gonzales2018critical}.
At least in the cases of conventional phases of matter (symmetry breaking, even topological), the phases are typically characterized by a constant gap.
This suggests that the ``hard computations" could only be happening at (or increasingly close to) the phase boundary. 

However, we are also reminded that the observable we need to ultimately measure has a meaning: it is the order parameter. This has an interesting implication. We are interested in the setting where the function $\bm{x} \rightarrow  \text{Tr}[\rho(\bm{x})O(\bm{\alpha})]$ corresponds to, intuitively, the characteristic function of some $\BQP$(-hard) language $\mathcal{L}$. But since this function is also the order parameter, the $yes$ instances of the language ($\bm{x} \in \mathcal{L}_{yes}$) must correspond to Hamiltonians in one phase, whereas the $no$ instances must correspond to the other. This observation allows for a simple cheat, as it highlights the importance of the mapping $\bm{x} \rightarrow H(\bm{x})$, which we have the freedom to specify. One can conceal all the hardness in this map and simply choose it such that $H(\bm{x})$ is some Hamiltonian in one phase for $\bm{x} \in \mathcal{L}_{yes}$, and in another for the rest. This is of course unsatisfactory as it involves a highly contrived parametrization.

For natural, smooth parametrizations of the Hamiltonian space, we do need to worry about the constant gap property and so it is not clear whether hard functions emerge in standard settings.
If, however, we move to more exotic systems, e.g. general critical systems, systems with complex non-local order parameters\footnote{Note that we could encode universal quantum computations in complex enough global measurements.} and dynamic phases of matter, it becomes more likely that the right theoretical conditions arise even with natural parametrizations. 

The hardness of the concepts will ensure the hardness of evaluation-type tasks which is the first step. To achieve the hardness of identification, we need more, i.e., either c-distinct concepts or a smooth family. Learning of order parameters is a closely related task to the learning of observables, and as we discussed in Section \ref{app: implication observables}, for this more general case it is possible to construct cases that satisfy all the desired assumptions. Whether similar conditions will also be met for some natural settings involving exotic (or standard) phases of matter remains a target of ongoing research.

\subsection{Practical relevance of our results}\label{app: practical implication}
Although our definition of $\BQP$-complete functions in Def.~\ref{def: PromiseBQP function} may seem abstract and distant from what it can be found in practical experiment, we remark that many relevant physical processes have been demonstrated to be $\BQP$-complete. For example, estimating expectation values on time-evolved states is shown to be $\BQP$-complete for many physical Hamiltonians : 
    \begin{itemize}
        \item[\footnotesize$\bullet$] Various variants of the Bose-Hubbard model~\cite{childs2013universal}.
        \item[\footnotesize$\bullet$]  Stoquastic Hamiltonians~\cite{janzing2006bqp}.
        \item[\footnotesize$\bullet$] Ferromagnetic Heisenberg model~\cite{childs2013universal}.
        \item[\footnotesize$\bullet$] The $XY$ model~\cite{piddock2015complexity}.
        \item[\footnotesize$\bullet$] Estimating the scattering probabilities in massive field theories~\cite{jordan2018bqp}, (more precisely, estimating the vacuum-to-vacuum transition amplitude, in the presence of spacetime-dependent classical sources, for a massive scalar field theory in $(1+1)$ dimensions).
        \item[$\bullet$] Simulation of topological quantum field theories~\cite{freedman2002simulation}.
    \end{itemize}
On top of those, problems from other fields as quantum chemistry and topological data analysis have been proven to be $\BQP$ complete, for example:
\begin{itemize}  \item[\footnotesize$\bullet$] Electronic structure problem~\cite{o2021electronic} for quantum chemistry.

\item[\footnotesize$\bullet$] Computing the persistence of Betti numbers~\cite{gyurik2024quantum}
\end{itemize}
Any learning problem related to these processes involves $\BQP$-complete functions, to which our results apply. 

In addition to the task of learning an order parameter, we highlight several other scenarios that can be naturally modeled within our framework of the identification task. 
\begin{itemize}
    \item[\footnotesize$\bullet$] First, beyond the case of exactly identifying an unknown order parameter, our framework also applies when the order parameter is known but the corresponding observable is not directly implementable. In such situations, our approach enables the identification of a suitable "proxy" observable that captures the relevant information.

    \item[\footnotesize$\bullet$] When considering unitarily parameterized observables, time evolution under a parameterized Hamiltonian can be viewed as part of the measurement process. This allows our framework to encompass cases in which parts of the quantum dynamics are unknown, even though the final measurement setup is fixed, thus covering a broader class of partially unknown observables.

    \item[\footnotesize$\bullet$] Identifying an unknown measurement is also crucial in experimental settings where incomplete device characterization means the actual measurements performed may deviate from the intended ones. This is especially relevant in quantum computing, where noise and imperfections in measurement devices are common, making it important to understand and mitigate their impact.

    \item[\footnotesize$\bullet$] Finally, we note that the problem of learning an unknown measurement has been extensively studied in the literature under the name of quantum measurement tomography~\cite{luis1999complete,lundeen2009tomography}, further underscoring the practical relevance of this task.

\end{itemize}

\subsection{Benchmarking against classical methods}
In Appendix~\ref{app: implication observables}, we describe an example of identification task that is classically hard yet solvable by quantum computers. It is natural to ask how the theoretical guarantees manifest when actually solving the task on quantum and classical devices.
However, it is important to emphasize that implementing this for $\BQP$-complete functions and, more importantly, benchmarking it against the best classical learning algorithms in a provably rigorous way presents distinct challenges. Specifically, the identification task requires the learner to take an entire dataset as input and output only the description of the labeling function. It is unclear how classical methods could perform such training or which loss function would be appropriate. One possible approach is to design the learning algorithm so that a tentative labeling function is tested by evaluating it on different inputs and comparing the results with the training data. However, this would require the target functions to be classically computable, which is ruled out by our focus on $\BQP$-complete functions. Ultimately, the desired algorithm should be analogous to Hamiltonian learning, where, given a complete dataset of expectation values, the output is a description of the unknown Hamiltonian. Our theoretical results rule out the possibility of the existence of this kind of algorithm for learning problems involving $\BQP$-complete functions. Thus, the most promising approach for designing a classical algorithm is to dequantize the procedure in Appendix~\ref{app: implication observables} for the identification of the observable. This, however, would require a highly efficient simulator for quantum computations, with runtime scaling as $\text{poly}(n)+2^{0.23t}t^3w^3$
(where $t$ is the number of T-gates and $w$ the number of measured qubits). In particular, this implies a scaling as $\text{poly}(n)+2^{0.3t}$. For simulation problems with around $300$ qubits, and assuming a linear number of T-gates in the qubit count (which is reasonable for the $\BQP$-hard computations we are interested in our results), this already yields around $2^{10}\sim10^{31}$ classical steps, clearly infeasible, while a fault-tolerant quantum computer could handle the task without difficulty.


\section{Discussion on the assumptions of the approximate-correct algorithm}\label{app:assumptions}
For convenience, we report here the definition of approximate-correct algorithm stated in the main text in Def.~\ref{def: approximate-correct }. 
\begin{definition}[Identification task - non verifiable case ]
Let $\mathcal{F}=\{f^\alpha: \{0,1\}^n\rightarrow\{0,1\} \;\;|\;\;\alpha\in\{0,1\}^m\}$ be a concept class. An \textit{approximate-correct} identification algorithm is a (randomized) algorithm $\mathcal{A}_B $ such that when given as input a set $T=\{(x_\ell,y_\ell)\}_{\ell=1}^B $ of at least $B$ pairs $(x,y)\in  \{0,1\}^n\times \{0,1\} $, an error parameter $\epsilon>0$ and a random string $r\in R$ satisfies the definition of a proper PAC learner (see Def.~\ref{def: proper PAC}) along with the following additional properties:
    \begin{enumerate}
    \item If for any $\alpha$ all the $(x_\ell,y_\ell)\in T$ are such that $y_\ell\neq f^\alpha(x_\ell)$ then there exists an $\epsilon_1$ such that for all $\epsilon\leq\epsilon_1$ and all $r\in R$:
         \begin{equation}\label{eq: property 1}
            \calA_B (T,\epsilon_1,r)\neq\alpha.
        \end{equation}
        Therefore, for no dataset the algorithm can output a totally wrong $\alpha$, i.e. an $\alpha$ inconsistent with all the inputs in the dataset. 

    \item If $T=\{(x_\ell,y_\ell)\}_{\ell=1}^B$ is composed of different inputs $x_{\ell}$ and if there exists an $\alpha$ such that the corresponding labels follow $y_\ell=f^\alpha(x_\ell)$ for all $(x_\ell,y_\ell)\in T$, then there exists a threshold $\epsilon_2$ such that for any $\epsilon\leq\epsilon_2$ there exists at least one $r\in R$ for which:
        \begin{equation}\label{eq: property 2}
            \calA_B (T,\epsilon_2,r)=\alpha_2
        \end{equation}
    With the condition: $\mathbb{E}_{x\sim\text{Unif}(\{0,1\}^n)}|f^\alpha(x)-f^{\alpha_2}(x)|\leq \frac{1}{3}$.\\
    Therefore, if the dataset is fully consistent with one of the concept $\alpha$, then there is at least one random string for which the identification algorithm will output a $\tilde{\alpha}$ closer than $\frac{1}{3}$ in PAC condition to the true labelling $\alpha$. 

 \end{enumerate}
  We say that $\calA_B $ solves the identification task for a concept class $\calF$ under the input distribution $\calD$ if the algorithm works for any value of $\epsilon,\delta\geq0$.
 The required minimum size $B$ of the input set $T$ is assumed to scale as poly($n$,$1/\epsilon$,$1/\delta$), while the running time of the algorithm scales as poly($B$, $n$). Moreover, the $\epsilon_1$ and $\epsilon_2$ required values scale at most inverse polynomially with $n$.   
\end{definition}

We observe that an approximate-correct algorithm is, by definition, a proper PAC learner as in Def.~\ref{def: proper PAC}, and thus guarantees a hypothesis with accuracy within $\epsilon = \frac{1}{\text{poly}(n)}$. In addition to this, we also require it satisfies the two extra conditions specified above. Although the two additional conditions are introduced for technical reasons required by our proof strategy, they are also designed to reflect what one would reasonably expect from any practical learning algorithm. The first property ensures that if the input dataset consists entirely of inputs labeled differently from a given concept, then the algorithm will not output that concept—meaning it is never ``totally incorrect''. The second property concerns the case where all points in the input dataset are labeled consistently with a particular concept. In this case, the algorithm is required to output a concept that is ``close'' in the PAC sense for at least one random string, regardless of the distribution from which the training set was drawn. We therefore view the two conditions as natural and well-motivated requirements for a learning algorithm.
Moreover, while the second property closely resembles the condition for proper PAC learning in Definition~\ref{def: proper PAC}—with the key difference being that the training set need not be drawn from a specific distribution—we present a simple example of a concept class where a proper PAC learner would also satisfy the first property with high probability. Consider a concept class $\mathcal{F} = \{f_0, f_1\}$ consisting of two functions such that $f_0(x) \neq f_1(x)$ for every $x \in \{0,1\}^n$. In this case, any dataset that does not contain inputs labeled by $f_0$ is fully consistent with $f_1$, and vice versa. As a result, a proper PAC learner would, with high probability, output the concept consistent with the dataset, rather than the one that is entirely incorrect. This simple scenario changes in the presence of a larger concept class. When multiple concepts are involved, a dataset that is totally inconsistent with one concept may still not be fully consistent with any other. In such cases, a proper PAC learner, without the additional assumption from Def.~\ref{def: approximate-correct }, has no guarantees on the concept outputted.


\end{document}